\documentclass[accepted]{uai2021} 

\usepackage[american]{babel}

\usepackage{natbib} 
    \bibliographystyle{plainnat}
    
\usepackage{mathtools} 
\usepackage{booktabs} 
\usepackage{tikz} 

\usepackage{style}



\title{Learning in Multi-Player Stochastic Games}

%
%
\author[1]{\href{mailto:William Brown <w.brown@columbia.edu>?Subject=No-Regret Learning in Multi-Player Stochastic Games}{William Brown}{}} 

\affil[1]{%
    Computer Science Dept.\\
    Columbia University\\
    New York, New York, USA
}

\begin{document}
\maketitle
\begin{abstract}
We consider the problem of simultaneous learning in stochastic games with many players in the finite-horizon setting. While the typical target solution for a stochastic game is a Nash equilibrium, this is intractable with many players. We instead focus on variants of {\it correlated equilibria}, such as those studied for extensive-form games. We begin with a hardness result for the adversarial MDP problem: even for a horizon of 3, obtaining sublinear regret against the best non-stationary policy is \textsf{NP}-hard when both rewards and transitions are adversarial. This implies that convergence to even the weakest natural solution concept---normal-form coarse correlated equilbrium---is not possible via black-box reduction to a no-regret algorithm even in stochastic games with constant horizon (unless $\textsf{NP}\subseteq\textsf{BPP}$). Instead, we turn to a different target: algorithms which {\it generate} an equilibrium when they are used by all players. Our main result is algorithm which generates an {\it extensive-form} correlated equilibrium, whose runtime is exponential in the horizon but polynomial in all other parameters. We give a similar algorithm which is polynomial in all parameters for ``fast-mixing'' stochastic games. We also show a method for efficiently reaching normal-form coarse correlated equilibria in ``single-controller'' stochastic games which follows the traditional no-regret approach. When shared randomness is available, the two generative algorithms can be extended to give simultaneous regret bounds and converge in the traditional sense.
\end{abstract}

\section{Introduction}\label{sec:intro}

Many multi-agent systems, such as financial markets, transportation networks, and video games, involve agents competing in environments where their actions affect their immediate rewards as well as transitions between states in the environment. When opponent strategies are fixed, this resembles a reinforcement learning problem for a single agent.
Stochastic games, also known as Markov games are a popular model for multi-agent reinforcement learning problems (\cite{Littman1994MarkovGA}), and have also been studied extensively throughout economics and computer science (\cite{Solan13743,shoham_leyton-brown_2008}).
They generalize Markov decision processes (MDPs) to many players, where each state is now a game where both the instantaneous rewards and transitions depend on the actions of all players. As is the case throughout game theory, a fundamental question from the perspective of algorithm design is whether some kind of equilibrium can be found efficiently.

The traditional solution concept for a game, often interpreted as a model of rational behavior, is the Nash equilibrium (\cite{Nash48}).  In two-player zero-sum and other restricted classes of normal-form games, Nash equibria can be found efficiently; however, finding one is \textsf{PPAD}-complete for arbitrary games even with only two players (\cite{DGP06,CDT07}), and thus likely computationally intractable.
A more appropriate target in this case is a {\it correlated} equilibrium, introduced by \cite{AUMANN197467}, which is a generalization of a Nash equilibrium where strategies can be correlated across players, and can be efficiently computed in general games (e.g.\ \cite{nisan_roughgarden_tardos_vazirani_2007}).
Correlated equilibria can also be reached through repeated play by agents who use appropriate learning algorithms.
The existence of no-swap-regret dynamics which efficiently converge to correlated equilibria is a celebrated result in the theory of learning in games (\cite{FOSTER199740,hartmascolell,blummansour}). A notable benefit of this approach is that it does not depend on the description length of the game, so long as rewards are computable from an action profile, and thus can be used in many-player games where writing an explicit game description is prohibitive.

The normal-form game model is often insufficient to capture problems of practical interest. The aforementioned results cannot be applied directly to stochastic games, as the strategy space is exponential in the relevant parameters. Yet, as real-world problems often have many players and possibly arbitrary reward structures, it is natural to target correlated equilibria as a solution concept for stochastic games as well.
The starting point for our work is asking whether an efficient convergence result of the same form as \cite{hartmascolell} can be obtained for repeated play of a stochastic game in the finite-horizon setting.

A related setting where similar questions have been studied is that of extensive-form games (EFGs). Several refinements of correlated equilibria have been proposed for EFGs, which differ in when action recommendations are revealed to each agent (\cite{stengel,efcepoly,farina2019coarse}).
Two variants which we will consider are normal-form coarse correlated equilibria (NFCCE) and extensive-form correlated equilibria (EFCE), with the latter contained in the former, which we adapt to finite-horizon stochastic games.
Recent work has led to the development of an algorithm which converges to an EFCE by minimizing an appropriate notion of regret for each agent (\cite{celli2020noregret}). We show that such a black-box reduction cannot work for stochastic games of even constant horizon, as the corresponding online learning problem is hard, and instead design algorithms which converge to correlated equilibria (in a somewhat delicate sense) by directly leveraging information about opponents' strategies.

\subsection{Results and Techniques} 

We assume that players in a finite-horizon stochastic game play for many repeated horizons, or {\it trajectories}, and that rewards and transition dynamics are computed by an oracle when players submit actions simultaneously at a given state. Players receive only {\it bandit} feedback, i.e.\ they do not know what see what rewards or transitions would have occurred if they had selected a different action. 
Longer horizons allow for greater consideration of ``deferred rewards'' for actions, such as in a board game where an early move can become consequential in the endgame; a horizon of one corresponds to a repeated one-shot game. 
Each form of correlated equilibrium we consider is a joint distribution over recommended {\it policies}, which tell each player an action to play at each state. We consider policies which are non-stationary, i.e. they can depend on the time-step.
In a NFCCE, no player can improve their reward by committing to a fixed policy before the trajectory begins or recommendations are revealed. In an EFCE, players receive individual action recommendations only upon reaching a state, and they cannot improve rewards by ``swapping'' their actions based on their recommendations.

Our first result is negative: we show that obtaining sublinear regret against the best non-stationary policy for adversarial MDPs with a horizon of 3 is \textsf{NP}-hard, strengthening previous hardness results (\cite{ExpertsMDP04,NIPS2013_4f284803}) which require the stronger ``LPN hardness'' assumption and hold only when the horizon is approximately the size of the MDP. The adversarial MDP problem is the natural online learning variant of our setting, as each set of opponent policies defines an MDP for a given player, albeit with different rewards and transitions. Assuming $\textsf{NP} \not\subseteq \textsf{BPP}$, this implies that any algorithm which quickly converges to even a NFCCE in a stochastic game with constant horizon cannot be no-regret against arbitrary opponents, ruling out a black-box reduction to reaching a correlated equilibrium as in \cite{hartmascolell} or \cite{celli2020noregret}.

We then turn our attention to designing algorithms which make use of information about the behavior of opponents, namely that they are using the same algorithm. While regret minimization and learning equilibria are often viewed as intimately connected, lower bounds for regret do not necessarily imply barriers for equilibria when opponents are not behaving arbitrarily; in particular, knowledge of ``self-play'' has been used to obtain rates of convergence to correlated equilibria in normal-form games which overcome lower bounds for regret minimization against an arbitrary adversary (\cite{SyrgkanisFast,chen2020hedging}).

Our main result is a decentralized learning algorithm which reaches an EFCE when used by all players, and in particular one where the distribution of recommended action profiles at each state is a product distribution across states.
We observe that {\it computing} an EFCE of this form is straightforward in a centralized model, as it reduces to the problem of finding a correlated equilibrium for a set of normal-form games, each of which can be computed with linear programming or no-swap-regret learning. States at the final time-step are essentially equivalent to normal-form games, and each player will have a {\it value} associated with a given correlated equilibrium representing their average reward at that state-time pair. 
These values can be folded back into rewards at previous time-steps, enabling an inductive computation.
Our main algorithm, PLL, aims to simulate this approach by conducting repeated {\it parallel local learning} at each state. After a number of trajectories which is exponential in the horizon length but polynomial in all other parameters, the set of {\it subgame value estimates} stabilizes for each player, at which point the product distribution across state-time pairs over the action profiles generated by continued local learning constitutes an EFCE for the stochastic game, thus circumventing the previous hardness result. 
In addition, we give a variant of PLL which removes the exponential dependence on horizon provided that a ``mixing'' assumption is satisfied. 

We also give an alternative approach which reaches an NFCCE in ``single-controller'' stochastic games, where only one player affects transitions (as studied in e.g.~\cite{singlecontroller}). Here, the controller uses a no-regret algorithm for adversarial MDPs with fixed transitions (\cite{Rosenberg2019OnlineSS}) while the followers use another variant of PLL. 
This approach converges in the black-box sense, where each agent has sublinear regret for the the uniform distribution over the entire history of strategies. We further show that the algorithms for general and fast-mixing stochastic games can be extended to satisfy sublinear regret bounds simultaneously for all agents if shared randomness is available by allowing agents to play according to the generated equilibrium after the initial algorithms terminate.
As building blocks for the analysis of our algorithms, we establish generalizations of known results for convergence of learning algorithms to correlated equilibria in normal-form games (e.g. \cite{blummansour}, to the case where reward feedback is noisy, which we call ``games with stochastic rewards'') and Bayesian games (removing the ``independent private value'' assumption in \cite{HST15}). Most proofs and some algorithmic details (such as exact constants) are deferred to Appendix A.

\subsection{Comparison with Related Work}

Most provably efficient algorithms for learning in stochastic games
target Nash equilibria in tractable special cases like zero-sum games, and often in infinite-horizon settings with discount factors or mixing guarantees
(\cite{rmax,CHF10,ZYL+18,ZYB18}).
When there are many players, we cannot afford to ``learn the game'' and use a model-based approach (e.g. \cite{rmax}), as explicitly representing even a single state will be intractable.
Closest to our setting is \cite{KMS00}, who give a centralized recursive algorithm that {\it computes} an EFCE in finite-horizon stochastic games for the case when the algorithm can sample many transitions and rewards at each state (which we cannot do in our ``repeated trajectories'' model); the runtime is exponential in both the horizon and the number of players, but does not depend on the number of states.
Correlated equilibria are also studied empirically by \cite{Greenwald2003CorrelatedQ}, and there is a large body of literature on general-sum multi-agent learning under other objectives or without convergence guarantees; for a recent overview of multi-agent reinforcement learning, see \cite{zhang2019multiagent}.

Finite-horizon stochastic games are somewhat related to extensive-form games, but are distinct in several important ways and in general are not directly comparable. 
In EFGs, only one player acts at each state, but partial information is allowed via ``infosets'', which can be used to simulate simultaneous actions (\cite{shoham_leyton-brown_2008,stengel,celli2020noregret}). Stochastic games with partial information have been considered in the literature (\cite{POSG}), but are considerably more difficult to solve (POMDPs, the single-player analog, are \textsf{PSPACE}-complete, see \cite{PT87}), and we will not consider them here.
EFGs typically enforce a tree structure on the infosets by the ``perfect recall'' assumption, whereas finite-horizon stochastic games allow for a DAG structure. As a result, our setting can allow for games with both a large depth and branching factor as long as the number of total states is not too large; EFGs are not as appropriate of a model when there are many paths to a given game state. 
Encoding a finite-horizon stochastic game as an EFG requires considering each path to a state independently, introducing a space blowup which is exponential in the horizon length, which renders existing methods for learning in EFGs impractical for our setting.

\section{Correlated Equilibria in Stochastic Games}
We begin with some background regarding no-(swap)-regret learning in the bandit feedback setting and connections to correlated equilibria in \Cref{subsec:prelims}. In \Cref{subsec:gwsr}, we introduce a preliminary model of a ``game with stochastic rewards'' and its corresponding definition of a correlated equilibrium. This serves as a building block for our formulation of stochastic games in \Cref{subsec:fhsg}. These game models may have unbounded description length; throughout, we treat them as oracles to which players submit actions simultaneously, then receive reward and state feedback. We assume instantaneous rewards are normalized to lie in $[0,1]$.

\subsection{Preliminaries}
\label{subsec:prelims}

\paragraph{Adversarial Bandits.}

In the {\it adversarial multi-armed bandit} problem,
the objective is to sequentially choose actions $a \in \A$
which minimize some notion of {\it regret}, where rewards at each step are chosen by an (adaptive) adversary. 
Let $N$ denote the cardinality of $\A$.
At each round, an algorithm commits to a distribution of actions $q^t \in \Delta(\A)$, which is observed by an adversary, who then chooses a reward vector $b^t \in [0,1]^N$. The algorithm then draws an action $a^t$ from $q^t$ and then observes the associated reward $b^t_{a^t}$.

Let $\Fsw$ denote the set of {\it swap functions} $\Fsw : \A \rightarrow \A$.
After $T$ rounds, the {\it swap-regret} of such an algorithm is given by
\begin{align*}
    \Reg^{\Fsw}(T) =&\; \max_{f \in \Fsw} \sum_{t=1}^T b^t_{f(a^t)} - \sum_{t=1}^T b^t_{a^t}. 
\end{align*}
Dividing by $T$ gives us the {\it average} swap regret; there are efficient algorithms for achieving sublinear swap-regret in this setting, which we refer to as {\it no-swap-regret} as average swap regret vanishes as $T$ grows.
\begin{prop}[\cite{blummansour}]
    \label{prop:srmab}
    There is an algorithm ($\srmab$) that, when used for $T$ rounds in the multi-armed bandit setting with adaptively chosen losses, has expected swap regret bounded by $O(N\sqrt{N T \log N})$. 
\end{prop}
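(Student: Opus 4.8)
The plan is to recover the reduction of \citet{blummansour} from swap regret to external regret and instantiate it with a bandit base learner. First I would run $N$ parallel copies of a no-external-regret bandit algorithm, one copy indexed by each action $j \in \A$. At round $t$, copy $j$ outputs a distribution $p^t_j \in \Delta(\A)$; collecting these as the rows of an $N \times N$ row-stochastic matrix $Q^t$, I would let the master algorithm play from a stationary distribution $q^t$ of $Q^t$, i.e.\ a solution of $q^t = q^t Q^t$, equivalently $q^t_i = \sum_j q^t_j (p^t_j)_i$. Such a $q^t$ exists for any stochastic matrix (Perron--Frobenius) and can be computed by solving the corresponding linear system; since the EXP3-type copies assign positive weight to every action, $q^t$ has full support, which I will need for the bandit estimators below. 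The master then draws $a^t \sim q^t$ and observes only $b^t_{a^t}$.

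The crux is the feedback routing and the identity it produces. From the single observed reward I would build the unbiased importance-weighted estimate $\hat b^t$ with $\hat b^t_i = b^t_i \mathbf{1}[a^t = i]/q^t_i$, and feed copy $j$ the scaled vector $q^t_j \hat b^t$ (mirroring the scaled losses used in the full-information reduction). The no-external-regret guarantee of copy $j$, compared against the fixed target column $f(j)$, bounds $\sum_t q^t_j \langle p^t_j, \hat b^t\rangle - \sum_t q^t_j \hat b^t_{f(j)}$ by its external regret $R_j$. Summing over $j$ and taking expectations, the stationarity $q^t = q^t Q^t$ collapses the master's own reward, since $\sum_j q^t_j \langle p^t_j, \hat b^t\rangle = \langle q^t, \hat b^t\rangle$, while the comparator terms reassemble into the reward of the best swap function; unbiasedness of $\hat b^t$ then gives $\mathbf{E}[\Reg^{\Fsw}(T)] \le \sum_{j \in \A} R_j$.

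It remains to bound each $R_j$. Instantiating every copy with EXP3 and importance-weighted estimates, each attains external regret $R_j = O(\sqrt{NT\log N})$ against an adaptive adversary, and summing over the $N$ copies yields the claimed $O(N\sqrt{NT\log N})$. I expect the main obstacle to be exactly this bandit step: unlike the full-information version of \citet{blummansour}, every copy is driven by estimates built from a single observation, so I must verify that scaling by $q^t_j \le 1$ preserves unbiasedness and, more delicately, that the variance injected by the $1/q^t_i$ factors does not blow up any copy's regret. This is also where the gap from the full-information rate $O(\sqrt{NT\log N})$ opens up: the extra factor of $N$ reflects both the $\sqrt{N}$ variance penalty of bandit feedback and the loss of the per-copy ``effective-horizon'' Cauchy--Schwarz savings available with full information. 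The existence and computation of the stationary distribution, and the telescoping bookkeeping of the previous paragraph, are by comparison routine.
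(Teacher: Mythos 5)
This is the standard Blum--Mansour reduction that the paper merely cites for this proposition (it gives no proof of its own), and your reconstruction---$N$ parallel no-external-regret bandit copies, the stationary distribution of the row-stochastic matrix $Q^t$, feedback to copy $j$ scaled by $q^t_j$, importance-weighted reward estimates, and the telescoping via $q^t = q^t Q^t$---is exactly how the cited $O(\sqrt{N^3 T\log N})$ bound is obtained. One cosmetic slip: since these are rewards, the external-regret guarantee of copy $j$ bounds $\sum_t q^t_j \hat b^t_{f(j)} - \sum_t q^t_j\langle p^t_j,\hat b^t\rangle$ (benchmark minus algorithm), so the displayed difference in your second paragraph has its sign reversed.
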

This implies that after using $\srmab$ for $O(\frac{1}{\epsilon^2} N^3 \log N)$ rounds, the expected {average swap regret } is bounded by $\epsilon$. 
We will use $\B$ to denote the $\srmab$ algorithm and $B(\epsilon)$ to denote the number of rounds after which it has expected average swap regret at most $\epsilon$. We use it as a subroutine in our algorithms, but our results are not specific to its details, and can be adapted to use any no-swap-regret algorithm.  

\paragraph{Correlated Equilibria in Normal-Form Games.}

In a normal-form game with $M$ players and action space $\A = \times_{i \in [M]} \A_i$, each player $i$ selects an action $ a_i \in \A_i$ and receives a reward given by a utility function $u : \A \rightarrow [0,1]^M$ mapping action profiles to a vector of rewards. An $\epsilon$-{\it correlated equilibrium} for such a game is a {distribution} $D \in \Delta(\A)$ such that for all players $i$ and deterministic functions $f: \A_i \rightarrow \A_i$,
    $$\E_{a \sim D}[u_i(a_{i}; a_{-i})] \geq \E_{a \sim D}[u_i(f(a_i) ; a_{-i})] -\epsilon,$$
i.e.\ no player can benefit by more than $\epsilon$ in expectation by deviating from the distribution of ``recommended actions'' with any swap function. 
Repeated play in a normal form game converges to a correlated equilibrium if players use no-swap-regret algorithms.

\begin{prop}[\cite{blummansour}]
\label{prop:swce}
If all players in a game select actions using $\B$ for $B(\epsilon)$ rounds, the uniform distribution over the sequence of action profiles played thus far is an $\epsilon$-correlated equilibrium for the game, where the expectation is taken both with respect to the distribution of action profiles as well as the randomness of $\B$. 
\end{prop}

\subsection{Games with Stochastic Rewards}
\label{subsec:gwsr}

We define a {\it game with stochastic rewards} as a distribution over normal-form games, which is equivalent to a normal-form game where reward feedback can be noisy and arbitrarily correlated across players.

\begin{definition}[Games with Stochastic Rewards]
    A game with stochastic rewards $x = (\A, M, r, u)$ with $M$ players is given by a set of action profiles $\A = \times_{i \in [M]} \A_i$, a distribution over reward tensors $r \in \Delta(\Theta)$, and a utility function $u$, where the utilities $u : \A \times \Theta \rightarrow [0,1]^{M}$ depend on the realization of $\theta \sim r$. In a round of the game, players submit actions to $x$ simultaneously, $\theta$ is drawn independently from $r$, and each player observes only their utility $u_i(a, \theta)$.
\end{definition}

We assume that $\A_i = N$ for all agents. A correlated equilibrium for such a game is an action profile distribution where the regret bound holds with respect to the distribution over reward tensors.

\begin{definition}[Correlated Equilibria in Games with Stochastic Rewards]
    An $\epsilon$-correlated equilibrium for a game with stochastic rewards is a distribution $D \in \Delta(\A)$ such that for all players and all swap functions $f \in \Fsw$,
\begin{align*}
    \E_{a \sim D, \theta \sim r}[u_i(a_{i}; a_{-i}, \theta)] \geq \E_{a\sim D, \theta \sim r}[u_i(f(a_i) ; a_{-i}, \theta)] - \epsilon. 
\end{align*}
\end{definition}

Here, there is some expected reward tensor $\bar{\theta}$ where for every action profile $a$ and player $i$, $\bar{\theta}_{a,i} = \E_{\theta\sim r}[u_i(a_i; a_{-i}, \theta)]$, and a correlated equilibrium for such a game
is simply a correlated equilibrium for the game specified by $\bar{\theta}$.   

\subsection{Finite-Horizon Stochastic Games}
\label{subsec:fhsg}

Stochastic games resemble Markov decision processes, yet there are many players who act simultaneously at each state, and transition and reward dynamics depend on all players' actions.
In finite-horizon stochastic games, players begin at a state drawn from some initial distribution and play for a fixed period of steps, where a step consists of one set of simultaneous actions, followed by a transition to a new state and a reward for each agent. We allow both rewards and transitions to be probabilistic. A {\it trajectory} is the sequence of steps over the entire horizon length.

\begin{definition}[Finite-Horizon Stochastic Games]
    A {finite-horizon stochastic game} is given by a tuple $\M = (\X, \A, M, H, p_0, p, r, u)$, where:
\begin{itemize}
    \itemsep0em
    \item $M$ is the number of players,
    \item $\A = \times_{i \in [M]} \A_i$ is the action space ($\abs{\A_i} = N$ for each player),
    \item $H$ is the horizon length,
    \item $\X$ is the state space ($\abs{\X} = S$), 
    \item $p_0 \in \Delta(\X)$ is an initial distribution over states,
    \item $p : [H]\rightarrow \Delta(\Tau)$ is a function which defines distributions over transition functions $\tau \in \Tau : \A \times \X \rightarrow \X \cup \varnothing$,
    \item $r : \X \times H \rightarrow \Delta(\Theta)$ is a function which defines distributions of reward tensors, and 
    \item $u : \A \times \Theta \rightarrow [0,1]^M$ is a function which defines utilities for each player given an action profile and a reward tensor. 
\end{itemize}
For all $a, x, \tau$, we assume $\tau(a,x) = \varnothing$ if and only if $h = H$, where $\varnothing$ denotes termination of the episode.
\end{definition}

\paragraph{State Values for a Policy Profile.} We consider non-stationary policies of the form $\pi_i : \X \times [H] \rightarrow \A_i$ for each agent $i$, with $\pi_i \in \Pi_i$ and $\Pi = \times_{i \in [M]} \Pi_i$. 
For a policy profile $\pi$ we can recursively define a state value function $V^{\pi}_i: \X \times [H] \rightarrow [0,H]$ where:
\begin{align*}
    V^{\pi}_i(x, H) =&\; \E_{\theta \sim r(x, H)}[u_i(a_i; a_{-i}, \theta)] 
\end{align*}
where $a_i = \pi_i(x, h)$ for each agent $i$
and
\begin{align*}
    V^{\pi}_i(x, h) =&\; \E_{\theta, \tau}[u_i(a_i; a_{-i}, \theta) + V^{\pi}_i(\tau(a, x), h+1) ] 
\end{align*}
for $h \in \{1,\ldots, H-1\}$, where $\theta$ and $\tau$ are drawn from the appropriate distributions.

\paragraph{Counterfactual State Values.} 
We also define the {\it counterfactual} state value function for a player who deviates from a distribution over policy profiles. We consider two kinds of deviations: always playing a fixed policy $\psi_i \in \Pi_i$, or deviating from local recommendations using a ``swap function''.
Let $\Fsw_i : \A_i \times \X \times [H] \rightarrow \A_i$ be the set of swap functions for player $i$ that can depend on action, state, and episode step. 
We can recursively define a value function for $f \in \Fsw_i$ given a policy profile $\pi$:
\begin{align*}
    V^{\pi, f}_i(x, H) =&\; \E_{\theta \sim r(x, H)}[u_i(f(a_i, x, h); a_{-i}, \theta)] 
\end{align*}
and
\begin{align*}
    V^{\pi, f}_i(x, h) =&\; \E_{\theta, \tau}[u_i(f(a_i,x,h); a_{-i}, \theta) + V^{\pi, f}_i(\tau^*, h+1) ], 
\end{align*}
where $\tau^* = \tau(f(a_i, x, h), a_{-i}, x)$ and again $a_i = \pi_i(x, h)$ for each agent $i$.
Our notion of swap regret will be defined with respect to $\Fsw_i$, and our notion of a correlated equilibrium is a distribution over policy profiles $\pi = [\pi_i]_{i \in [M]}$. We can equivalently define $V^{\pi, \psi_i}_i(x, h)$ for $\psi_i \in \Pi_i$, omitting dependence on the actions recommended at each step.

We can adapt variants of correlated equilibria as considered for extensive-form games in e.g.\ \cite{stengel} or \cite{farina2019coarse} to stochastic games. Our definition of a normal-form correlated equilibrium says that no player can benefit substantially by committing to a fixed policy before seeing any recommendations, given knowledge of a policy profile distribution. For an extensive-form correlated equilibrium, recommendations are revealed to agents one step at a time, and they cannot benefit by deviating from these recommendations using a swap function. The EFCEs we consider will be a product distribution across state-time pairs, and so we restrict to considering deviations based only on the current recommendation---recommendations at previous steps provide no additional information about opponent recommendations at any other step.

\begin{definition}[Normal-Form Coarse Correlated Equilibria for Stochastic Games]

We say that 
a policy profile distribution $D \in \Delta(\Pi)$
is an $\epsilon$-approximate normal-form coarse correlated equilibrium (or $\epsilon$-NFCCE) for a finite-horizon stochastic game if for all agents $i$ and all $\psi_i \in \Pi_i$:
\begin{align*}
    \E_{x \sim p_0, \pi \sim D} [V^{\pi}_i(x, 1)] \geq&\; \E_{x\sim p_0, \pi \sim D} [V^{\pi, \psi}_i(x, 1)] - \epsilon H .
\end{align*}
\end{definition}

\begin{definition}[Extensive-Form Correlated Equilibria for Stochastic Games]
    \label{def:ce-fhsg}
We say that 
a policy profile distribution $D \in \Delta(\Pi)$
is an $\epsilon$-approximate extensive-form correlated equilibrium (or $\epsilon$-EFCE) for a finite-horizon stochastic game if for all agents $i$ and all $f \in \Fsw_i$:
\begin{align*}
    \E_{x \sim p_0, \pi \sim D} [V^{\pi}_i(x, 1)] \geq&\; \E_{x\sim p_0, \pi \sim D} [V^{\pi, f}_i(x, 1)] - \epsilon H .
\end{align*}

\end{definition}
Just as in the case for extensive-form games, EFCEs provide stronger guarantees than NFCCEs.
\begin{theorem}
For a finite-horizon stochastic game, the set of $\epsilon$-EFCEs is contained in the set of $\epsilon$-NFCCEs for all $\epsilon \geq 0$.
\end{theorem}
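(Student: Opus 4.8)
The plan is to exhibit the NFCCE deviation class as a subset of the EFCE deviation class, so that the EFCE inequalities specialize directly to the NFCCE inequalities. The two definitions differ only in the permitted deviations: an $\epsilon$-NFCCE must resist every fixed policy $\psi_i \in \Pi_i$, while an $\epsilon$-EFCE must resist every swap function $f \in \Fsw_i$. First I would encode an arbitrary fixed policy $\psi_i$ as the swap function $f_{\psi_i} \in \Fsw_i$ that ignores its recommended-action argument, setting $f_{\psi_i}(a_i, x, h) = \psi_i(x, h)$ for all $a_i$. Since $\Fsw_i$ is the set of all maps $\A_i \times \X \times [H] \to \A_i$, this constant-in-$a_i$ function is a legitimate member.

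The key step is to check that the two counterfactual value functions coincide under this encoding, i.e.\ $V^{\pi, f_{\psi_i}}_i(x, h) = V^{\pi, \psi_i}_i(x, h)$ for every policy profile $\pi$, state $x$, and step $h$. I would establish this by backward induction on $h$. In the base case $h = H$, both recursions evaluate $u_i$ at the same action, since $f_{\psi_i}(a_i, x, H) = \psi_i(x, H)$ regardless of the recommendation $a_i = \pi_i(x, H)$. For the inductive step, the deviated action in both recursions is again $\psi_i(x, h)$, so the transition argument $\tau^* = \tau(\psi_i(x, h), a_{-i}, x)$ is the same in each case; applying the induction hypothesis to the continuation value then yields equality at step $h$.

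Granting this equality, the conclusion follows immediately. If $D$ is an $\epsilon$-EFCE, then for every agent $i$ and every $f \in \Fsw_i$ the defining inequality holds; taking $f = f_{\psi_i}$ and substituting $V^{\pi, f_{\psi_i}}_i = V^{\pi, \psi_i}_i$ recovers precisely the $\epsilon$-NFCCE inequality for $\psi_i$. As $i$ and $\psi_i \in \Pi_i$ were arbitrary and the argument nowhere uses a bound on $\epsilon$, this shows $D$ is an $\epsilon$-NFCCE for every $\epsilon \geq 0$.

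I do not anticipate a genuine obstacle: the only point demanding care is the backward induction confirming that committing to $\psi_i$ in advance and adaptively replacing each recommendation by $\psi_i(x,h)$ trace out the same distribution over trajectories, hence the same value. The conceptual content is just that advance commitment to a policy is a special, and weaker, form of deviating from revealed recommendations.
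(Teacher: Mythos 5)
Your proposal is correct and follows exactly the paper's argument: the paper's proof encodes the fixed policy $\psi$ as the swap function $f(a_i, x, h) = \psi(x,h)$ for all $a_i$, which is precisely your $f_{\psi_i}$. The backward induction you spell out to confirm that $V^{\pi, f_{\psi_i}}_i = V^{\pi, \psi_i}_i$ is left implicit in the paper but is a worthwhile verification of the same idea.
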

\begin{proof}
Any fixed policy $\psi$ can be encoded with the swap function $f(a_i, x, h) = \psi(x,h)$ for all $a_i$, $x$, and $h$, and so any $\epsilon$-EFCE is also an $\epsilon$-NFCCE.
\end{proof}

These definitions bound the average per-step regret by $\epsilon$ for each agent under the appropriate class of deviations. We are interested in when, and how quickly, players can converge to such an equilibrium by repeatedly playing the game.

\section{Hardness of Learning in Adversarial MDPs}
\label{sec:adv-mdp}
The first thing that one might hope for in the setting of multi-player finite-horizon stochastic games is the existence of an algorithm which minimizes the appropriate notion of regret for each agent, and can be used as a {\it black box} to reach a correlated equilibrium. This is the form of \cite{celli2020noregret}, who give an algorithm with sublinear {\it trigger regret}, which corresponds to the definition of an EFCE and thus results in efficient convergence of the sequence of policies played to an EFCE when all agents use the algorithm.

The appropriate problem for modeling repeated play in finite-horizon stochastic games against arbitrary opponents is the ``adversarial MDP problem'', where an agent is faced with a set of finite-horizon MDPs, each with a different reward and transition function. A commonly studied objective is to minimize regret against the best fixed policy, and such an algorithm with sublinear regret would converge to an NFCCE for a stochastic game. This was shown to be as hard as the ``learning parities with noise'' problem (which is not known to be \textsf{NP}-hard) by \cite{NIPS2013_4f284803} when $H = \Theta(S)$.  
We show that this is indeed \textsf{NP}-hard even when the horizon is only 3.

\begin{theorem}\label{thm:hard-adv-mdp}
Assuming $\textsf{\textup{NP}} \not\subseteq \textsf{\textup{BPP}}$, there is no algorithm with polynomial time per-round computation which has  $O(T^{1 - \delta} \cdot \textup{poly}(S))$ regret algorithm for the adversarial MDP problem with $H \geq 3$, for any $\delta > 0$.
\end{theorem}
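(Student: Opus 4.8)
The plan is to reduce an \textsf{NP}-hard satisfiability problem to the task of achieving sublinear regret, so that any such algorithm would yield a randomized polynomial-time decision procedure and hence place $\textsf{NP} \subseteq \textsf{BPP}$. The central observation is that the adversary may present the horizon-$3$ MDPs as i.i.d.\ draws from a fixed distribution $\mathcal{D}$ over \emph{full-horizon} dynamics, i.e.\ a transition function together with a reward function for all three steps, drawn jointly per round. This is a legitimate oblivious adversary, but it is strictly more expressive than a single stochastic MDP: because the step-$1$, step-$2$, and step-$3$ dynamics within a round are correlated, the quantity $V^\ast = \max_{\pi \in \Pi} \E_{x\sim p_0,\,\mathcal{D}}[V^{\pi}(x,1)]$ is not the value of any Markov decision process and cannot be computed by backward induction. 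I would design $\mathcal{D}$ from a $3$-SAT (or bipartite $2$-CSP) instance $\phi$ on $n$ variables so that $V^\ast$ is an affine function of the maximum fraction of satisfiable clauses, giving $V^\ast = c$ when $\phi$ is satisfiable and $V^\ast \le c - 1/\mathrm{poly}(n)$ otherwise; note that only a $1/\mathrm{poly}(n)$ gap is needed, so exact \textsf{NP}-hardness suffices and no PCP-style inapproximability is required.

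Next I would show that a low-regret algorithm lets us estimate $V^\ast$ to within this gap. Two bounds pin down the algorithm's expected average reward. For the upper bound, at each round $t$ the algorithm's (possibly history-dependent) policy is fixed given the past and independent of the freshly drawn dynamics, so its expected per-round reward is at most $\max_\pi \E_{\mathcal{D}}[V^\pi] = V^\ast$; summing gives expected total reward at most $T V^\ast$, so there is no ``overshoot'' despite the sequence being oblivious. For the lower bound, the empirically optimal fixed policy earns at least $\sum_t V^{\pi^\ast}(x,1)$ on the realized instances, whose expectation is $T V^\ast$, and the assumed regret bound forces the algorithm within $O(T^{1-\delta}\,\mathrm{poly}(S))$ of it. Hence the expected average reward lies in the interval $[V^\ast - O(T^{-\delta}\mathrm{poly}(S)),\, V^\ast]$. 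Since rewards are bounded, the realized average concentrates around its expectation by a Hoeffding/Azuma bound, so choosing $T = \mathrm{poly}(n)$ large enough (possible because $\delta$ is a fixed constant and $S = \mathrm{poly}(n)$) makes both the regret term and the sampling error smaller than the gap. Running the algorithm and averaging its observed rewards then decides whether $\phi$ is satisfiable with high probability, using only the realized rewards, so the reduction needs only bandit feedback.

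The technical heart, and the step I expect to be hardest, is the construction of $\mathcal{D}$ within a horizon of exactly $3$. A truth assignment would be encoded as the policy's actions at a set of ``variable'' state-time pairs, and each round samples a single clause and uses adversarial transitions to route a trajectory so that it reads the three literals of that clause at steps $1$, $2$, and $3$, granting reward at the final step precisely when the clause is satisfied; averaging over the uniform distribution on clauses makes $V^\ast$ track the satisfiable fraction. The difficulty is that a policy is a function of (state, step), whereas a variable appears in several clauses in different positions, so its value is read at several different state-time pairs; moreover, to check a clause the transition must carry the earlier literals forward through the state, which duplicates later variables into copies indexed by the earlier reads. Ensuring that these multiple reads of a single variable are forced to agree requires auxiliary \emph{consistency} rounds that penalize disagreement, and the main obstacle is to fit both the clause-checking and the consistency enforcement into three steps while balancing the penalty weights so that an inconsistent policy can never beat a consistent one and the $1/\mathrm{poly}(n)$ gap in $V^\ast$ is preserved. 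Finally, I would verify the boundary conventions of the model, namely that every round terminates exactly at step $3$, that rewards lie in $[0,1]$, and that $\abs{\X}$ and $\abs{\A}$ remain $\mathrm{poly}(n)$, all of which are routine once the gadget is fixed.
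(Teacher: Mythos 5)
The second half of your argument---converting a low-regret online algorithm into a \textsf{BPP} decision procedure---is sound and matches the paper's route: the paper invokes the online-to-batch reduction of Cesa-Bianchi et al., while you give the equivalent direct argument (no overshoot against an i.i.d.\ oblivious adversary, regret lower bound against the best fixed policy, Hoeffding). Your observation that a $1/\mathrm{poly}(n)$ gap suffices, so that plain \textsf{NP}-hardness of \textsf{3-SAT} is enough for the decision version, is also correct (the paper additionally uses H{\aa}stad's inapproximability, but only to get the stronger $\frac{7}{8}+\epsilon$ approximation-hardness statement for the offline problem).

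The genuine gap is the part you yourself flag as ``the technical heart'': you never construct the distribution $\mathcal{D}$, and the construction you sketch is headed toward unnecessary obstacles. First, there is no need to ``carry the earlier literals forward through the state'' or to duplicate variables into copies indexed by earlier reads: the paper's gadget has one state per variable plus an absorbing \emph{done} state, and pays reward $1$ the moment any literal of the current clause is satisfied by transitioning to \emph{done}; the trajectory only proceeds to the next literal's state when the previous literal was falsified, so the state never needs to remember anything. This keeps $|\mathcal{X}|=n+1$ and the reward in $\{0,1\}$ per episode with no blowup. Second, the consistency problem you identify---a non-stationary policy may assign a variable different values at steps $1$, $2$, $3$, and different clauses read the same variable at different steps---is not handled by ``auxiliary consistency rounds that penalize disagreement'' (it is unclear such rounds even fit in horizon $3$, and you leave the weight-balancing unresolved). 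The paper instead includes, for each clause, \emph{all six} MDPs corresponding to the six orderings in which its three literals can be visited, and argues that the resulting $6m$ instances decompose by permutation so that an unsatisfiable formula forces at least one zero-reward MDP per permutation class, while a satisfying assignment (played identically at all three steps) earns reward $1$ everywhere. Without this gadget---or a worked-out alternative to it---your reduction is a plan rather than a proof, since everything downstream (the value gap, the choice of $T$, the concentration bound) presupposes that $V^{\ast}$ separates satisfiable from unsatisfiable instances.
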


We prove this by considering an offline version of the adversarial MDP problem, where the goal is to find a single non-stationary policy which does well across a set of MDPs with differing reward and transition functions. We show that this is as hard as \textsf{MAX-3-SAT}, and use the online-to-batch reduction from \cite{CCG04} to show hardness of the online problem. This suggests we should not expect a black-box reduction to finding even an NFCCE, even when the horizon is quite short.

\begin{corollary}\label{cor:hard-nr-sg}
Assuming $\textsf{\textup{NP}} \not\subseteq \textsf{\textup{BPP}}$, any decentralized learning algorithm which converges in polynomial time to an approximate (coarse) correlated equilibrium for a stochastic game with horizon $H \geq 4$ when used by all players must have regret $\omega(T^{1 - \delta} \cdot \textup{poly}(S))$, for all $\delta > 0$, against arbitrary opponents.
\end{corollary}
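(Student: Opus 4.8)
The plan is to derive this corollary as a consequence of Theorem \ref{thm:hard-adv-mdp} by showing that a decentralized learning algorithm converging quickly to an NFCCE can be repurposed as a low-regret algorithm for the adversarial MDP problem, contradicting the hardness result. The key conceptual observation, noted earlier in the paper, is that when all opponents fix their strategies, the environment faced by a single player $i$ in a stochastic game is precisely a finite-horizon MDP: the opponent action profile $a_{-i}$ at each state induces a specific reward and transition function for player $i$. Thus adversarially chosen opponent behavior across rounds corresponds exactly to an adversarial MDP instance for player $i$.

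First, I would make the horizon bookkeeping explicit. Theorem \ref{thm:hard-adv-mdp} establishes hardness for adversarial MDPs with $H \geq 3$. To embed such an MDP as the ``player $i$ view'' of a stochastic game, I would construct a stochastic game with horizon $H' = H+1 \geq 4$, using the extra step (or a designated set of dummy opponent actions) to let the $M-1$ adversarial opponents encode, at each round, the reward and transition functions of the target adversarial MDP instance from Theorem \ref{thm:hard-adv-mdp}. The opponents are simply the adversary; they need not run any learning algorithm, since the corollary quantifies over convergence ``against arbitrary opponents.'' This accounts for the shift from $H \geq 3$ in the theorem to $H \geq 4$ in the corollary.

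Next I would formalize the reduction direction. Suppose toward a contradiction that some decentralized algorithm $\mathcal{A}$ converges in polynomial time to an $\epsilon$-NFCCE whenever used by all players, and moreover has regret $O(T^{1-\delta}\cdot\text{poly}(S))$ against arbitrary opponents for some $\delta>0$. The NFCCE guarantee bounds each agent's regret relative to the best fixed policy $\psi_i \in \Pi_i$, which is exactly the benchmark in the adversarial MDP regret objective (regret against the best non-stationary policy). Hence running $\mathcal{A}$ as player $i$ while the adversary supplies the hard MDP rewards and transitions through the opponent slot yields an online algorithm with $O(T^{1-\delta}\cdot\text{poly}(S))$ regret against the best non-stationary policy for an $H \geq 3$ adversarial MDP. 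By Theorem \ref{thm:hard-adv-mdp}, this is impossible unless $\textsf{NP}\subseteq\textsf{BPP}$, giving the contradiction.

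The main obstacle I anticipate is reconciling the per-step regret normalization used in the equilibrium definitions with the total-regret formulation of the adversarial MDP lower bound. The NFCCE definition bounds the deviation value by $\epsilon H$ over a horizon, and the equilibrium is stated over a distribution $D$ rather than an explicit sequence of plays, so I would need to argue that polynomial-time convergence to an $\epsilon$-NFCCE (for $\epsilon$ shrinking appropriately with the number of trajectories) implies a genuine sublinear cumulative regret bound for the online MDP problem. Carefully tracking how the convergence rate in trajectories translates into the online-learning time horizon $T$, and ensuring the $\text{poly}(S)$ factors and the $H$-dependent constants line up so that a violation of the corollary would indeed violate the exact regret rate ruled out in Theorem \ref{thm:hard-adv-mdp}, is where the delicate accounting lies; the conceptual reduction itself is clean once the horizon and opponent-as-adversary correspondence is set up.
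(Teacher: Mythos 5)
Your proposal is correct and follows essentially the same route as the paper: the corollary is obtained directly from Theorem~\ref{thm:hard-adv-mdp} by viewing the adversarial opponents as the MDP adversary, so that low regret against arbitrary opponents would yield a low-regret adversarial MDP algorithm. The paper's one-sentence justification for the horizon shift from $3$ to $4$ is slightly more specific than yours --- the extra step exists because the hardness reduction lets the adversary choose the starting state, whereas a stochastic game draws it from a fixed $p_0$, so one prepends a ``starting state'' from which the opponents route to the adversary's chosen initial state --- but this is the same bookkeeping you identify.
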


Despite this, we give algorithms which converge to correlated equilibria which are not black-box, i.e.\ they explicitly make use of the fact that all players are using the same algorithm. Without any assumptions on transitions, the runtime of our primary algorithm, PLL, is polynomial in all parameters except the horizon, where dependence is exponential in the worst case, allowing us to overcome the barrier we show for black-box reductions.

\section{Learning in Stochastic Games via Repeated Trajectories}
\label{sec:fhsg}

The main idea behind our algorithm is for each agent to locally perform no-swap-regret learning at each state-time pair, augmenting their observed rewards with estimates of the ``values'' for states they transition to. 
We first give an extension of the convergence theorem for bandit learning in normal-form games from \cite{blummansour} to ``games with stochastic rewards'', which makes use of $\B$ with additional modifications in order to handle stochasticity and obtain high-probability bounds for both regret and value estimates. We then give an ``offline'' centralized algorithm, BILL, which uses this subroutine to {\it compute} an EFCE for a stochastic game, given the ability to sample rewards and transitions for each state. Our algorithm PLL can be viewed as simulating BILL in a decentralized manner when agents play repeated trajectories of the game. The sense in which PLL converges is different from e.g.\ \cite{blummansour}; rather than taking the uniform distribution over the history of policies, we consider the product distribution of a truncated history of action profiles at each state-time pair. 
We can improve the speed of convergence for PLL when a ``fast-mixing'' assumption is satisfied, a common tool in the analysis of reinforcement learning algorithms.

\subsection{Learning in Games with Stochastic Rewards}

Recall that we define correlated equilibria for stochastic games with respect to the average reward tensor $\bar{\theta}$. When agents all use a no-swap regret algorithm (such as $\B$) to play such a game repeatedly, the immediate regret bound holds with respect to the realized sequence of reward tensors. We can extend this bound to hold with respect to $\bar{\theta}$ by viewing the ``error'' of each swap function for a player (their reward from sampled sequence of reward tensors $\theta$ versus the average tensor $\bar{\theta}$) as a martingale which does not deviate too far from its expectation. Depending on the relationship between $\epsilon$ and $N$, we may need to run $\B$ for slightly longer than $B(\epsilon)$ in order to apply our martingale analysis, but only by at most a factor of $O(\log(1/\epsilon))$. We let $B(\epsilon,N)$ denote this extended runtime as a function of $\epsilon$ and $N$.

\begin{theorem}
\label{thm:gwsr-ce}
    When players in a game with stochastic rewards $x$ select actions using $\B$ for $T \geq B(\epsilon/4, N)$ rounds, 
    the sequence of action profiles is an $\epsilon$-correlated equilibrium for the game, where the expectation is taken with respect to the tensor 
    distribution as well as $\B$.
\end{theorem}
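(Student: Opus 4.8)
The plan is to bound the swap-gain of the realized play measured against the mean tensor $\bar\theta$ by inserting and subtracting the realized utilities, producing a term controlled by $\B$'s swap-regret guarantee plus two ``martingale correction'' terms that account for the gap between each realized tensor $\theta^t$ and its mean. Fix a player $i$ and a swap function $f \in \Fsw$, let $a^1,\dots,a^T$ be the profiles played and $\theta^1,\dots,\theta^T$ the realized tensors, and recall $\bar\theta_{a,i} = \E_{\theta\sim r}[u_i(a_i; a_{-i}, \theta)]$. Since the uniform distribution $D$ over played profiles satisfies $\E_{a\sim D,\theta\sim r}[u_i(f(a_i);a_{-i},\theta)] = \frac{1}{T}\sum_t \bar\theta_{(f(a_i^t),a_{-i}^t),i}$, the $\epsilon$-CE condition is exactly that $\frac{1}{T}\sum_t(\bar\theta_{(f(a_i^t),a_{-i}^t),i} - \bar\theta_{(a_i^t,a_{-i}^t),i}) \le \epsilon$. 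I would split this sum into three pieces: $\sum_t(\bar\theta_{(f(a_i^t),a_{-i}^t),i} - u_i(f(a_i^t);a_{-i}^t,\theta^t))$, the realized swap-gain $\sum_t(u_i(f(a_i^t);a_{-i}^t,\theta^t) - u_i(a_i^t;a_{-i}^t,\theta^t))$, and $\sum_t(u_i(a_i^t;a_{-i}^t,\theta^t) - \bar\theta_{(a_i^t,a_{-i}^t),i})$.

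The middle piece is at most $\Reg^{\Fsw}(T)$, which by \Cref{prop:srmab} is $O(N\sqrt{NT\log N})$, so its average falls below $\epsilon/4$ once $T \ge B(\epsilon/4)$. Each outer piece is a sum of bounded martingale differences: conditioning on the history through the action choices at round $t$ but before $\theta^t$ is drawn, the increment $\bar\theta_{(f(a_i^t),a_{-i}^t),i} - u_i(f(a_i^t);a_{-i}^t,\theta^t)$ has conditional mean zero and magnitude at most $1$, since utilities lie in $[0,1]$ and $a_i^t, a_{-i}^t$ are measurable with respect to that history while $\theta^t$ is drawn independently. For the in-expectation claim, the tower property already makes each outer piece have expectation zero, so for any fixed $f$ the expected $\bar\theta$-swap-gain equals the expected realized swap-gain and the bound follows directly from \Cref{prop:srmab}. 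To obtain the high-probability control on the induced value estimates that the downstream algorithms (BILL, PLL) require, I would instead apply Azuma--Hoeffding to each outer martingale, yielding a deviation of $O(\sqrt{T\log(1/\delta)})$ with probability $1-\delta$.

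The main obstacle is uniformity over the swap functions: the equilibrium condition must hold for every $f \in \Fsw$ simultaneously, yet the martingale control is proved per-$f$. Since $\abs{\Fsw} = N^N$, a naive union bound over all $f$ inflates the deviation to $O(\sqrt{T(N\log N + \log(1/\delta))})$; I would instead union-bound over only the $N^2$ martingales indexed by source/target action pairs, $M_{a,b} = \sum_{t:\, a_i^t = a}(u_i(b;a_{-i}^t,\theta^t) - \bar\theta_{(b,a_{-i}^t),i})$, and recombine any given $f$ as $\sum_a M_{a,f(a)}$, which by Cauchy--Schwarz stays $O(\sqrt{NT\log(N/\delta)})$ and is therefore lower-order in $N$ than the swap-regret term, so it is comfortably absorbed. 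Making the failure probability $\delta$ polynomially small forces $\log(1/\delta) = O(\log(1/\epsilon))$ into the Azuma bound, which is the source of the slightly longer horizon $B(\epsilon,N)$ versus $B(\epsilon)$; this extra factor only binds when $\epsilon$ is small relative to $N$, matching the stated $O(\log(1/\epsilon))$ overhead. Collecting the three pieces, each at most $\epsilon/4$ after $T \ge B(\epsilon/4,N)$ rounds, gives average $\bar\theta$-swap-gain below $\epsilon$ for every $f$ and every player, which is exactly the $\epsilon$-correlated equilibrium condition.
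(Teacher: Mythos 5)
Your proposal follows essentially the same route as the paper's proof: decompose the $\bar\theta$-measured swap-gain into the realized swap-regret (controlled by \Cref{prop:srmab}) plus martingale deviation terms measuring the gap between each realized tensor $\theta^t$ and $\bar\theta$, control those by Azuma--Hoeffding with a union bound, and absorb the resulting $\log(1/\epsilon)$ overhead into the extended runtime $B(\epsilon, N)$. Two points of genuine divergence are worth flagging. First, for the in-expectation claim you invoke the tower property per fixed $f$, which yields $\max_{f}\E[\text{gain}_f^{\bar\theta}] \le \epsilon/4$; the paper instead bounds $\E[\max_f \text{gain}_f^{\bar\theta}]$ (the expectation of the swap regret itself), which is strictly stronger and is what \Cref{cor:gwsr-ce-whp} relies on when boosting to high probability. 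Your per-$f$ argument does suffice for the literal ``for all $f$'' quantifier in the definition of an $\epsilon$-correlated equilibrium and is cleaner for that purpose, but you should be explicit that the downstream use requires the uniform (high-probability) version, which you do go on to supply. Second, your union bound over the $N^2$ pairwise martingales $M_{a,b}$ rather than over all $N^N$ swap functions is a genuine refinement of the paper's argument; however, the claimed $O(\sqrt{NT\log(N/\delta)})$ via Cauchy--Schwarz needs each $|M_{a,f(a)}|$ to concentrate at scale $\sqrt{T_a}$ (the number of rounds with $a_i^t = a$), and since $T_a$ is random, plain Azuma over the full horizon only gives $\sqrt{T}$ per pair; you would need Freedman's inequality or a time-uniform/peeling argument to get the $\sqrt{T_a}$ scaling. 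Even with the cruder $O(N\sqrt{T\log(N/\delta)})$ bound the deviation is still absorbed by the $N^3$ swap-regret term after at most an $O(\log(1/\epsilon))$ extension of the horizon, so the conclusion stands either way; the paper's coarser $N^N$ union bound achieves the same end with failure probability $\epsilon/4$ converted to an expectation bound by paying the worst-case deviation of $1$ on the failure event.
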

    
The proof is given in Appendix A.2. By running $\B$ several times, we can boost the expected regret bound for each player to hold with high probability over the randomness of  while simultaneously obtaining accurate estimates of the {\it value} of this process for each player; we use this form of the result in the analysis for later algorithms.


\begin{corollary}
\label{cor:gwsr-ce-whp}
  When all agents in a game with stochastic rewards $x$ play according to $\B$ for at least $\frac{2 \log\parens{5M/\delta} }{\eta^2} \cdot B(\epsilon/8, N)$ rounds, simultaneously restarting $\B$ every $B(\epsilon/8, N)$ rounds, the resulting sequence of actions is an $(\epsilon/2 + \eta/2)$-correlated equilibrium for $x$ with probability at least $1 - \delta/5$.
 
 Further, let $V_i^{\B}(x) = \E_{\B,r}\left[\frac{1}{T}\sum_{t=1}^T u_i(a_i^t; a^t_{-i}, \theta^t) \right]$ and let $\hat{V}_i^{\B}(x)$ be the average utility received by player $i$ over all rounds.  
 With probability at least $1 - 2 \delta/5$, $\abs{V_i^{\B}(x) - \hat{V}_i^{\B}(x) } \leq \eta/2$ simultaneously for all players.

 Additionally, the computed estimate is within $\eta$ of player $i$'s expected average reward for playing the game according to the resulting policy distribution with probability at least $1 - 2\delta/5$. 
\end{corollary}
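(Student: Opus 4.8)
The plan is to treat the entire run as $K = 2\log(5M/\delta)/\eta^2$ consecutive, independent blocks, each an execution of $\B$ of length $L = B(\epsilon/8, N)$ with fresh internal randomness and freshly drawn reward tensors, and then to handle the three claims as three separate concentration arguments. Because the game and the tensor distribution $r$ are fixed and $\B$ is restarted at the beginning of every block, the per-block statistics are i.i.d., which is exactly what makes the boosting go through. Throughout I write $\bar\theta$ for the expected reward tensor and $D$ for the empirical distribution over the $T = KL$ realized action profiles.

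For the first claim I would define, for player $i$ and block $k$, the realized per-block average swap regret with respect to $\bar\theta$,
\[
  \rho_{i,k} = \max_{f \in \Fsw}\frac{1}{L}\sum_{t \in \text{block } k}\bigl(\bar\theta_{(f(a_i^t); a_{-i}^t),i} - \bar\theta_{a^t,i}\bigr),
\]
which lies in $[0,1]$ and, since $L = B((\epsilon/2)/4,\, N)$, has expectation at most $\epsilon/2$ by \Cref{thm:gwsr-ce} applied with parameter $\epsilon/2$. The key structural observation is that the maximum over swap functions is subadditive across blocks: for any fixed $f$ the overall average equals $\frac{1}{K}\sum_k$ of its per-block averages, so the whole-sequence swap regret of $D$ satisfies $\max_f \frac{1}{T}\sum_t(\cdots) \le \frac{1}{K}\sum_k \rho_{i,k}$. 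Applying Hoeffding's inequality to the i.i.d.\ bounded variables $\{\rho_{i,k}\}_k$ gives $\frac{1}{K}\sum_k \rho_{i,k} \le \epsilon/2 + \eta/2$ except with probability $\exp(-K\eta^2/2) = \delta/(5M)$, and a union bound over the $M$ players shows $D$ is an $(\epsilon/2 + \eta/2)$-correlated equilibrium with probability at least $1 - \delta/5$.

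The remaining two claims concern the estimate $\hat V_i^{\B}(x) = \frac{1}{T}\sum_t u_i(a^t,\theta^t)$. For the second, the per-block average utilities $\hat V_{i,k} = \frac{1}{L}\sum_{t\in k} u_i(a^t,\theta^t) \in [0,1]$ are i.i.d.\ across $k$ with common mean $V_i^{\B}(x)$, and $\hat V_i^{\B}(x) = \frac{1}{K}\sum_k \hat V_{i,k}$; a two-sided Hoeffding bound together with a union bound over players yields $\lvert V_i^{\B}(x) - \hat V_i^{\B}(x)\rvert \le \eta/2$ for all $i$ with probability at least $1 - 2\delta/5$. For the third claim I would instead compare $\hat V_i^{\B}(x)$ to the reward $\frac{1}{T}\sum_t \bar\theta_{a^t,i}$ that player $i$ expects from playing the empirical distribution $D$: since each $\theta^t$ is drawn independently of $a^t$, the sequence $u_i(a^t,\theta^t) - \bar\theta_{a^t,i}$ is a bounded martingale difference sequence, so Azuma--Hoeffding over all $T \ge K$ rounds and a union bound give deviation at most $\eta$ with probability at least $1 - 2\delta/5$.

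The main obstacle is the first claim, specifically justifying the subadditivity step. The per-block guarantees of \Cref{thm:gwsr-ce} only control $\B$ within each independent restart, whereas the correlated-equilibrium condition for $D$ involves a single maximizing swap function applied uniformly across all blocks; one cannot bound the latter directly, and the reduction succeeds only because that single maximum is dominated by the average of the per-block maxima, which are the i.i.d.\ quantities to which concentration applies. Once this reduction is in place the three failure probabilities $\delta/5,\ 2\delta/5,\ 2\delta/5$ are calibrated, through the $\log(5M/\delta)$ factor in $K$, to sum to $\delta$, which is the form the later algorithms will union bound against.
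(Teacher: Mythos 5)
Your proposal is correct and follows essentially the same route as the paper: partition the run into i.i.d.\ blocks delimited by the simultaneous restarts, bound the whole-sequence swap regret by the average of the per-block swap regrets (since the partition may use a different swap function per block), invoke \Cref{thm:gwsr-ce} with parameter $\epsilon/2$ for the per-block expectation, and apply Hoeffding plus a union bound over players for each of the three claims. The only cosmetic difference is that for the third claim you run Azuma--Hoeffding over all $T$ rounds of the identity-function martingale rather than Hoeffding over the $K$ per-block averages, which yields the same bound.
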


An extension of this method to Bayesian games is presented in Appendix A.3, which we make use of in analyzing Algorithm 4 (\Cref{thm:sc-fhsg-ce}).

\subsection{Subgame Value Estimates} 
We define a notion
of the {\it subgame value} for an agent at a state-step pair $(x,h)$
in a stochastic game, similiar to that in \Cref{def:ce-fhsg},
which is specified with respect to a learning algorithm $\B$. 
Henceforth we will refer to $(x,h)$ simply as a {\it pair}. 
We will define this recursively.
Note that a pair $(x, H)$ in a finite-horizon stochastic game is equivalent to a game with stochastic rewards, as all action profiles result in termination of the episode. 
If 

all agents 
play according to private copies of a bandit algorithm $\B$ for $T$ rounds,
the average reward for each agent over the period can be viewed as a random variable, where the  
expected value for agent $i$ is
given by:
\begin{align*}
    V_i^{\B}(x, H) =&\; \E_{\{a^t \} \sim(\B)_{i \in [M]}, \theta \sim r(x, H)} \left[\frac{1}{T} \sum_{t=1}^T u_i(a_i^t ; a_{-i}, \theta) \right].
\end{align*}
This will be in $[0,1]$ for all agents.
We can also view other pairs $(x, h)$ as games with stochastic rewards, where the immediate reward for an agent is augmented with their value of the state they transition to.
Values of states in steps prior to $H$ will represent the expected reward of an agent in the remainder of the episode when all agents play at each state according to $\B$ at each pair, augmenting their immediate payoffs at a pair with the value of the pair they transition to.  
Suppose $V^{\B}_i(x', h')$ is defined for all $x'\in \X$ and for all $h' > h$. Then,
\begin{align*}
    V^{\B}_i(x, h) =&\; \E \left[\frac{1}{T} \sum_{t=1}^T u_i(a_i^t ; a_{-i}, \theta) + V_i^{\B}(\tau(a, x), h+1)\right],
\end{align*}
where the expectation is taken over the randomness of each copy of $\B$ as well as sampled reward tensors and transition functions.
These will be in $[0,H-h+1]$, but throughout, we will assume that rewards are scaled to $[0,1]$ before being given to $\B$. 
These subgame values we have defined represent the utility which an agent can obtain in expectation if they use a copy of $\B$ at each state and know all downstream subgame values. 
Subgame values can be equivalently defined using downstream value {\it estimates} $\hat{V}_i$, and we obtain such estimates from Corollary 3.1 which are accurate with high probability.

\subsection{An Efficient Offline Algorithm}

If we are not constrained to learning online through entire trajectories, and can sample reward tensors and transition functions from any state-step pair (as oracles with constant-time query access), 
there is a straightforward offline algorithm for computing an EFCE which is a product distribution across pairs.

\paragraph{Algorithm 1: Backward-Inductive Local Learning.}
\begin{itemize}
    \item Use a copy of a bandit algorithm $\B$ for each player
        to compute an approximate correlated equilibrium and value estimates $\hat{V}_i(x, H)$ for each player and pair, as in Corollary 3.1.  
    \item By backward induction, compute approximate equilibria and value estimates for each pair $(x, h)$ in the same manner, augmenting players' rewards at state $x$ with value estimates for $(x', h+1)$, where $x'$ is the visited state in step $h+1$ for that round. 
    \item Return the product distribution of computed sequences of action profiles across all pairs.
\end{itemize}
\begin{theorem}\label{thm:bill}
BILL computes an $\epsilon$-EFCE in $\tilde{O}(\poly(M, N, S,H, 1/\epsilon))$ time.
\end{theorem}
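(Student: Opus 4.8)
The plan is to prove both claims---the runtime bound and the $\epsilon$-EFCE guarantee---by backward induction on the step $h$, exploiting that when BILL processes a pair $(x,h)$ the downstream value estimates $\hat V_i(\cdot, h+1)$ have already been fixed by the earlier (backward) portion of the computation. Conditioned on those estimates, each pair $(x,h)$ is literally a game with stochastic rewards whose augmented reward is $u_i(a,\theta) + \hat V_i(\tau(a,x), h+1)$, with the randomness coming jointly from $\theta \sim r(x,h)$ and the transition $\tau$. After rescaling the augmented rewards (which lie in $[0, H-h+1]$) into $[0,1]$, Corollary~3.1 applies verbatim: the empirical action-profile distribution $\mu_{x,h}$ produced by $\B$ is, with high probability, a per-pair correlated equilibrium for this augmented game, and the computed $\hat V_i(x,h)$ is an accurate estimate of its value. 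I would first take a union bound over all $SH$ pairs (replacing $\delta$ by $\delta/(SH)$ in Corollary~3.1) so that all of these per-pair guarantees hold simultaneously.

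Next I would control the value-estimate error. Let $\tilde V_i(x,h)$ denote the \emph{true} value of the product distribution $D = \bigotimes_{x,h}\mu_{x,h}$ when followed from $(x,h)$ onward (i.e.\ the quantity $V^{\pi}_i(x,h)$ for $\pi \sim D$), and let $\Delta_h = \max_{x,i}\lvert \hat V_i(x,h) - \tilde V_i(x,h)\rvert$. Both $\hat V_i(x,h)$ and $\tilde V_i(x,h)$ are expectations of $u_i + (\text{continuation})$ under the \emph{same} distribution $\mu_{x,h}$; they differ only in that the former uses the estimate $\hat V_i(\cdot,h+1)$ as the continuation while the latter uses the true value $\tilde V_i(\cdot,h+1)$. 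Since Corollary~3.1 guarantees $\lvert \hat V_i(x,h) - \E_{\mu_{x,h}}[u_i + \hat V_i(\tau,h+1)]\rvert \le \eta$, this yields the recursion $\Delta_h \le \eta + \Delta_{h+1}$, hence $\Delta_1 = O(H\eta)$.

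The crux is the EFCE guarantee, which I would obtain by bounding $G_h := \max_{f \in \Fsw_i,\,x,\,i}\,\bigl(V^{\pi,f}_i(x,h) - V^{\pi}_i(x,h)\bigr)$, the largest benefit any agent can extract by deviating from step $h$ onward, and showing $G_1 \le \epsilon H$. Fixing a pair $(x,h)$ and a swap $f$, I would telescope the one-step difference $V^{\pi}_i(x,h) - V^{\pi,f}_i(x,h)$ and insert $\pm \hat V_i(\cdot, h+1)$ on both the recommended transition $\tau(a,x)$ and the deviated transition $\tau(f(a_i,x,h),a_{-i},x)$. This splits the difference into (i) the difference of augmented rewards under recommended versus swapped play, which is $\ge -\epsilon'$ by the per-pair correlated-equilibrium property of $\mu_{x,h}$---crucially, because $D$ is a product distribution, seeing the recommendation at $(x,h)$ conveys no information about recommendations elsewhere, so a swap function's local decision is exactly the deviation bounded by Corollary~3.1; (ii) an estimate-error term on the recommended continuation, bounded by $\Delta_{h+1}$; and (iii) a term on the deviated continuation, which I bound using $V^{\pi,f}_i(\cdot,h+1) \le V^{\pi}_i(\cdot,h+1) + G_{h+1}$ together with $\Delta_{h+1}$. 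Collecting these gives the recursion $G_h \le \epsilon' + G_{h+1} + 2\Delta_{h+1}$, so $G_1 = O(H\epsilon' + H^2\eta)$; choosing $\epsilon' = \Theta(\epsilon)$ and $\eta = \Theta(\epsilon/H)$ makes this at most $\epsilon H$, establishing the $\epsilon$-EFCE property.

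Finally, the runtime: there are $SH$ pairs, and at each one BILL runs the high-probability procedure of Corollary~3.1 for $\frac{2\log(5M\cdot SH/\delta)}{\eta^2}\cdot B(\epsilon'/H, N)$ rounds, where the extra $1/H$ inside $B$ absorbs the rescaling of augmented rewards into $[0,1]$. Since $B(\cdot, N)$ is polynomial in $N$ and the inverse accuracy, and $\epsilon' = \Theta(\epsilon)$, $\eta = \Theta(\epsilon/H)$, every factor is polynomial in $M, N, S, H, 1/\epsilon$ (and logarithmic in $1/\delta$), giving the claimed $\tilde{O}(\poly(\cdot))$ bound. I expect the main obstacle to be the bookkeeping in the third step: the CE-regret error $\epsilon'$ and the estimate error $\eta$ accumulate and interact across the $H$ levels, and because augmented rewards grow with $H-h$ the per-pair precision targets must be driven down by a $\poly(H)$ factor---so the delicate part is verifying that these coupled error terms still sum to only $\epsilon H$ while all per-pair targets remain $\epsilon/\poly(H)$, keeping the total round count polynomial.
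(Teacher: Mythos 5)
Your proposal is correct and follows essentially the same route as the paper, whose proof of this theorem is obtained by simplifying the PLL analysis: per-pair application of Corollary~3.1 with a union bound over all $SH$ pairs, a value-estimate error recursion across steps, and exactly the decomposition ``regret at $(x,h)$ $\le$ local regret $+\;2\times$ downstream estimation error $+$ downstream regret.'' The only quantitative difference is that, once the $(H-h+1)$ rescaling of augmented rewards is tracked through both recursions, the per-pair accuracy target ends up as $\eta=\Theta(\epsilon/H^2)$ rather than $\Theta(\epsilon/H)$ (as the paper uses in the PLL analysis), which you already flag as the delicate bookkeeping and which does not affect the $\poly(M,N,S,H,1/\epsilon)$ conclusion.
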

The proof is quite similar to the error propagation analysis for \Cref{thm:fhsg-ce}.

\subsection{Parallel Local Learning}
\label{sec:inductive-bandits}

PLL essentially simulates BILL in a decentralized manner when used by all agents by computing estimates of subgame values for each agent over a series of {\it epochs},
which are batches of many trajectories of the game,
by using a no-swap-regret algorithm $\B$ at each state. We say that a state is {\it locked} when it is visited enough to obtain an accurate value estimate, and estimates are reset whenever a value estimate for a downstream step is updated.
We terminate once an epoch elapses where no new states are locked.

A key point of difficulty here is that
visitation probabilities may shift drastically when value estimates change; the sequence of actions taken when all players use $\B$ at a pair may be quite sensitive to small changes in rewards even for just one player. 
We show that the number of epochs before termination is at most exponential in $H$, at which point
the distribution over action profiles at each pair truncated at the last reset constitutes an approximate correlated equilibrium for the subgame at that pair (given downstream values) almost surely, with the exception of pairs which are visited infrequently under the final value estimates.
We then show how regret bounds compose to give an $\epsilon$-EFCE for the entire finite-horizon game when considering action profiles sampled independently across pairs from the aforementioned distributions.

\paragraph{Algorithm 2: Parallel Local Learning.}
Initialize $\hat{V}_i^{\B}(x, h) = H-h+1$ for each pair $(x, h)$, as well as a visit counter $c(x,h)$ for each pair set to 0. 
Let $W = \tilde{\Theta}\parens{\frac{S^4 H^7}{\epsilon^2}}$ and $L = \Theta\parens{\frac{S^2 H^4 W B}{\epsilon^2}}$.
Initialize a copy of $\B$ at each pair, specified to run for $B = B(\frac{\epsilon}{16H}, N)$ steps.
 Until termination, run the following procedure for each epoch:
\begin{itemize}
\item Run for $L$ trajectories, using $\B$ at each pair, counting rounds and updating actions for a copy of $\B$ only when the corresponding pair is visited. Record rewards as the sum of the observed reward as well as the current value estimate for the {\it next} pair visited in that trajectory, scaled to [0,1]. 
\item Consider the last step  $h \in [H]$ where an unlocked pair's counter crossed $\frac{16H^2WB}{\epsilon}$ in the epoch. 
Lock all unlocked states at this step with appropriate estimates which were previously unlocked, compute value estimates $\hat{V}_i^{\B}(x, h)$ as the average reward over the corresponding
$\frac{16H^2WB}{\epsilon}$ visits,
then reset all copies of $\B$, counters, and value estimates at {\it earlier} pairs $(h' < h)$.
\item Terminate if no pair's counter crosses $\frac{16H^2WB}{\epsilon}$ in the epoch.
\end{itemize}
Note that when all players use this algorithm, locking and unlocking is synchronized across players.
The action profile distributions for each pair {\it after they are last unlocked} converge to an approximate EFCE for the game, when we consider action profiles sampled independently for each pair, with a running time at most exponential in the horizon and polynomial in all other parameters.    
\begin{theorem}\label{thm:fhsg-ce}
    PLL terminates after at most $(S+1)^H+1$ epochs.
    After termination, for each pair (x, h), consider the uniform distribution over action profiles $D(x,h)$ played since that pair was last reset. 
    Let $D$ be the distribution over policy profiles where the action profile for each pair $(x,h)$ is sampled independently from $D(x, h)$. 
With probability at least $1 - \delta$, $D$ is an $\epsilon$-EFCE for the game.
\end{theorem}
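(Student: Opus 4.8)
The statement combines a termination bound with an equilibrium guarantee, and I would treat them separately. The conceptual backbone of the correctness half is the observation already flagged before \Cref{def:ce-fhsg}: because $D$ is a \emph{product} distribution across pairs, the recommendation an agent sees at a pair $(x,h)$ carries no information about recommendations anywhere else, so verifying the EFCE inequality reduces to checking, at each pair, that the local action-profile distribution $D(x,h)$ is an approximate correlated equilibrium of the one-stage ``augmented'' game in which each agent's immediate utility is shifted by the frozen downstream value estimate of the pair it transitions to. That augmented object is exactly a game with stochastic rewards, so \Cref{thm:gwsr-ce} and \Cref{cor:gwsr-ce-whp} are the right hammers.

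\textbf{Termination.} I would bound the number of epochs by a purely combinatorial potential argument on the locked steps. Each epoch locks one step $h$ and resets every earlier step $h'<h$; since visitation (and hence which unlocked pairs cross $16H^2WB/\epsilon$) can shift when estimates change, the ``deepest freshly-crossed'' step need not move monotonically, and re-locking a later step wipes out earlier progress. Encoding the locked steps and their value configurations as a base-$(S+1)$ counter, I would charge each reset of the steps below $h$ to a distinct (re)locking event at $h$, and bound the number of times a single step can be relocked before it is permanently fixed by $S+1$; this yields a recursion of the form $g(h)=(S+1)\,g(h+1)+O(1)$ with $g(H)=O(1)$, giving $g(1)\le (S+1)^H$ and the extra $+1$ for the final no-crossing epoch that triggers termination.

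\textbf{Correctness.} The core is a backward induction from $h=H$ to $h=1$ establishing simultaneously, with probability $1-\delta$ after a union bound over all $O(SH(S+1)^H)$ (pair, epoch) events, that (i) $D(x,h)$ is an $O(\epsilon/H)$-correlated equilibrium of the augmented one-stage game at $(x,h)$, and (ii) the stored estimate $\hat V_i(x,h)$ is within $O(\epsilon/H)$ of the true value $V_i^{\pi}(x,h)$ under $\pi\sim D$. The base case $h=H$ is \Cref{cor:gwsr-ce-whp} verbatim. For the inductive step, I would use that after the \emph{last} reset of $(x,h)$ all downstream estimates are frozen and, by induction, accurate, so the rewards fed to $\B$ at $(x,h)$ realize the augmented game up to the propagated downstream error; \Cref{cor:gwsr-ce-whp} then gives the local CE property and an accurate value, and a one-step averaging/martingale bound controls the gap between the empirical average (under trajectory visitation) and $\E_{a\sim D(x,h),\theta,\tau}[u_i + \hat V_i(\cdot,h+1)]$, which the induction hypothesis equates with $V_i^{\pi}(x,h)$. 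With (i)--(ii) in hand I would assemble the global bound by a performance-difference (telescoping) identity: for any agent $i$ and $f\in\Fsw_i$, the gain $V_i^{\pi,f}(x,1)-V_i^{\pi}(x,1)$ equals the sum over steps $h$ of the reach-weighted one-step advantage of $f$ at the visited pair, with reach weights taken under the $f$-deviated dynamics, and each advantage is at most the local CE/swap regret there plus the downstream estimate error, i.e.\ $O(\epsilon/H)$. Summing the $\le H$ terms gives a total deviation gain of at most $\epsilon H$, which is exactly the guarantee of \Cref{def:ce-fhsg}. Finally I would dispose of pairs too rarely visited to be locked accurately by bounding their total reach probability under $D$ by $O(\epsilon)$ (a pair failing to cross $16H^2WB/\epsilon$ within $L$ trajectories must have small reach probability under the final estimates), so their combined contribution to any value or regret is $O(\epsilon H)$ and folds into the slack.

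\textbf{Main obstacle.} I expect the hardest part to be making the error propagation quantitatively tight enough to survive the union bound over exponentially many epochs while holding the per-step error to $O(\epsilon/H)$. Two coupled difficulties drive this: the visitation distribution over pairs moves as estimates change, so ``accurate estimate'' must be proved with respect to the final frozen downstream values (the reason only the post-last-reset history $D(x,h)$ is used), and the failure probability of each local application of \Cref{cor:gwsr-ce-whp} must be pushed to $\delta/(SH(S+1)^H)$, so the epoch length has to absorb the resulting $\log((S+1)^H)=H\log(S+1)$ factor hidden in the $\tilde\Theta$ defining $W$ (and hence $L$). Showing that the sampling, martingale, and propagated errors all stay within the $O(\epsilon/H)$ per-step budget under the worst reachable configuration is where the delicate constants in $W$ and $L$ are spent.
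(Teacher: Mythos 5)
Your proposal follows essentially the same route as the paper's proof: the same $(S+1)$-ary recursion $g(h)=(S+1)g(h+1)$ with $g(H)=1$ for the termination bound, the same backward induction pairing local correlated-equilibrium guarantees from \Cref{cor:gwsr-ce-whp} with value-estimate accuracy at each pair, the same telescoping decomposition of regret into local regret plus twice the downstream estimation error plus downstream regret, the same disposal of rarely-visited (unlocked) pairs by bounding their total reach probability, and the same observation that the union bound over exponentially many epochs is absorbed into the $\log$ factors in $W$ and $L$. The one step you leave implicit---transferring ``small reach probability under the final estimates'' from the visitation distribution induced by the learning dynamics ($q_{\hat V}$) to the one induced by sampling independently from $D$ ($q_D$)---is handled in the paper by a total-variation composition lemma applied stepwise along the horizon, but your plan already contains the ingredients for it.
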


A key step in the analysis of PLL is to bound the number of times that value estimates can change, thus bounding the number of required epochs before estimates stabilize. 
\begin{lemma}\label{lemma:pll-epochs}
    The algorithm runs for at least $H$ epochs, and at most $(S+1)^H+1$ epochs. 
\end{lemma}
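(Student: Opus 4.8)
The plan is to prove the two bounds separately, with the upper bound being the substantive part. The structural fact I would establish first is that resets propagate only \emph{backward} in time: locking the pairs at a step $h$ resets the copies of $\B$, counters, and value estimates at every earlier pair $(h', h'<h)$, but never touches any pair at a step $h' \ge h$. Consequently, step $H$ is never reset, and more generally the locking activity at a given step is insulated from everything that happens at strictly earlier steps. This one-directional dependence is exactly what lets me analyze the horizon recursively.

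For the upper bound I would induct on the horizon, counting only \emph{non-terminating} epochs (those in which at least one pair crosses the threshold); the total is then one larger. Let $Q(h)$ bound the number of non-terminating epochs for a horizon-$h$ instance. Each non-terminating epoch locks pairs at exactly one step---the largest step with a fresh crossing---and resets all earlier steps. Since step $H$ is never reset and each step-$H$ locking locks at least one of the $S$ states there, at most $S$ epochs can lock at step $H$. These at most $S$ events cut the timeline into at most $S+1$ maximal \emph{segments} containing no step-$H$ locking; within a segment the step-$H$ value estimates are frozen, so the pairs at steps $1,\ldots,H-1$ run exactly as a fresh horizon-$(H-1)$ instance (reset to initialization at the segment's start, with the frozen step-$H$ values folded into the step-$(H-1)$ rewards). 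By the inductive hypothesis each segment contributes at most $Q(H-1)$ non-terminating epochs, giving $Q(H) \le S + (S+1)Q(H-1)$ with base case $Q(1) \le S$. This solves to $Q(H) \le (S+1)^H - 1$, so the algorithm runs for at most $(S+1)^H$ epochs, within the claimed bound $(S+1)^H + 1$.

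For the lower bound I would argue that the locked configuration cannot stabilize in fewer than $H$ epochs, because values must propagate one step at a time through the backward resets. First, a pigeonhole argument shows that at every step $h$ some state is visited at least $L/S$ times per epoch, and the parameter settings make $L/S \ge 16H^2 WB/\epsilon$, so at each step there is a ``frequent'' pair whose counter crosses the threshold unless it is already locked; hence every such pair must be locked before a terminating epoch can occur. Letting $t_h$ denote the last epoch in which a step-$h$ pair is locked, I would show $t_h > t_{h+1}$: the locking at step $h+1$ in epoch $t_{h+1}$ resets step $h$, so the frequent step-$h$ pair is unlocked afterward and must be re-locked at some strictly later epoch. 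Chaining $t_1 > t_2 > \cdots > t_H \ge 1$ gives $t_1 \ge H$, so the algorithm runs for at least $H$ epochs.

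The hard part will be making the upper-bound induction fully rigorous in the presence of resets: I must verify that a segment genuinely constitutes a faithful fresh copy of the horizon-$(H-1)$ process---that the reset truly reinitializes all counters, estimates, and copies of $\B$, and that the frozen downstream values turn each step-$(H-1)$ pair into a legitimate game with stochastic rewards---and that no step-$H$ locking epoch can occur without locking a new state. The bookkeeping that separates terminating from non-terminating epochs also needs care, since counting total epochs naively with base case $S+1$ inflates the closed form beyond $(S+1)^H + 1$; tracking non-terminating epochs via $Q(h)$ is what yields the clean exponential bound.
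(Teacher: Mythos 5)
Your proof is correct and takes essentially the same approach as the paper: both rest on the observation that resets propagate only backward, that each of the $S$ step-$H$ pairs can be locked at most once, and that the number of times a pair at step $h$ can be re-locked is controlled by the number of downstream locking events, yielding the same geometric recurrence (your $Q(H)\le S+(S+1)Q(H-1)$ is the horizon-induction form of the paper's $g(h)=(S+1)g(h+1)$, and your lower bound via the decreasing chain of last-locking epochs is the paper's pigeonhole argument made slightly more explicit). Your segment decomposition and the separation of terminating from non-terminating epochs are a somewhat tidier bookkeeping of the identical counting, and in fact land at the marginally sharper bound $(S+1)^H$.
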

\begin{proof}
    All pairs start unlocked, and some pair in each step is visited at least $\frac{16H^2 WB}{\epsilon}$ per epoch by pigeonhole, so the algorithm will not terminate unless there is a locked pair for every step. 
States are only moved from unlocked to locked at one step per epoch, and so there must be at most $H$ epochs to lock some pair in all steps.

We can bound the number of epochs by bounding the number of times a pair at some step can become locked. Observe that a locked pair at step $H$ will only become locked in one epoch and will never become unlocked afterwards. A pair at step $H-1$ will become locked in at most $S$ epochs, as it will only become locked after at least one pair at step $H$ is locked, and then can be unlocked at most $S-1$ times for the remaining unlocked pairs at step $H$.   
In general, the number of epochs in which a state can become locked is bounded by the number of epochs in which a downstream state can become locked. 
Let $g(h)$ denote this bound on the number of epochs in which a pair at step $h$ can be locked, which is given by:
\begin{align*}
    g(h) =&\; \sum_{i = h+1}^H S g(i) \\ 
    =&\; Sg(h+1) + \sum_{i = h+2}^H S g(i) \\ 
    =&\; (S+1)g(h+1) \\ 
    =&\; (S+1)^{H-h} g(H) \\
    =&\; (S+1)^{H-h},
\end{align*}
as $g(H) = 1$.
The total number of epochs before termination is then bounded by
\begin{align*}
    1 + \sum_{i=1}^H Sg(i) = g(0) + 1 = (S+1)^H + 1, 
\end{align*}
accounting for the last epoch in which no states are locked. 
\end{proof}

Given this, much of the remainder of the analysis is to analyze the propagation of estimation error and regret terms to give an explicit bound on the regret after estimates have stabilized.

\subsection{Efficient Learning in Fast-Mixing Stochastic Games}
\label{subsec:fast-mixing}

PLL generates an EFCE in polynomial time only when $H$ is a constant. 
For ``fast-mixing'' games we give a related algorithm, FastPLL, 
which converges to an $\epsilon$-EFCE in finite-horizon stochastic games which are {\it $\gamma$-fast-mixing} in time $\tilO(\poly(S, N, H, 1/\epsilon, 1/\gamma)$. 
We will say that a finite-horizon stochastic game is $\gamma$-fast-mixing if all pairs
are visited with probability at least $\gamma$ in a trajectory
when each agent selects a policy uniformly at random,
i.e.\ for each $(x, h)$:
\begin{align*}
    \Pr_{\pi \sim \Pi, z}\left[ x \text{ is visited at step } h \right] \geq&\; \gamma.
\end{align*}

Unlike the previous algorithm, the fast-mixing assumption allows us to avoid unlocking states once they are locked, as we can ensure sufficient visitation with high probability. As a result, we show that polynomial time convergence to a correlated equilibrium is possible after only $H$ epochs.

\paragraph{Algorithm 3: Fast PLL.}
Let $B = B\parens{\frac{ \epsilon }{8H}, N}$, and let the epoch length (in trajectories) be given by 
$L = \tilde{\Theta}\parens{\frac{BH^4}{\gamma^{1.5} \epsilon^2}}$.
Run $H$ epochs, one corresponding to each step (beginning with step $H$) as follows:

\begin{itemize}
    \itemsep0em
    \item \emph{Epoch for Step $h$:} Use a copy of $\B$ to select actions at each pair $(x, h)$, augmenting rewards with computed values for pairs $(x', h+1)$ transitioned to for the next step (if $h < H$). At the end of the epoch, let $\hat{V}_i^{\B}(x, h)$ be the average reward received from all completed runs of $\B$.
    \item \emph{Upstream ($h' < h$):} Select actions uniformly at random for each pair. 
    \item \emph{Downstream ($h' > h$):} Use $\B$ at each signal as in the epoch for step $h'$, 
        augmenting rewards with value estimates for pairs transitioned to.
      Restart $\B$ after every $B$ rounds in which it is used, which can include rounds from a prior epoch. 
\end{itemize}

The notion of convergence here is the same as that for PLL.

\begin{theorem}\label{thm:fast-mixing-ce}
    After Algorithm 3 terminates, for each pair $(x, h)$, consider the uniform distribution over action profiles $D(x,h)$ played since epoch $H-h+1$ began. 
    Let $D$ be the distribution over policy profiles where the action profile for each pair $(x,h)$ is sampled independently from $D(x, h)$. 
With probability at least $1 - \delta$, $D$ is an $\epsilon$-EFCE for the game.
\end{theorem}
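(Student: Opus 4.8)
The plan is to mirror the backward-inductive error-propagation analysis behind BILL (\Cref{thm:bill}) and PLL (\Cref{thm:fhsg-ce}), but to exploit the fast-mixing condition in order to replace PLL's delicate locking/unlocking mechanism with a single clean backward pass over the $H$ steps. The central objects are the true subgame values $V_i^{\B}(x,h)$ and their computed estimates $\hat{V}_i^{\B}(x,h)$. I would prove, by backward induction on $h$ from $H$ down to $1$, that every estimate is accurate and that each per-pair distribution $D(x,h)$ is an approximate correlated equilibrium for the game with stochastic rewards obtained by augmenting the immediate rewards at $(x,h)$ with the \emph{true} downstream values $V_i^{\B}(\cdot,h+1)$.

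First I would handle visitation. During the epoch for step $h$ (epoch $H-h+1$), all upstream pairs are played uniformly at random, and whether $(x,h)$ is reached depends only on the upstream portion of the trajectory; hence by $\gamma$-fast-mixing each pair $(x,h)$ is reached with probability at least $\gamma$ per trajectory. Over the $L=\tilde{\Theta}(BH^4/(\gamma^{1.5}\epsilon^2))$ trajectories of the epoch, a Chernoff bound gives at least $\gamma L/2$ visits with high probability, and a union bound over the $SH$ pairs makes this hold everywhere. This visit count suffices to complete the $\gamma L/(2B)$ independent runs of $\B$ that \Cref{cor:gwsr-ce-whp} requires to output both an approximate equilibrium and an accurate value estimate at that pair.

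Next I would run the induction. At step $H$ each pair is literally a game with stochastic rewards, so \Cref{cor:gwsr-ce-whp} gives that $D(x,H)$ is an $O(\epsilon/H)$-correlated equilibrium with $\abs{\hat{V}_i^{\B}(x,H)-V_i^{\B}(x,H)}\le\eta$ for a small per-step error $\eta$. For $h<H$, assuming inductively that all downstream estimates lie within $\Delta_{h+1}$ of the truth, the copy of $\B$ at $(x,h)$ plays the stochastic-reward game whose rewards are augmented with the \emph{estimated} values $\hat{V}_i^{\B}(\cdot,h+1)$ (fixed since their epoch preceded this one). \Cref{cor:gwsr-ce-whp} yields an approximate equilibrium and accurate estimate for \emph{that} game; since replacing the estimated augmentation by $V_i^{\B}(\cdot,h+1)$ perturbs every reward by at most $\Delta_{h+1}$, $D(x,h)$ is an approximate equilibrium for the true augmented game and $\Delta_h\le\Delta_{h+1}+\eta$. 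Unwinding gives global error at most $H\eta$, and the parameter settings make $\eta=O(\epsilon)$ so this stays within budget.

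The main obstacle is the final regret-composition step, converting these local guarantees into a global $\epsilon$-EFCE. For a fixed player $i$ and swap function $f\in\Fsw_i$, I would bound $V^{\pi,f}_i(x,1)-V^{\pi}_i(x,1)$ by backward induction through the recursive definitions of the counterfactual and on-path value functions. The crucial structural fact is that $D$ is a \emph{product} distribution across pairs: opponents' recommendations at any downstream pair are independent of the upstream trajectory, so even though deviating via $f$ at $(x,h)$ sends the player to a different successor $\tau^{*}$, the distribution of opponent play subsequently faced at each downstream pair is unchanged. Consequently the benefit of $f$ decomposes additively over the at most $H$ pairs visited in a trajectory, and at each pair it is bounded by that pair's local swap-regret plus the propagated estimation error $\Delta_{h+1}$. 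Summing these $O(\epsilon/H)$ per-pair terms over the $H$ steps yields total deviation benefit at most $\epsilon H$, matching \Cref{def:ce-fhsg}. The delicacy is tracking the swap-regret and value-estimation errors simultaneously through the recursion and checking that the chosen $L$, $B=B(\epsilon/(8H),N)$, and the implicit constants make each per-step contribution $O(\epsilon/H)$; a final union bound over all pairs, players, and the three failure events (visitation, equilibrium, estimate accuracy) collects the total failure probability to at most $\delta$.
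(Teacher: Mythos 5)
Your proposal follows essentially the same route as the paper's proof: a Hoeffding-based visitation lemma exploiting that upstream play is uniform during each epoch so fast-mixing applies directly, per-pair application of \Cref{cor:gwsr-ce-whp}, backward-inductive propagation of the value-estimation error, and the composition bound (local regret plus twice the downstream error plus downstream regret) summed over the $H$ steps. The only detail you omit is accounting for the single possibly-truncated run of $\B$ at the end of the final epoch, which the paper handles by charging it maximal regret as a lower-order term.
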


\section{When Can We Get Simultaneous No-Regret?}
\label{sec:faster}

While PLL gives us a way to generate an EFCE, as well as find stable value estimates for all pairs and players, it is not itself a no-regret algorithm.
For single-controller stochastic games, where only one player (the controller) affects transitions, we show that an NFCCE can be reached without shared randomness when the controller uses an algorithm for adversarial MDPs with fixed transitions and each follower uses $\B$ repeatedly in parallel across each pair. Further, both PLL and FastPLL can again be extended to simultaneous no-swap-regret algorithms in the case where {\it shared randomness} is available for all players.

\subsection{Efficient Learning in Single-Controller Stochastic Games}
\label{sec:single-control}

When only one player affects transitions, their problem is equivalent to an adversarial MDP with fixed transitions. The Shifted Bandits U-CO-REPS algorithm from \cite{Rosenberg2019OnlineSS} obtains sublinear regret in finite-horizon adversarial MDPs of this form with only bandit feedback and when the transition function is unknown. 
We show that running Shifted Bandits UC-O-REPS for the ``controller'' bounds their appropriate notion of regret against arbitrary ``followers''.
The learning problem for the followers can be viewed as a set of {\it Bayesian} games with shifting signal distributions.
In Appendix A.3 we give an extension of our analysis of games with stochastic rewards to Bayesian games, which generalizes the convergence result of \cite{HST15} to remove the ``independent private value'' assumption, and which we can use to prove a regret bound for a modification of $\B$ (which we call a ``parallel bandit'' algorithm, denoted $\B_S$) against arbitrary opponents. The regret bound holds even when the ``signal distribution'' for the Bayesian game shifts over time, and the followers will use a copy of $\B_S$ for each time-step. 
As such, all agents can efficiently reach an NFCCE by black-box regret minimization.
   
\paragraph{Algorithm 4: S.B.\ U-CO-REPS + P.B.}

Let $B_L(\epsilon)$ be the time after which S.B.\ U-CO-REPS has per-step regret $\epsilon$, which is $\poly(H,S,N,1/\epsilon)$,
 and let $B_F(\epsilon) = B(\epsilon/S, N)$.
Run for $T = \frac{8\log(M/\delta)}{\epsilon^2} \cdot \max \parens{B_L(\epsilon/8), B_F(\epsilon/8)} $ total trajectories, where each player acts as follows:
\begin{itemize}
    \item \emph{Controller:} Select policies for each trajectory using S.B. U-CO-REPS, restarting every $B_L(\epsilon / 8)$ trajectories.
    \item \emph{Followers:} Select policies using a copy of $\B_S$ for Bayesian games at each step, counting only immediate rewards, and restarting every $B_F({\epsilon}/{8})$ trajectories.  
\end{itemize}
This specifies a policy for each player prior to the start of each trajectory, and this sequence of policies will converge to an approximate NFCCE. 

\begin{theorem}\label{thm:sc-fhsg-ce}
With probability at least $1 - \delta$, the uniform distribution over the sequence of policy profiles played by Algorithm 4 is an $\epsilon$-NFCCE for the game.
\end{theorem}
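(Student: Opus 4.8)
The plan is to bound, for every player separately, the expected regret against the best fixed policy incurred by Algorithm~4, then boost these expected bounds to a simultaneous high-probability guarantee and translate the result into the NFCCE condition. The central structural observation specific to single-controller games is that a follower's action never affects transitions: if a follower $i$ replaces its play by a fixed policy $\psi_i$, the realized sequence of states (and the actions of all other players) is unchanged, since the state trajectory is determined by the controller's actions and the random transitions alone. Consequently $\E[V^{\pi,\psi_i}_i(x,1)] - \E[V^{\pi}_i(x,1)]$ decomposes as a sum over the $H$ steps of the change in follower $i$'s \emph{immediate} reward, with the expectation taken over the unchanged state distribution at each step. This is exactly the kind of quantity controlled by a per-step Bayesian-game regret bound, and it is what makes the single-controller case tractable where the general case is hard.

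For the controller, I would fix the (arbitrary, from the controller's viewpoint) sequence of follower policies. Because only the controller's actions influence transitions, each trajectory presents the controller with an adversarial MDP having fixed, unknown transitions and bandit feedback, which is precisely the setting handled by S.B.\ U-CO-REPS. Restarting every $B_L(\epsilon/8)$ trajectories, the algorithm guarantees that within each block the controller's expected regret against \emph{any} fixed policy $\psi$ is at most order $\epsilon H/8$ per trajectory, so averaging over blocks bounds the controller's expected per-trajectory regret $\frac{1}{T}\sum_t (\E[V^{\pi^t,\psi}_c(x,1)] - \E[V^{\pi^t}_c(x,1)])$ by the same quantity.

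For each follower $i$, I would use the step-wise decomposition above to reduce the problem to $H$ independent online Bayesian games, one per step $h$, in which the ``signal'' is the state reached at step $h$ and the follower maps signals to actions; the realized rewards are those of a game with stochastic rewards, depending on $\theta$ and on the other players' actions. The crucial complication is that, because the controller is \emph{learning} and restarts periodically, its policy---and hence the signal distribution at each step---shifts across trajectories. The generalization of \cite{HST15} developed in Appendix~A.3 supplies exactly the needed guarantee: using a copy of $\B_S$ per step bounds the follower's regret against the best fixed signal-to-action map \emph{against arbitrary opponents and even under a shifting signal distribution}, with the benchmark evaluated on the realized sequence of signal distributions. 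Summing these $H$ per-step bounds (the choice $B_F(\epsilon)=B(\epsilon/S,N)$ absorbs the up-to-$S$ signals) bounds follower $i$'s expected per-trajectory regret, using that the deviation to $\psi_i$ leaves each step's signal distribution invariant.

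Finally, I would convert expectations to a high-probability statement. Partitioning the $T$ trajectories into $K=\tfrac{8\log(M/\delta)}{\epsilon^2}$ super-blocks of length $\max(B_L(\epsilon/8),B_F(\epsilon/8))$ ensures that in each super-block every player's expected regret is bounded by the per-block guarantees above, since both the controller and each follower restart an integer number of times within a super-block. The realized regret contributed by a super-block is a bounded random variable whose conditional expectation is controlled, so Azuma--Hoeffding across the $K$ super-blocks shows the empirical average regret deviates from its expectation by at most an additive $O(\epsilon)$ term except with probability $\delta/M$ per player; a union bound over the $M$ players yields the simultaneous guarantee with probability $1-\delta$. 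Collecting the controller and follower contributions with the concentration slack keeps every player's average per-trajectory regret below $\epsilon H$, which is precisely the statement that the uniform distribution $D$ over the played policy profiles is an $\epsilon$-NFCCE. I expect the follower analysis of the previous paragraph---formalizing the reduction to shifting-signal Bayesian games and checking that the $\B_S$ regret benchmark coincides with the NFCCE counterfactual---to be the main obstacle; the controller bound is essentially a black-box invocation of S.B.\ U-CO-REPS, and the boosting step is standard.
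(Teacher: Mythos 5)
Your proposal follows essentially the same route as the paper's proof: it decomposes the guarantee into a controller bound (a black-box invocation of S.B.\ U-CO-REPS on the fixed-transition adversarial MDP, combined with a martingale argument to pass from realized to expected reward tensors) and a per-follower bound (the stepwise reduction to shifting-signal Bayesian games handled by $\B_S$, justified by the fact that follower deviations cannot alter the state trajectory), then boosts both to a simultaneous high-probability statement via concentration over blocks and a union bound over players. The only minor point to tighten is your claim that every player restarts an integer number of times per super-block of length $\max(B_L,B_F)$ --- the paper instead allows one incomplete run per player and absorbs its worst-case contribution as a small additive term.
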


\subsection{Simultaneous No-Swap-Regret with Shared Randomness}
\label{subsec:shared-random}
If players have access to shared randomness at each step, they can play according to the equilibrium generated by PLL or FastPLL in future rounds without any explicit communication. 
The total regret bound is sublinear in $T$ when the ``target average reget'' for the PLL (or FastPLL) portion is appropriately calibrated so that the any regret incurred at the beginning does not overwhelm the average regret for the entire sequence of play.

\paragraph{Algorithm 5: PLL with Shared Randomness (PLL-SR).}
Let $\epsilon_1 = \tilde{\Theta}\parens{\sqrt[7]{\frac{N^3 S^{O(H)}}{T}}}$ and $\epsilon_2 = \tilde{\Theta} \parens{\sqrt[5]{\frac{N^3 H^4 \gamma^{2/3}}{T}}}$.
\begin{itemize}
    \item Run PLL, specified for an $\epsilon_1$-EFCE, until termination, or FastPLL for an $\epsilon_2$-EFCE.
    \item At each step after termination, each player receives the same uniform random number $w \in [W^*]$ and plays the $w$th action of the final high-probability local CE sequence (from \Cref{cor:gwsr-ce-whp}), where $W^*$
    is the appropriate length of the sequence for PLL or FastPLL.
\end{itemize}

\begin{theorem}\label{thm:no-swap-shared-random}
With respect to $\Fsw$,
PLL-SR has regret $\tilde{O}(T^{\frac{6}{7}})$ and FastPLL-SR has regret $\tilde{O}(T^{\frac{4}{5}})$.
\end{theorem}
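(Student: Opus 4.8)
The plan is to split each player's cumulative swap regret over the $T$ trajectories into the cost incurred while PLL (resp.\ FastPLL) runs to termination and the cost incurred afterward, when the generated equilibrium is replayed using shared randomness, and then to choose $\epsilon_1$ (resp.\ $\epsilon_2$) so that these two contributions balance. Write $T_1$ for the number of first-phase trajectories. For PLL, \Cref{lemma:pll-epochs} bounds the number of epochs by $(S+1)^H + 1$, and substituting the epoch length $L = \Theta(S^2 H^4 W B / \epsilon_1^2)$ with $W = \tilde{\Theta}(S^4 H^7/\epsilon_1^2)$ and $B = B(\epsilon_1/16H, N) = \tilde{O}(H^2 N^3/\epsilon_1^2)$ gives $T_1 = \tilde{\Theta}(S^{O(H)} N^3/\epsilon_1^6)$; for FastPLL the $H$ epochs of length $L = \tilde{\Theta}(B H^4/(\gamma^{1.5}\epsilon_2^2))$ give $T_1 = \tilde{\Theta}(\poly(H)\, N^3/(\gamma^{1.5}\epsilon_2^4))$. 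Since the reward of any trajectory lies in $[0,H]$, the swap regret accumulated during the first phase is at most $H\cdot T_1$, i.e.\ $\tilde{O}(\epsilon_1^{-6})$ for PLL and $\tilde{O}(\epsilon_2^{-4})$ for FastPLL when the $T$-dependence is isolated.

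For the second phase I would first argue that replaying the stored local correlated-equilibrium sequences with shared randomness exactly realizes the product distribution $D$ of \Cref{thm:fhsg-ce} (resp.\ \Cref{thm:fast-mixing-ce}): at each visited pair $(x,h)$ the common index $w\in[W^*]$ selects the $w$th stored action profile, and because indices are drawn independently across pairs, the induced policy-profile distribution is exactly $D$. Conditioning on the high-probability event that $D$ is an $\epsilon_1$-EFCE (resp.\ $\epsilon_2$-EFCE), \Cref{def:ce-fhsg} bounds the expected per-trajectory gain of every $f\in\Fsw_i$ by $\epsilon_1 H$ (resp.\ $\epsilon_2 H$). The $T-T_1$ second-phase trajectories are independent (fresh shared randomness, transitions, and reward noise at each step), so for a fixed $f$ the realized cumulative gain is a sum of i.i.d.\ $[0,H]$-bounded variables with mean at most $\epsilon_1 H$; Hoeffding's inequality followed by a union bound over the $N^{NSH}$ functions in $\Fsw_i$ gives, with probability $1-\delta$, a second-phase swap regret of at most $T\epsilon_1 H + \tilde{O}(H\sqrt{T})$, whose deviation term is of lower order than the targets $T^{6/7}$ and $T^{4/5}$.

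Combining the phases, each player's total swap regret is $\tilde{O}(H\,T_1 + T\epsilon_1 H)$. Balancing $\epsilon_1^{-6}\asymp T\epsilon_1$ forces $\epsilon_1 = \tilde{\Theta}((N^3 S^{O(H)}/T)^{1/7})$---exactly the value specified by PLL-SR---and makes both terms $\tilde{O}(T^{6/7})$; the analogous balance $\epsilon_2^{-4}\asymp T\epsilon_2$ for FastPLL yields $\epsilon_2 = \tilde{\Theta}(T^{-1/5})$ with the prescribed dependence on $N,H,\gamma$, and regret $\tilde{O}(T^{4/5})$. The step I expect to be the main obstacle is the second-phase accounting: one must verify that shared randomness genuinely reconstructs the \emph{product}-distribution EFCE so that the per-trajectory inequality of \Cref{def:ce-fhsg} applies trajectory-by-trajectory, and then convert the in-expectation EFCE guarantee into a high-probability bound on the hindsight-optimal swap function, folding the union bound over $\Fsw_i$, the logarithmic factors, and the failure probabilities of \Cref{thm:fhsg-ce} and \Cref{cor:gwsr-ce-whp} into the $\tilde{O}(\cdot)$ notation. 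A secondary point is checking that the calibration of $\epsilon_1,\epsilon_2$ makes $T_1 = o(T)$, so that the first phase indeed completes within the horizon of $T$ trajectories and the balancing above is valid.
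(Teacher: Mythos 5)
Your proposal is correct and follows essentially the same route as the paper: bound the first phase by worst-case regret times the (runtime of PLL/FastPLL as a function of $\epsilon_1$ or $\epsilon_2$), bound the second phase by the EFCE guarantee of \Cref{thm:fhsg-ce} (resp.\ \Cref{thm:fast-mixing-ce}) since shared randomness replays the product distribution $D$ exactly, and balance the two terms to recover the prescribed $\epsilon_1,\epsilon_2$ and the $\tilde{O}(T^{6/7})$ and $\tilde{O}(T^{4/5})$ rates. The extra Hoeffding-plus-union-bound step you flag for the second phase is a reasonable elaboration of a point the paper leaves implicit, not a departure from its argument.
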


\begin{acknowledgements} 
    We thank Christos Papadimitriou and Tim Roughgarden for helpful feedback and suggestions throughout this work, and Kiran Vodrahalli and Utkarsh Patange for illuminating discussions regarding the hardness result. 
\end{acknowledgements}

\clearpage
\bibliography{ref}
\clearpage
\appendix

\section{Omitted Proofs}
\label{sec:appendix}

In \Cref{subsec:proofs-adv-mdp}, we show hardness for the ``adversarial MDP'' problem. In \Cref{sec:proofs-gwsr}, we analyze the use of bandit algorithms for reaching correlated equilibria in games with stochastic rewards. In \Cref{sec:bayes}, we introduce a general formulation of Bayesian games, for which obtain analogues of the results in \Cref{sec:proofs-gwsr}, which will be later used for analysis of learning in ``single-controller'' stochastic games. We prove our main results regarding PLL in \Cref{sec:proofs-fhsg}, and FastPLL in \Cref{sec:proofs-fast-fhsg}. Our single-controller result is shown in \Cref{sec:proofs-sc-fhsg}, and our ``shared randomness'' result for extending PLL is given in \Cref{subsec:proofs-sr}.

\subsection{Proofs for Section 3: Hardness of Learning in Adversarial MDPs}
\label{subsec:proofs-adv-mdp}
Here we prove our hardness result for the finite-horizon or ``episodic'' adversarial MDP problem, where both transitions and rewards can change arbitrarily between episodes. We assume the adversary can pick the starting state as well, which is without loss of generality up to increasing the horizon by 1. This problem was shown to be at least as hard as learning parities with noise by Abbasi-Yadkori et al.~[2013], and their reduction involves creating episodic MDPs with $H= \Theta(S)$. We strengthen this to $\textsf{NP}$-hardness, and for a horizon length of only 3. We do this by showing that the batch version of the problem, which we call the ``multi-MDP'', problem is at least as hard as $\textsf{3-SAT}$, and as such is $\textsf{NP}$-hard to approximate within a factor of $\frac{7}{8} + \epsilon$, for any $\epsilon > 0$. By an online-to-batch reduction, this implies that there is no algorithm for the episodic adversarial MDP problem with poly-time per-round computation and 
$O(T^{1-\delta} \poly(S))$ regret, for any $\delta > 0$, and for any dependence on $N$ and $H$,
unless $\textsf{NP} \subseteq \textsf{BPP}$. Like Abbasi-Yadkori et al.~[2013], our reduction only needs deterministic transitions, and so the hardness result also holds for the simpler ``adversarial online shortest path problem''.

\subsubsection{The Offline Problem.}

Consider the batch version of the adversarial MDP problem, which we call the ``multi-MDP problem'', where we are given a set of MDPs $\M$. Each MDP $M \in \M$ has a identical state and action spaces $\X$ and $\A$, as well as episode length $H$, but the transition and reward functions $p$ and $r$ can differ arbitrarily. Given $\M$ as input, the goal for the maximization problem is to output a single (possibly randomized and non-stationary) policy $\pi$ which maximizes the average per-episode reward across all MDPs. The decision problem is to determine if any single policy achieves average reward at least $R$ across $\M$. We assume that the per-episode reward in each MDP $M$ is in $[0,d]$ for all policies, and that all instantaneous rewards are non-negative. 

\begin{theorem}\label{thm:multi-mdp}
The decision version of the multi-MDP problem is $\textsf{\textup{NP}}$-complete for horizon length $H \geq 3$. Further, the maximization version is $\textsf{\textup{NP}}$-hard to approximate within a factor of $\frac{7}{8} + \epsilon$, for all $\epsilon > 0$.
\end{theorem}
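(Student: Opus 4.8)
The plan is to reduce from \textsf{MAX-3-SAT}. Given a 3-CNF formula $\phi$ with $n$ variables and $m$ clauses, I would build a multi-MDP instance with one MDP $M_j$ per clause $C_j$, all sharing a common state space containing one state $s_i$ per variable together with an absorbing ``win'' sink, action set $\A = \{0,1\}$ (read as a truth value assigned to the variable of the current state), and horizon $H = 3$. Each $M_j$ uses deterministic transitions that read the three literals of $C_j$ in order, one per step: starting at the variable state of the first literal, the chosen action is that variable's truth value; if it satisfies the literal the MDP jumps to the win sink (which pays reward $1$ once), and otherwise it advances to the variable state of the second literal, repeating at steps $2$ and $3$. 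A trajectory thus collects reward exactly $1$ iff some literal of $C_j$ is satisfied, so the per-episode reward of $M_j$ is the indicator that $C_j$ is satisfied, and the average reward over $\M$ equals the fraction of clauses satisfied by the assignment encoded by $\pi$. This is why $H \geq 3$ is exactly what is needed: it lets the gadget read all three literals of a clause.

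For membership in \textsf{NP}, I would first argue that the optimum of the average-reward objective over this fixed finite family of MDPs is attained by a \emph{deterministic} non-stationary policy. Holding the action distributions at all other pairs fixed, the objective is affine in the distribution at any single pair $(s,h)$, so a vertex of the simplex (a pure action) is optimal there; iterating this coordinatewise yields a deterministic optimizer. A deterministic non-stationary policy $\pi : \X \times [H] \to \A$ has polynomial description, and since transitions are deterministic its average reward can be evaluated in polynomial time by simulating each $M_j$; hence a certificate can be guessed and checked, placing the decision problem in \textsf{NP}.

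The crux of the hardness direction, and the step I expect to be the main obstacle, is that a non-stationary policy assigns an \emph{independent} action to each pair $(s_i, h)$, so if a variable state $s_i$ is read at different steps in different clause-MDPs, the policy can ``have $x_i$ both ways'' and satisfy strictly more clause-MDPs than any single truth assignment, which would break the reduction. To rule this out I would force every variable to be read at a single canonical step via a gap-preserving leveling preprocessing: split each variable $x_i$ into copies $x_i^{(1)}, x_i^{(2)}, x_i^{(3)}$ earmarked for the three steps, rewrite each clause so that its literals are read at three distinct steps using the appropriate copies, and add enough weighted equality-enforcing clause-MDPs (encoding $x_i^{(h)} \leftrightarrow x_i^{(h')}$, whose two literals already sit at distinct steps) that any near-optimal policy must make the copies agree, collapsing back to a consistent assignment for $\phi$. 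With consistency enforced, the average reward tracks the maximum fraction of simultaneously satisfiable clauses: a satisfying assignment yields reward $1$, while in the gap case no policy can exceed the corresponding bound, since violating an equality gadget costs more than it can gain. The inapproximability within $\tfrac{7}{8} + \epsilon$ then transfers from H{\aa}stad's optimal inapproximability of \textsf{MAX-3-SAT}, completing both the \textsf{NP}-completeness of the decision version and the inapproximability of the maximization version for $H \geq 3$.
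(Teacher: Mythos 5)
You have correctly isolated the crux---that a non-stationary policy can assign a variable different values at different steps---and your fix for the \emph{decision} version works: splitting each variable into per-step copies, leveling the clauses, and adding equality gadgets whose violation forces some gadget-MDP to reward $0$ gives a sound reduction for the threshold $R=1$, and your coordinatewise-affineness derandomization argument for \textsf{NP}-membership is essentially the paper's derandomization lemma. This is a genuinely different route from the paper, which instead creates \emph{six} MDPs per clause (one per permutation of the three literals over the three timesteps) and argues that a non-stationary policy induces an assignment per permutation class, so that high average reward on the full collection forces high average reward on at least one permutation-subset, from which a single good truth assignment is extracted.

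The gap is in your inapproximability claim. Your equality gadgets must carry enough aggregate weight to make inconsistency unprofitable: violating the copies of $x_i$ can gain up to $\deg(x_i)$ clause-MDPs, so the gadget block for $x_i$ needs weight $\Omega(\deg(x_i))$, hence total gadget weight $W=\Omega(\sum_i \deg(x_i))=\Omega(m)$. But the multi-MDP objective is the \emph{unweighted average} over all MDPs, so the \textsf{MAX-3-SAT} gap between $1$ and $\frac{7}{8}$ gets diluted to a gap between $1$ and $1-\frac{1}{8}\cdot\frac{m}{W+m}$: a consistent policy satisfying all equalities and $\frac{7}{8}$ of the clauses already achieves average reward $1-\frac{m/8}{W+m}$, which exceeds $\frac{7}{8}+\epsilon$ once $W\geq m$. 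So your construction yields \textsf{NP}-hardness of $(1-c)$-approximation for some constant $c<\frac{1}{8}$ depending on the gadget weight, but not the stated $\frac{7}{8}+\epsilon$ factor; and the tension is unavoidable in your scheme, since making $W=o(m)$ makes cheating on equalities profitable. The paper's permutation trick sidesteps this entirely because every MDP in its instance is a genuine clause gadget---there is no auxiliary mass to dilute the gap---and the factor $\frac{7}{8}+\epsilon$ survives the averaging over the six permutation-subsets. To repair your argument for the second half of the theorem you would need to either eliminate the consistency gadgets (e.g., by adopting the permutation construction) or start from a problem whose inapproximability threshold tolerates the dilution.
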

\begin{proof}
We reduce from $\textsf{3-SAT}$. First we constrain ourselves to only considering deterministic policies. 
The idea is to encode each of the $m$ clauses of a $\textsf{3-SAT}$ formula (on $n$ variables) as set of six $(n+1)$-state MDPs. 
The states correspond to each of the variables as well as a ``done'' state, and the action space at each state is $\{0,1\}$, corresponding to an assignment for the variable.
Assume without loss of generality that the variables in the input formula are lexicographically ordered. Create one MDP for each of the six possible permutations of the literals in a clause; the episode will consist of three steps. For each of these MDPs, let the starting state $s_i$ at step $h=1$ correspond to the first literal $x_i$ in the ordering. If $x_i$ evaluates to True on input $\pi(s_i, 1)$ for a policy $\pi$, we transition to the ``done'' state, otherwise we transition to the state for the second literal $s_j$. Transitions proceed here accordingly for $\pi(s_j, 2)$ and likewise at the third state $s_k$ for $\pi(s_k, 3)$. Once at the ``done'' state, we remain there until the end of the episode regardless of action. Transitioning to the ``done'' state from some other state yields a reward of 1 and all other transitions yield a reward of 0. 

If the input formula is satisfiable, then the stationary policy corresponding to the satisfying assignment will clearly obtain an average reward of 1. 
A non-stationary policy $\pi$ defines six (not necessarily distinct) assignments of values to the $n$ variables, for each permutation of the 3 timesteps. 
We can split the $6n$ MDPs into 6 sets, each of size $n$ corresponding to one permutation, which are evaluated on the appropriate assignment of values. If the input formula is unsatisfiable, at least one MDP in each set will result in a reward of 0
Deciding whether any policy achieves an average reward of 1 or at most $1-\frac{1}{n}$ is clearly in $\textsf{NP}$, as the best policy acts as a certificate, and so the problem is $\textsf{NP}$-complete. 

This reduction also implies hardness of approximation.
As is well-known, it is $\textsf{NP}$-hard to approximate $\textsf{MAX-3-SAT}$ within a factor of $\frac{7}{8} + \epsilon$, for all $\epsilon > 0$.
Suppose we can could compute a policy which obtains average reward at least $\frac{7}{8} +\epsilon$ in a set of MDPs with maximum possible average reward of 1. We can then apply to the above reduction to any input $\textsf{3-SAT}$ formula, resulting in a set of MDPs with a possible average reward of 1 if and only if the formula is satisfiable. If we can obtain average reward at least $\frac{7}{8} +\epsilon$ on this set, we must have average reward at least $\frac{7}{8} + \epsilon$ on the subset of MDPs corresponding to some permutation of literals. We can then extract an assignment from that permutation of timesteps in the policy which corresponds to an assignment which satisfies at least a $\frac{7}{8} + \epsilon$ fraction of the clauses in the input formula, implying the desired hardness result.

Any randomized policy can be derandomized without loss in average reward in polynomial time, implying that randomization does not help from a complexity perspective.
\begin{lemma}
For any set of finite-horizon MDPs, any randomized policy can be converted to a deterministic non-stationary policy in polynomial time without decreasing average reward. 
\end{lemma}
\begin{proof}
Consider the uniform distribution over MDPs in the set $\M$ and the induced distribution over states in the final timestep. By the Markov property and the assumption of a fixed policy, the conditional distribution of actions at a state is independent of the MDP as well as the sequence of states visited. Each action with positive support has some expected reward when taking the expectation over the MDP distribution, transitions, and previous action selections; playing the maximum action at each state does not decrease expected reward. We can apply this to each previous step by backward induction, as downstream conditional expected values for actions at each state are still defined, giving us a fully deterministic policy.
\end{proof}
As such, the hardness result holds even for algorithms which output randomized non-stationary policies.

\end{proof}

\subsubsection{Hardness for Regret Minimization and Black-Box NFCCEs in Stochastic Games}

We use Theorem 9 to prove our hardness result for quickly vanishing regret in the adversarial MDP problem.

\paragraph{Restatement of Theorem 2.}
\emph{
Assuming $\textsf{\textup{NP}} \not\subseteq \textsf{\textup{BPP}}$, there is no algorithm with polynomial time per-round computation which has  $O(T^{1 - \delta}\textup{poly} \cdot (S))$ regret algorithm for the adversarial MDP problem with $H \geq 3$, for any $\delta > 0$.
}

\begin{proof}
By the standard online-to-batch reduction from \cite{CCG04}, we can convert an algorithm with small regret to an algorithm for $\textsf{MAX-3-SAT}$. 
Suppose we had an algorithm with regret $O( T^{1 - \delta} S^k)$ for constants $k$ and $\delta$. Take $T \gg S^{k/\delta}$ but still polynomial in $S$ such that the average regret is $o(1)$. Apply the reduction from Theorem 9 to a $\textsf{3-SAT}$ instance on $S$ variables and then run the algorithm for $T$ steps, sampling from the uniform distribution over the constructed MDPs at each episode. By the main result (Theorem 4) from \cite{CCG04}, the empirically optimal policy over the historical sequence achieves a value within $o(1)$ of the optimum with high probability.
This would imply a polynomial time algorithm which beats a $\frac{7}{8} + \epsilon$ approximation for $\textsf{MAX-3-SAT}$, which is impossible unless $\textsf{\textup{NP}} \subseteq \textsf{\textup{BPP}}$ due to \cite{Has97}.
\end{proof}

This directly implies Corollary 2.1, where the horizon is increased to 4 to account for the starting state in a finite-horizon stochastic game being random rather than adversarial (in our reduction, one can add a ``starting state'' from which the adversary selects the next state).

\subsection{Games with Stochastic Rewards}
\label{sec:proofs-gwsr}

Recall that for a game with stochastic rewards, we consider all players running an adversarial bandit algorithm $\B$ (such as SR-MAB). A step in our analysis introduces an additional $\log(1/\epsilon)$ term beyond the runtime of SR-MAB for target average regret $\epsilon$, yet with less dependence on $N$. This is not an issue if $N$ is sufficiently large as a function of $\epsilon$, but if this is not the case we extend the runtime to that which would be required if $N = \Omega\left(\sqrt[3]{ \frac{\log(1/\epsilon)}{\log({\log(1/\epsilon)})} } \right)$, which can only increase average regret;
we denote this runtime function by $B(\epsilon, N)$.

\paragraph{Theorem 3.}
\emph{When players in a game with stochastic rewards $x$ select actions using $\B$ for $T \geq B(\epsilon/4, N)$ rounds, the sequence of action profiles is an $\epsilon$-correlated equilibrium for the game, where the expectation is taken with respect to the tensor distribution as well as $\B$. }

\begin{proof}[Proof of Theorem 3]
We begin with a lemma relating the runtime of SR-MAB to the term which we will use in our martingale analysis of the ``sampling error'' of the realized sequence of reward tensors versus the average tensor $\bar{\theta}$.

\begin{lemma}\label{lemma:action-dupl}
If
$N = \Omega\left(\sqrt[3]{ \frac{\log(1/\epsilon)}{\log({\log(1/\epsilon)})} } \right)$
then  $\frac{N^3 \log(N)  }{\epsilon^2} 
= \Omega\parens{\frac{N \log(N) + \log(1/\epsilon)}{\epsilon^2}}$.
\end{lemma}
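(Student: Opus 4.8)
The plan is to clear the common factor: since both sides carry a factor of $1/\epsilon^2$, the claim is equivalent to showing $N^3 \log N = \Omega\!\left(N \log N + \log(1/\epsilon)\right)$ under the hypothesis $N = \Omega\!\left((\log(1/\epsilon)/\log\log(1/\epsilon))^{1/3}\right)$. I would then handle the two summands on the right separately. Writing $L = \log(1/\epsilon)$ for brevity, the hypothesis reads $N \geq c\,(L/\log L)^{1/3}$ for some constant $c > 0$, and the target becomes $N^3 \log N = \Omega(N \log N + L)$.

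The first summand is immediate: for $N \geq 2$ we have $N^3 \geq N$ and $\log N > 0$, so $N^3 \log N \geq N \log N$, giving $N^3\log N = \Omega(N \log N)$ with constant $1$. The content lies entirely in the second summand, i.e.\ establishing $N^3 \log N = \Omega(L)$. Here I would first note that $N \mapsto N^3 \log N$ is increasing for $N \geq 1$ (its derivative is $N^2(3\log N + 1) > 0$), so it suffices to verify the bound at the smallest admissible value $N = c\,(L/\log L)^{1/3}$; any larger $N$ only increases the left-hand side. At this value, $N^3 = c^3\, L/\log L$ and $\log N = \tfrac{1}{3}\bigl(\log L - \log\log L\bigr) + \log c$.

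Substituting yields $N^3 \log N = c^3 \tfrac{L}{\log L}\bigl(\tfrac{1}{3}(\log L - \log\log L) + \log c\bigr)$, and the one genuine step is the asymptotic estimate $\log L - \log\log L = \Theta(\log L)$ as $\epsilon \to 0$: since $\log\log L = o(\log L)$, for all sufficiently small $\epsilon$ we have $\log L - \log\log L \geq \tfrac{1}{2}\log L$. Thus $\log N = \Omega(\log L)$, and $N^3 \log N \geq c^3 \tfrac{L}{\log L}\cdot \Omega(\log L) = \Omega(L)$. Combining the two summands gives $N^3 \log N = \Omega(N\log N + L)$, and reinstating the $1/\epsilon^2$ factor completes the argument.

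I expect the only real obstacle to be the bookkeeping around the nested logarithm: one must confirm both that the cube-root hypothesis forces $\log N$ to be a constant fraction of $\log L$ (so that the $\log N$ factor is large enough to cancel the $1/\log L$ arising from $N^3$), and that the lower-order $\log\log$ and constant terms do not reverse the inequality, which is precisely why the bound is asserted only asymptotically in $\epsilon$. Everything else reduces to monotonicity and a single cancellation.
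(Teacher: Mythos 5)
Your proof is correct and takes essentially the same approach as the paper's: reduce to showing $N^3\log N = \Omega(\log(1/\epsilon))$, substitute the threshold value $N = \Theta\left((\log(1/\epsilon)/\log\log(1/\epsilon))^{1/3}\right)$, and use the fact that $\log\log(1/\epsilon) - \log\log\log(1/\epsilon) = \Theta(\log\log(1/\epsilon))$ to cancel the denominator. The monotonicity reduction and the explicit handling of the $N\log N$ summand that you spell out are left implicit in the paper's one-display computation.
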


\begin{proof}[Proof of \Cref{lemma:action-dupl}]
It suffices to show that 
$N^3\log(N) = \Omega(\log(1/\epsilon))$.
Plugging in our expression for $N$, we have that
\begin{align*}
    N^3 \log N =&\; \Theta(N^3 \log(N^3) ) \\
    =&\; \Omega\left(\frac{\log(\frac{1}{\epsilon}) \cdot \left( \log\log(\frac{1}{\epsilon})- \log \log \log(\frac{1}{\epsilon}) \right) }{\log \log (\frac{1}{\epsilon})} \right) \\
    =&\; \Omega(\log(1/\epsilon)).
\end{align*}
\end{proof}
By the regret guarantee of $\B$, each player has expected average swap regret at most $\epsilon/4$ with respect to the sampled sequence of reward tensors $\{\theta^t\}_{t \in [T]}$, which we denote $\BReg_{\B}^{\{\theta^t\}}$.
For a player $i$, consider some swap function $f$. Let $X_{f}^t = u_i(f(a^t_i); a^t_{-i}, \theta^t ) - u_i(f(a^t_i); a^t_i, \bar{\theta})$ for an action profile and tensor $(a^t, \theta^t)$, i.e.\ the difference between this player's reward from using $f$ on $\theta^t$ versus the average tensor $\bar{\theta}$, given the action profile $a^t$. Let $Y_f^t = \sum_{j=1}^t X^t_f$. 
For a distribution over tensors, and any sequence of action profiles where $a_t$ is independent of $\theta^t$ given actions and tensors for $1,\ldots,t-1$, 
the sequence $Y_f^1,\ldots,Y_f^t$ is a martingale with respect to the sequence $X^f_t$. 
To see this, note that for any fixed $a^t$, 
$X_{f}^t$ 
is in $[-1,1]$ as rewards are in $[0,1]$,
and $\E[Y_{f}^t \mid  X_{f}^1,\ldots,X^{t-1}_{f}] = Y_{f}^{t-1}$, as
$\E[X_{f}^t \mid X_{f}^1,\ldots,X^{t-1}_{f}] = 0$
by the definition of $\bar{\theta}$. 

Let $T \geq \frac{32(N\log(N) + \log(8/\epsilon)) }{\epsilon^2} = \frac{32 \log(8N^N / \epsilon) }{\epsilon^2}$ by \Cref{lemma:action-dupl}.
By the Azuma-Hoeffding inequality we have that
\begin{align*}
    \Pr[\abs{Y^T_f} \geq \frac{\epsilon T}{4}] \leq&\; 2 \exp\left(\frac{-\epsilon^2 T}{32} \right) \\
    \leq&\; \frac{\epsilon}{4 N^N}.
\end{align*}

Union-bounding over all $f \in \Fsw$, we then have that 
\begin{align*}
    \max_{f \in \Fsw} ~ \abs{ \frac{1}{T}\sum_{t=1}^T u_i(f(a^t_i); a^t_{-i}, \theta^t ) - u_i(f(a^t_i); a^t_i, \bar{\theta})} \leq&\; \frac{\epsilon}{4} 
\end{align*}
with probability at least $1 - \frac{\epsilon}{4}$. As such, the average utility of a swap function on the sequence deviates from its expected utility on the distribution by at most $\epsilon/4$ with probability at least $\epsilon/4$, holding simultaneously for all functions, including the identity function $I$ (our benchmark for swap regret). 
As such, with probability $1-\epsilon/4$, 
the difference in swap regret on the sequence and the distribution, denoted by $\abs{ \BReg_{\B}^{\bar{\theta}} - \BReg_{\B}^{\{\theta_t\}} } $, is at most $\epsilon/2$.
Using the maximal deviation of 1 as a bound for the difference for the remaining probability, we then have that  
\begin{align*}
    \EE{ \abs{ \BReg_{\B}^{\bar{\theta}} - \Reg_{\B}^{\{\theta^t\}} } } \leq&\; (1 - \epsilon/4)\cdot \epsilon/2 + \epsilon/4 \leq \frac{3\epsilon}{4}. 
\end{align*}
Therefore by our bound on $\EE{\BReg_{\B}^{\{\theta^t\}}}$ and linearity of expectation:
\begin{align*}
    \EE{\BReg_{\B}^{\bar{\theta}} } =&\; \EE{\BReg_{\B}^{\{\theta^t\}} }  + \EE{  \BReg_{\B}^{\bar{\theta}} - \BReg_{\B}^{\{\theta^t\}} }  \\
    \leq&\; \EE{\BReg_{\B}^{\{\theta^t\}} }  + \EE{ \abs{ \BReg_{\B}^{\bar{\theta}} - \BReg_{\B}^{\{\theta^t\}} } } \\
    \leq&\; \epsilon. 
\end{align*}
As no player can improve average utility in expectation for $\bar{\theta}$ by more than $\epsilon$ with any swap function, the uniform distribution over the sequence of action profiles is an $\epsilon$-correlated equilibrium for $x$ when taking the expectation over both the profile sequence and the generating process using $\B$ and samples of reward tensors.  
\end{proof}

\begin{corollary}[Restatement of Corollary 3.1]
 When all agents in a game with stochastic rewards $x$ play according to $\B$ for at least $\frac{2 \log\parens{5M/\delta} }{\eta^2} \cdot B(\epsilon/8, N)$ rounds, simultaneously restarting $\B$ every $B(\epsilon/8, N)$ rounds, the resulting sequence of actions is an $(\epsilon/2 + \eta/2)$-correlated equilibrium for $x$ with probability at least $1 - \delta/5$.
 
 Further, let $V_i^{\B}(x) = \E_{\B,r}\left[\frac{1}{T}\sum_{t=1}^T u_i(a_i^t; a^t_{-i}, \theta^t) \right]$ and let $\hat{V}_i^{\B}(x)$ be the average utility received by player $i$ over all rounds.  
 With probability at least $1 - 2 \delta/5$, $\abs{V_i^{\B}(x) - \hat{V}_i^{\B}(x) } \leq \eta/2$ simultaneously for all players.

 Additionally, the computed estimate is within $\eta$ of player $i$'s expected average reward for playing the game according to the resulting policy distribution with probability at least $1 - 2\delta/5$. 

\end{corollary}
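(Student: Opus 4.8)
The plan is to boost the in-expectation guarantee of Theorem~3 to a high-probability one using the independence created by restarting $\B$. Write $K = \frac{2\log(5M/\delta)}{\eta^2}$, so the whole run is partitioned into $K$ blocks, each of length $B(\epsilon/8,N)$: within a block every player runs a fresh copy of $\B$ and the reward tensors are drawn independently from $r$, so the blocks are mutually independent and identically distributed. Since $B(\epsilon/8,N) = B((\epsilon/2)/4,N)$, Theorem~3 applied to a single block shows that the realized average swap regret of block $k$ for player $i$, $R_k^{(i)} = \max_{f \in \Fsw} \frac{1}{B(\epsilon/8,N)}\sum_{t \in \text{block }k}\parens{u_i(f(a_i^t);a_{-i}^t,\bar{\theta}) - u_i(a_i^t;a_{-i}^t,\bar{\theta})}$, satisfies $\E[R_k^{(i)}] \le \epsilon/2$; moreover $R_k^{(i)} \in [0,1]$, being nonnegative (the identity map is a valid $f$) and at most $1$ (rewards lie in $[0,1]$).

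For the equilibrium claim I would first decompose the swap regret of the overall empirical distribution $D$ across blocks. For a fixed $f$ the per-round advantage averages over blocks, so the swap regret of $D$ for player $i$ is $\max_f \frac{1}{K}\sum_k \mathrm{Adv}^{(k)}_i(f) \le \frac{1}{K}\sum_k R_k^{(i)}$, using $\max\circ\mathrm{avg}\le\mathrm{avg}\circ\max$ (here $\mathrm{Adv}^{(k)}_i(f)$ is the within-block average advantage of $f$). As the $R_k^{(i)}$ are i.i.d.\ in $[0,1]$ with mean at most $\epsilon/2$, Hoeffding gives $\Pr\parens{\frac{1}{K}\sum_k R_k^{(i)} \ge \epsilon/2 + \eta/2} \le \exp(-K\eta^2/2) = \delta/(5M)$, and a union bound over the $M$ players shows that $D$ is an $(\epsilon/2 + \eta/2)$-correlated equilibrium with probability at least $1 - \delta/5$.

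The two value claims use the same concentration machinery but centered differently. Letting $\bar{U}_k^{(i)}$ denote player $i$'s average realized utility in block $k$, these are i.i.d.\ in $[0,1]$ with mean $V_i^{\B}(x)$, so $\hat{V}_i^{\B}(x) = \frac{1}{K}\sum_k \bar{U}_k^{(i)}$ and two-sided Hoeffding plus a union bound give $\abs{\hat{V}_i^{\B}(x) - V_i^{\B}(x)} \le \eta/2$ with probability at least $1 - 2\delta/5$. For the final claim, the expected reward of actually playing the empirical distribution $D$ is $\frac{1}{T}\sum_t u_i(a^t,\bar{\theta})$, whose gap from $\hat{V}_i^{\B}(x) = \frac{1}{T}\sum_t u_i(a^t,\theta^t)$ is exactly $\frac{1}{T}\sum_t X_I^t$ for the identity swap function $I$---precisely the bounded martingale already analyzed in the proof of Theorem~3. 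Since $T \ge K = \frac{2\log(5M/\delta)}{\eta^2}$, Azuma--Hoeffding over the $T$ rounds and a union bound give that this gap is at most $\eta$ with probability at least $1 - 2\delta/5$.

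The one genuinely delicate point is the equilibrium decomposition: one must push the maximum over swap functions inside the average over blocks so that the concentrated quantity is the sum of per-block swap regrets (each pinned down in expectation by Theorem~3) rather than a single maximum over the whole run, which would not have a usable mean. Everything else is a routine Hoeffding estimate (over the $K$ independent blocks) or an Azuma--Hoeffding estimate (over the $T$ rounds, reusing the martingale from Theorem~3), with the means fixed by Theorem~3 and the definition of $\bar{\theta}$. Finally, a union bound over the three failure events ($\delta/5 + 2\delta/5 + 2\delta/5 = \delta$) shows all three conclusions hold simultaneously with probability at least $1 - \delta$, the form used in the downstream analyses.
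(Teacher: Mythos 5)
Your proposal is correct and follows essentially the same route as the paper's proof: partition the run into the $K$ independent blocks created by the restarts, bound the whole-sequence swap regret by the average of per-block swap regrets (pushing the max over $\Fsw$ inside the block average), pin down each block's mean via Theorem~3, and finish with Hoeffding over blocks (and the identity-function martingale for the final value claim) plus union bounds over players. The decomposition you flag as the delicate point is exactly the one the paper uses.
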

\begin{proof}[Proof of Corollary 3.1]
    The swap regret of a sequence is upper-bounded by the sum of the swap regret values of a uniform partition of the sequence, as the latter may use a different swap function on each sequence while the former is restricted to only using a single function. As such, we can bound the average regret of our sequence by averaging the average swap regret values between restarts.  
    
    Both average utility and average swap regret (with respect to $x$) over $B(\epsilon/8)$ are random variables taking values in $[0,1]$, and the mean of the latter is at most $\epsilon/2$ by Theorem 3. 
    Recall from the proof of Theorem 3 that the expected average reward deviation of the identity function on the sequence and distribution 
    has mean zero (by nature of it being a martingale), and it takes values in $[-1,1]$.
    The result then follows from applying Hoeffding's inequality to the average of the samples we receive of the random variables, bounding deviation by $\eta/2$ (or $\eta$), and union-bounding over all players and failure probabilities. 

\end{proof}

\subsection{Correlated Equilibria in Bayesian Games}
\label{sec:bayes}

We also give a convergence result for 
learning in Bayesian games. The Bayesian game formulation we consider is quite general (in particular, we remove the ``independent private value'' assumption from the model considered in \cite{HST15}, and allow signals and rewards to be arbitrarily correlated across players), and can be viewed as a partial-information generalization of games with stochastic rewards.
When all players use our described method, the sequence of policy profiles played by all players converges to an approximate Bayes correlated equilibrium in polynomial time. 

\begin{definition}[Bayesian Games]
    A Bayesian game $y = (\A, M, \psi, p, r, u)$ has $M$ players and is specified by a set of action profiles $\A = \times_{i \in [M]} \A_i$, a signal function $\psi : \X \rightarrow \Psi$ where $\Psi = \times_{i \in [M]} \Psi_i$, and a distribution over states $p \in \Delta(\X)$. Each state $x$ denotes a game with stochastic rewards, with its distribution over reward tensors given by $r: \X \rightarrow \Delta(\Theta)$. 
   Players' utilities, given by $u : \A \times \Theta \rightarrow [0,1]^{N}$, depend on the realization of $\theta \sim r(x)$. Players only observe a signal of the state $\psi_i(x)$, and never observe $\theta$ or $x$ directly.
\end{definition}
We assume that $\abs{\A_i}= N $ for all agents, and we will let $S_i = \abs{\Psi_i}$ and $S = \max_i S_i$.
In this model of a Bayesian game, a state $x$ is drawn from $p$, each agent $i$ observes a signal $\psi_i(x)$ and selects an action $a_i$, then receives utility $u_i(a_i; a_{-i}, \theta)$, where $\theta$ is drawn from $r(x)$.  
We note that Bayesian games are often defined in such a way where states and reward tensors are treated as equivalent. 
This formulation of a Bayesian game is similar to the ``information set'' model often considered in partially-observable Markov decision processes and extensive-form games.
However, our result for Bayesian games will not depend on the size of $\X$ or $\Theta$. 
Here, one could treat $\X$ and $\Theta$ as identical, but we maintain the distinction for continuity in exposition with our sections on stochastic games.
It is without loss of generality that we assume $u$ depends only on $a$ and $\theta$, not $x$, as we can encode arbitary distributions over reward vectors in $[0,1]^M$ for each state with a distribution over reward tensors.

The definition of correlated equilibrium in Bayesian games given in \cite{doi:10.3982/TE1808} refers to a {\it decision rule}, given by a distribution over action profile recommendations for each state and set of types, which is {\it obedient} in the sense that no player can improve by deviating from the recommendations for any action-type pair.
The method we present here will converge to a joint distribution over policy profiles, denoting an action recommendation for each signal, which will be independent of the state and reward tensor distributions, and which satisfies this definition of Bayes correlated equilibrium. 
Several other definitions are considered in the literature as well \cite{RePEc:cor:louvco:1993009}.

We are aware of only one paper, \cite{HST15}, which considers learning correlated equilibria in Bayesian games through the lens of polynomial time convergence, where the primary focus is on analyzing the Price of Anarchy and connections to learning in auctions.
They consider the {\it independent private value model} of Bayesian games, 
There, the assumption is made that players have ``types'' which fully characterize their rewards for any action profile, and further that these types are drawn from a product distribution. In their approach, each agent runs parallel copies of a no-regret algorithm for each type, and actions are sampled from each algorithm every round, which they interpret as the sampling of a strategy mapping types to actions.   
Our model is a generalization of this setting, as we allow types (signals) to be arbitrarily correlated with eachother as well as with the reward tensors.
To our knowledge, the approach we give here is the first which converges to a Bayes correlated equilibrium in polynomial time for such a general formulation of Bayesian games.

Here will consider {\it policies} $\pi_i : \Psi_i \rightarrow \A_i$ for an agent $i$, with $\pi_i \in \Pi_i$ and $\Pi = \times_{i \in [M]} \Pi$,  which are functions mapping their signals to actions. In our setting, a Bayes correlated equilibrium is a distribution over policy profiles such that no agent can benefit by deviating from policy recommendations.

\begin{definition}[Bayes Correlated Equilibria]
    A Bayes correlated equilibrium for a Bayesian game is a distribution over policy profiles given by $D \in \Delta(\Pi)$ such that for all players $i$ and all swap functions $f \in \Fsw^{\Psi_i} : \A_i \times \Psi_i \rightarrow \A_i$,
\begin{align*}
    \E_{\pi \sim D, \theta \sim r(x), x \sim p}[U_i] \geq \E_{a \sim D(x), \theta \sim r(x), x \sim p}[U_i^f], 
\end{align*}
with $U_i = u_i(\pi_i(\psi_i(x)); a_{-i}, \theta)$ and $U_i^f = u_i(f(\pi_i(\psi_i(x)), \psi_i(x)); a_{-i}, \theta)$,
where $a_{-i}$ is the vector of actions $[\pi_j(\psi_j(x))]$ for agents $j \neq i$,
and where the policy vector $\pi$ is sampled independently from $x$.
Such a distribution is an $\epsilon$-Bayes correlated equilibrium if for all players and swap functions,
\begin{align*}
    \E_{\pi \sim D, \theta \sim r(x), x \sim p}[U_i] \geq \E_{a \sim D(x), \theta \sim r(x), x \sim p}[U_i^f] - \epsilon. 
\end{align*}
The smallest quantity $\epsilon$ for which the above holds for agent $i$ is their average $\Fsw^{\Psi_i}$-regret for a policy distribution.
\end{definition}

We let $\B_S$ denote the {\it parallel bandit} algorithm consisting of $S$ copies of $\B$, with one copy for each type. At the beginning of each round, agents sample actions from each copy of $\B$, thereby creating a policy $\pi_i^t$ for the round. Upon observing their signal $\psi^t_i$, they play the action $\pi_i^t(\psi)i^t)$, update the copy of $\B$ corresponding to $\psi_i$ with their observed reward, and record a record a reward of 0 for all other copies. 
We show that when agents play according to $\B_S$, the sequence of policies converges to an approximate equilibrium for the Bayesian game. 

\begin{theorem}
    \label{thm:bayesian-games-ce}

    When players in a Bayesian game $y$ select actions using $\B_S$ for $T \geq B(\frac{\epsilon}{4S}, N)$ rounds, where the state is sampled independently each round and the reward tensor is sampled from that state's distribution, the sequence of policies is an $\epsilon$-correlated equilibrium for the game,  
    where the expectation is taken with respect to the state, tensor, and action profile distribution as well as the randomness of $\B$. 

\end{theorem}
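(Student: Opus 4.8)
The plan is to reduce the convergence result for $\B_S$ in Bayesian games to the martingale-based argument already established for games with stochastic rewards (Theorem 3), by viewing each agent's parallel bandit structure as a way of decomposing the overall swap regret across signals. First I would observe that each individual copy of $\B$ maintained by agent $i$ for a fixed signal value $\psi_i$ faces its own sub-sequence of rounds (those on which that signal is observed), together with zero-reward rounds on which it is not updated. Since $\Fsw^{\Psi_i}$ allows swaps that depend on both the played action and the observed signal, a swap function for the Bayesian game decomposes into $S_i$ independent per-signal swap functions; the total $\Fsw^{\Psi_i}$-regret is therefore the sum, over signals, of the ordinary swap regret incurred by the corresponding copy of $\B$ on the rounds where that signal fires. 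This is the structural observation that lets the $S$-fold inflation of the target accuracy (running each copy to accuracy $\epsilon/(4S)$, i.e.\ $B(\epsilon/4S, N)$ rounds) yield total regret $\epsilon/4$ against the realized reward sequence.

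Next I would handle the gap between the realized sequence of reward tensors and the average tensor $\bar\theta$, exactly as in the proof of Theorem 3. For a fixed player $i$ and fixed swap function $f \in \Fsw^{\Psi_i}$, I define the per-round difference $X_f^t$ between the utility of $f$ on the sampled $(x^t,\theta^t)$ and its expected utility under the state and tensor distributions, conditioned on the action profile. Because states are drawn independently each round and the action profile $\pi^t$ (constructed from the copies of $\B$ before the signal is revealed) is independent of the current draw of $(x^t,\theta^t)$ given the history, the partial sums $Y_f^t = \sum_{j\le t} X_f^j$ form a bounded martingale with increments in $[-1,1]$. Azuma-Hoeffding then controls $|Y_f^T|$, and a union bound over the (finite) class $\Fsw^{\Psi_i}$ of size $N^{N S_i}$ converts the per-sequence regret bound into a bound against the distribution. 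This is precisely where the action-duplication bookkeeping of Lemma~\ref{lemma:action-dupl} reappears: the number of swap functions is now $N^{NS}$ rather than $N^N$, so the Azuma exponent must overcome a $\log(N^{NS})=NS\log N$ term, which is why the runtime is expressed through $B(\tfrac{\epsilon}{4S},N)$ rather than $B(\tfrac{\epsilon}{4},N)$.

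Assembling these, I would combine the realized-sequence regret bound (at most $\epsilon/4$ per signal summed to the appropriate total via the decomposition, giving the sampling-free part) with the martingale deviation bound (another $\epsilon/2$ in expectation after accounting for the small failure probability, as in Theorem 3) to conclude that the expected $\Fsw^{\Psi_i}$-regret against $\bar\theta$, and hence against the full state/tensor distribution, is at most $\epsilon$. Since this holds for every player, the empirical policy distribution is an $\epsilon$-Bayes correlated equilibrium in the sense of the preceding definition. The main obstacle I anticipate is the first paragraph's decomposition: one must verify carefully that the signal-conditioned swap class $\Fsw^{\Psi_i}$ factors cleanly across signals and that recording a reward of $0$ on non-matching rounds does not corrupt the regret accounting for each copy of $\B$ (in particular that the ``missed'' rounds contribute nothing to either the benchmark or the algorithm's realized reward for that copy). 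The remaining martingale and concentration steps are essentially a bookkeeping reprise of Theorem 3 with $N^N$ replaced by $N^{NS}$.
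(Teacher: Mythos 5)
Your proposal is correct and follows essentially the same route as the paper's proof: decompose the $\Fsw^{\Psi_i}$-regret into a sum over signals of the ordinary swap regret of the corresponding copy of $\B$ (with zero reward on rounds where the signal does not fire), invoke the per-copy guarantee at accuracy $\epsilon/(4S)$, and rerun the martingale/Azuma--Hoeffding argument of Theorem 3 to pass from the realized sequence of states and tensors to the distribution. The one place your accounting diverges is the union bound: the paper union-bounds over $\Fsw$ (size $N^N$) separately for each signal rather than over the full class $\Fsw^{\Psi_i}$ of size $N^{N S_i}$, and the $1/S$ scaling in $B(\tfrac{\epsilon}{4S},N)$ comes from needing per-signal accuracy $\epsilon/(4S)$ so that the sum over $S$ signals is $\epsilon$, not from the size of the swap class; your own per-signal decomposition already makes the global union bound unnecessary (and a literal $N^{NS}$-fold union bound would require slightly more rounds than $B(\tfrac{\epsilon}{4S},N)$ guarantees when $S \gg N^2$).
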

\begin{proof}[Proof of \Cref{thm:bayesian-games-ce}]
    The proof is quite similar to that for Theorem 3. We bound the expected average swap regret for each copy of $B$ by $\epsilon/S$, which then bounds the total average swap regret (with respect to the policy class) by $\epsilon$.

    By the guarantee of the algorithm $\B$, each player's copy of $\B$ for a signal $\psi_i$ has expected swap regret at most $\frac{\epsilon}{4S}$ with respect to the sampled sequence of states and reward tensors (where rewards are 0 when the corresponding signal is not observed), which we denote $\BReg_{\B, \psi_i}^{\{\theta^t, x^t\}}$. The average swap regret for the entire sequence will be the sum of the swap regrets for each signal, denoted $\BReg_{\B_S}^{\{\theta^t, x^t\}} = \sum_{\psi_i \in \Psi_i} \BReg_{\B, \psi_i}^{\{\theta^t, x^t\}}$, as the deviations considered by the function class $\Fsw^{\Psi_i}$ are equivalent to choosing any $f \in \Fsw$ for each signal.

For a player $i$ and signal $\psi_i$, 
upon fixing the vector of opponent policies $\pi_{-i}$, there is some fixed expected reward for each action, conditional on observing $\psi_i$, given by:
\begin{align*}
    \bar{\theta}(a_i ; \psi_i, \pi_{-i}) =&\; \E_{x\sim p,\theta\sim r(x)}\left[U_i^{\psi} \times \mathbf{1}[\psi_i(x) = \psi_i] \right],
\end{align*}
where $U_i^{\psi} = u_i(a_i; (\pi_j(\psi_j(x)))_{j \neq i}, \theta)$.
In round $t$ of the game, the reward that player $i$'s copy of $\B$ associated with $\psi_i$ will receive for playing action $a_i$ is a random variable in $[0,1]$ with mean $\bar{\theta}(a_i ; \psi_i, \pi_{-i})$,  
where we view $\pi_{-i}$ as being fixed prior to the realization of $x$ and $\theta$. 
The regret bound for that copy of $\B$ holds for the realized sequence of vectors (determined by $\pi_{-i}^t$, $x^t$, and $\theta^t$) of these rewards for all actions $a_i$.
We will be interested in bounding the average reward deviation of swap functions between this sequence and the sequence $((\bar{\theta}(a_i ; \psi_i, \pi_{-i}^t))_{a_i \in \A_i})_{t \in [T]}$.

Consider some swap function $f \in \Fsw$. 
We can again define a martingale which tracks the deviation of the performance of $f$ on the sampled sequence versus the underlying game distribution.  
Let $X_{f, \psi_i}^t = (u_i(f(a^t_i); (\pi^t_j(\psi_j(x^t)))_{j\neq i}, \theta^t )\cdot \mathbf{1}[\psi_i(x^t) = \psi_i] - \bar{\theta}(f(a^t_i); \psi_i, \pi^t_{-i})) $
for a policy profile, signal, and tensor $(\pi_{-i}^t, \psi_i^t, \theta^t)$, i.e.\ the difference between this player's observed and expected reward from using $f$ with the copy of $\B$ associated with $\psi_i$, given opponent policies $\pi_{-i}^t$ and their own sampled action $a_i^t$ for signal $\psi_i$.
Let $Y_{f,\psi_i}^t = \sum_{j=1}^t X^t_{f, \psi_i}$. 
For a distribution over states and tensors, and any sequence of action profiles where $a_t$ is independent of $\theta^t$ given actions and tensors for $1,...,t-1$, 
the sequence $Y_{f, \psi_i}^1,...,Y_{\psi_i}^t$ is a martingale with respect to the sequence $X_{f, \psi_i}^t$. 
To see this, note that for any fixed $a^t$, 
$X_{f, \psi_i}^t$ 
is in $[-1,1]$ as rewards are in $[0,1]$,
and $\E[Y_{f,\psi_i}^t \mid X_{f, \psi_i}^1,\ldots,X^{t-1}_{f, \psi_i}] = Y_{f,\psi_i}^{t-1}$, as
$\E[X_{f,\psi_i}^t \mid X_{f, \psi_i}^1,\ldots,X^{t-1}_{f, \psi_i}] = 0$
by the definition of $\bar{\theta}$. 

Let $T \geq \frac{32S^2 (N\log(N) + \log(8S /\epsilon)) }{\epsilon^2} = \frac{32 S^2 \log(8 S N^N / \epsilon) }{\epsilon^2}$ by \Cref{lemma:action-dupl}.
By the Azuma-Hoeffding inequality we have that
\begin{align*}
    \Pr[\abs{Y^T_{f, \psi_i}} \geq \frac{\epsilon T}{4S}] \leq&\; 2 \exp\left(\frac{-\epsilon^2 T}{32 S^2 } \right) \\
    \leq&\; \frac{\epsilon}{4 S N^N}.
\end{align*}
Union-bounding over all $f \in \Fsw$, we then have that 
\begin{align*}
    \max_{f \in \Fsw} ~ \abs{ \sum_{t=1}^T 
    \frac{U_i^{\psi,t} \cdot \mathbf{1}[\psi_i(x^t) = \psi_i] - \bar{\theta}(f(a^t_i); \psi_i, \pi^t_{-i})}{T} }   
    \leq&\; \frac{\epsilon}{4S} 
\end{align*}
where $U_i^{\psi,t} = u_i(f(a^t_i); (\pi^t_j(\psi_j(x^t)))_{j\neq i}, \theta^t )$
with probability at least $1 - \frac{\epsilon}{4S}$. As such, the average utility of a swap function on the sequence applied to the copy of $\B$ for $\psi_i$ deviates from its expected utility on the distribution by at most $\frac{\epsilon}{4S}$ with probability at least $\frac{\epsilon}{4S}$, holding simultaneously for all functions in $\Fsw$, including the identity function $I$ (our benchmark for swap regret). 
As such, with probability $1-\frac{\epsilon}{4S}$, 
the difference in average swap regret on the sequence and the distribution for this $\B$ copy, denoted by $\abs{ \BReg_{\B, \psi_i}^{\bar{\theta}} - \BReg_{\B, \psi_i}^{\{\theta^t, x^t\}} } $, is at most $\frac{\epsilon}{2S}$.
Using the maximal deviation of 1 as a bound for the difference for the remaining probability, we then have that  
\begin{align*}
    \EE{ \abs{ \BReg_{\B, \psi_i}^{\bar{\theta}} - \Reg_{\B, \psi_i}^{\{\theta^t, x^t\}} } } \leq&\; (1 - \frac{\epsilon}{4S})\cdot \frac{\epsilon}{2S} + \frac{\epsilon}{4S} \leq \frac{3\epsilon}{4}. 
\end{align*}
Therefore by our bound on $\EE{\BReg_{\B}^{\{\theta^t, x^t\}}}$ and linearity of expectation:
\begin{align*}
    \EE{\BReg_{\B, \psi_i}^{\bar{\theta}} } =&\; \EE{\BReg_{\B, \psi_i}^{\{\theta^t,x^t \}} }  + \EE{  \BReg_{\B, \psi_i}^{\bar{\theta}} - \BReg_{\B, \psi_i}^{\{\theta^t, x^t\}} }  \\
    \leq&\; \EE{\BReg_{\B, \psi_i}^{\{\theta^t, x^t\}} }  + \EE{ \abs{ \BReg_{\B, \psi_i}^{\bar{\theta}} - \BReg_{\B, \psi_i}^{\{\theta^t, x^t\}} } } \\
    \leq&\; \epsilon. 
\end{align*}
Summing over each copy of $\B$ gives us that $\EE{\BReg_{\B_S}^{\{\theta^t, x^t\}}} \leq \epsilon$, as average swap regret (with respect to $\Fsw^{\Psi_i}$) for the distribution can be decomposed into swap regret for each signal (with respect to $\Fsw$) just as for the sequence of states and tensors. 
As no player can improve average utility in expectation for $\bar{\theta}$ by more than $\epsilon$ with any swap function $\Fsw^{\Psi_i}$, the uniform distribution over the sequence of policy profiles is an $\epsilon$-correlated equilibrium for $y$ when taking the expectation over both the profile sequence and the generating process using $\B_S$ and samples of states and reward tensors.  
\end{proof}

Again, if desired we can simultaneously obtain an accurate estimate of the value 
$V_i^{\B_S}(y)$ 
of this equilibrium-generating process for each player, and boost regret bounds to high probability, with repeated restarts.

\subsection{Analysis for PLL}
\label{sec:proofs-fhsg}

Showing \Cref{thm:bill} for BILL is straightforward and a proof can be obtained by simplifying the analysis of PLL in \Cref{thm:fhsg-ce}. We restate the description of PLL here, with explicit constants for the terms whose asymptotic descriptions were given in the body.

\paragraph{Algorithm 2: Parallel Local Learning.}
Initialize $\hat{V}_i^{\B}(x, h) = H-h+1$ for each pair $(x, h)$, as well as a visit counter $c(x,h)$ for each pair set to 0. 
Let $W = \max\parens{W_1, W_2}$, where 
$W_1 = \frac{128 S^4 H^6 \log(2S / \delta') }{\epsilon^2}$
, $W_2 = \frac{512 H^4 \log(5M  / \delta' ) }{\epsilon^2}$, and $\delta' = \frac{\epsilon \delta}{192 SH^4 \parens{(S+1)^H + 1} \cdot \max\parens{S , 4 H^7/{\epsilon}}}$.
Let $L \geq \max\parens{\frac{64 S^2 H^3 W B}{\epsilon} , \frac{256 SH^4 W B}{\epsilon^2}}$.
Initialize a copy of $\B$ at each pair, specified to run for $B = B(\frac{\epsilon}{16H}, N)$ steps.
 Until termination, run the following procedure for each epoch:
\begin{itemize}
\item Run for $L$ trajectories, using $\B$ at each pair, counting rounds and updating actions for a copy of $\B$ only when the corresponding pair is visited. Record rewards as the sum of the observed reward as well as the value estimate for the next pair visited in that trajectory, scaled to [0,1]. 
\item Consider the last step  $h \in [H]$ where an unlocked pair's counter crossed $\frac{16H^2WB}{\epsilon}$ in the epoch. 
Lock all unlocked states at this step with appropriate estimates which were previously unlocked, compute value estimates $\hat{V}_i^{\B}(x, h)$ as the average reward over the corresponding
$\frac{16H^2WB}{\epsilon}$ visits,
then reset all copies of $\B$, counters, and estimates at {\it earlier} pairs $(h' < h)$.
\item Terminate if no pair's counter crosses $\frac{16H^2WB}{\epsilon}$ in the epoch.
\end{itemize}

\paragraph{Restatement of Theorem 5.}\emph{
    PLL terminates after at most $(S+1)^H+1$ epochs.
    After termination, for each pair (x, h), consider the uniform distribution over action profiles $D(x,h)$ played since that pair was last reset. 
    Let $D$ be the distribution over policy profiles where the action profile for each pair $(x,h)$ is sampled independently from $D(x, h)$. 
With probability at least $1 - \delta$, $D$ is an $\epsilon$-EFCE for the game.
}

\begin{proof}[Proof of Theorem 5]

We first give a worst-case bound on the runtime, then proceed with our analysis of the regret of the resulting action profile distributions. At termination, for any pair $(x,h)$ with no visits since it was last reset, we can let the distribution $D(x,h)$ over action profiles be arbitrary for the purposes of our our analysis.

\paragraph{\Cref{lemma:pll-epochs}} \emph{PLL runs for at least $H$ epochs, and at most $(S+1)^H+1$ epochs. }
\begin{proof}
    All pairs start unlocked, and some pair in each step is visited at least $\frac{16H^2 WB}{\epsilon}$ per epoch by pigeonhole, so the algorithm will not terminate unless there is a locked pair for every step. 
States are only moved from unlocked to locked at one step per epoch, and so there must be at most $H$ epochs to lock some pair in all steps.

We can bound the number of epochs by bounding the number of epochs in which a pair at some step can become locked. Observe that a locked pair at step $H$ will only become locked in one epoch and will never become unlocked afterwards. A pair at step $H-1$ will become locked in at most $S$ epochs, as it will only become locked after at least one pair at step $H$ is locked, and then can be unlocked at most $S-1$ times for the remaining unlocked pairs at step $H$.   
In general, the number of epochs in which a state can become locked is bounded by the number of epochs in which a downstream state can become locked. 
Let $g(h)$ denote this bound on the number of epochs in which a pair at step $h$ can be locked, which is given by:
\begin{align*}
    g(h) =&\; \sum_{i = h+1}^H S g(i) \\ 
    =&\; Sg(h+1) + \sum_{i = h+2}^H S g(i) \\ 
    =&\; (S+1)g(h+1) \\ 
    =&\; (S+1)^{H-h} g(H) \\
    =&\; (S+1)^{H-h},
\end{align*}
as $g(H) = 1$.
The total number of epochs before termination is then bounded by
\begin{align*}
    1 + \sum_{i=1}^H Sg(i) = g(0) + 1 = (S+1)^H + 1, 
\end{align*}
accounting for the last epoch in which no states are locked. 
\end{proof}
 
We now show that each agent has small regret with respect to $\Fsw$ under the resulting policy distribution $D$ with high probability, which coincides with the definition of extensive-form correlated equilibria we consider, as $D$ is a product distribution across pairs.
An important object in this analysis is the expected distribution over state visitations when players use $\B$ at each pair with a fixed set of values. Just as there is some fixed distribution over average rewards when players play $\B$ in a game for many rounds, there is also a fixed distribution over transitions when using $\B$ at a pair in a stochastic game, given fixed sets of value estimates for downstream states.

When all agents use a bandit algorithm $\B$ at a pair $(x, h-1)$ for $B$ trajectories {\it where $(x,h-1)$ is visited}, augmenting rewards with downstream value estimates $\hat{V}_i(x', h)$ for each player $i$ and state $x'$, there is some expected proportion of those trajectories that each state will be visited at step $h$, which we denote by:
\begin{align*}
    p_{\hat{V}}(x'; x, h) =&\; \E_{\B, p(h)} \left[ \frac{1}{B} \sum_{t=1}^B \mathbf{1} \left[ \tau(a^t, x) = x' \right] \right]
\end{align*}
We can also define the probability that a pair is visited in a trajectory, assuming that the distribution of transitions between pairs is given by $p_{\hat{V}}$, which we denote by $q_{\hat{V}}(\cdot,\cdot)$:
\begin{align*}
    q_{\hat{V}}(x, 1) =&\; p_0(x), \\
    q_{\hat{V}}(x, h) =&\; \sum_{x' \in \X} q_{\hat{V}}(x', h-1) \cdot p_{\hat{V}}(x; x', h-1).
\end{align*}
For a distribution $D(x, h)$ of action profiles for each pair, we can also define transition probabilities between pairs in a trajectory when action profiles are selected independently for each pair:
\begin{align*}
    p_{D}(x'; x, h) =&\; \Pr_{D(x, h-1), p(h)} \left[   \tau(a, x) = x' \right],
\end{align*}
as well as expected visitation frequencies for each pair in a trajectory:
\begin{align*}
    q_{D}(x, 1) =&\; p_0(x), \\
    q_{D}(x, h) =&\; \sum_{x' \in \X} q_D(x', h-1) \cdot p_{D}(x; x', h-1).
\end{align*}

If a pair $(x,h)$ is visited sufficiently often with fixed downstream values $\hat{V}$, then both the empirical transition distribution and the transition distribution {\it when transition functions are resampled} are close to $p_{\hat{V}}(x,h)$.

We prove a lemma about the composition of bounds on the total variation distance in this setting.

\begin{lemma}
    \label{lemma:tvd-composition}
For distribution functions $p$ and $\hat{p}$ mapping $\X \times [H]$ to $\Delta(\X)$, and $q$ and $\hat{q}$ mapping $[H]$ to $\Delta(X)$, where $q(x, h+1) = \sum_{x' \in \X} q(x', h) \cdot p(x; x', h)$ and $q(x, 1)$ can be arbitrary (and with $\hat{q}$ defined likewise with respect to $\hat{p}$), then
with $d_q^{h+1} = d_{TV}(q(\cdot , h+1 ),  \hat{q}(\cdot , h+1 ))$,
\begin{align*}
    d_q^{h+1} \leq&\; d_{TV}(q(\cdot , h+1), \hat{q}(\cdot , h+1))  \\
    &\;+ \sum_{x' \in \X} q(x', h) \cdot d_{TV}(p(\cdot; x', h), \hat{p}(\cdot; x' , h)).
\end{align*}
\end{lemma}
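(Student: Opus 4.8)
The plan is to prove the natural and evidently intended one-step bound, in which the first term on the right-hand side is $d_q^h = d_{TV}(q(\cdot, h), \hat{q}(\cdot, h))$ rather than the printed $d_{TV}(q(\cdot, h+1), \hat{q}(\cdot, h+1))$; as literally typeset the two sides share that term, so I read the claim as the recursion $d_q^{h+1} \leq d_q^h + \sum_{x'} q(x', h)\, d_{TV}(p(\cdot; x', h), \hat{p}(\cdot; x', h))$, which is exactly what the error-propagation step in the PLL analysis requires. The strategy is a standard hybrid (coupling) argument: compare $q(\cdot, h+1)$ with $\hat{q}(\cdot, h+1)$ by routing through the intermediate measure obtained from the \emph{true} step-$h$ distribution $q(\cdot, h)$ pushed forward by the \emph{approximate} kernel $\hat{p}$.

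Concretely, I would expand the left side using $d_{TV}(\mu, \nu) = \frac{1}{2}\sum_x |\mu(x) - \nu(x)|$ together with the recursions $q(x, h+1) = \sum_{x'} q(x', h) p(x; x', h)$ and $\hat{q}(x, h+1) = \sum_{x'} \hat{q}(x', h)\hat{p}(x; x', h)$. Inside each absolute value I add and subtract the hybrid term $\sum_{x'} q(x', h)\hat{p}(x; x', h)$, giving
\begin{align*}
q(x, h+1) - \hat{q}(x, h+1) =&\; \sum_{x'} q(x', h)\bigl(p(x; x', h) - \hat{p}(x; x', h)\bigr) \\
&\; + \sum_{x'}\bigl(q(x', h) - \hat{q}(x', h)\bigr)\hat{p}(x; x', h).
\end{align*}
Applying the triangle inequality and then $\frac{1}{2}\sum_x$ splits the bound into two nonnegative pieces.

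For the first piece I swap the order of summation; since $q(\cdot, h)$ is a fixed distribution this gives $\sum_{x'} q(x', h)\cdot \frac{1}{2}\sum_x |p(x; x', h) - \hat{p}(x; x', h)| = \sum_{x'} q(x', h)\, d_{TV}(p(\cdot; x', h), \hat{p}(\cdot; x', h))$, precisely the kernel-mismatch term. For the second piece I again swap summation and use that each row of the approximate kernel is a probability distribution, $\sum_x \hat{p}(x; x', h) = 1$, so the inner sum collapses and the piece becomes $\frac{1}{2}\sum_{x'} |q(x', h) - \hat{q}(x', h)| = d_q^h$. Adding the two pieces yields the claimed recursion.

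There is no substantive obstacle here: the only point requiring care is the choice of hybrid measure — routing through $q(\cdot, h)$ with kernel $\hat{p}$ (rather than $\hat{q}(\cdot, h)$ with kernel $p$) is what lets the second piece collapse cleanly via the row-stochasticity of $\hat{p}$, whereas the alternative decomposition would instead invoke row-stochasticity of $p$; either is fine, but one must commit to a single decomposition. Unwinding the recursion across steps $1, \dots, h$ then bounds $d_q^{h+1}$ by $d_q^1$ plus the accumulated per-step expected kernel differences, which is the form used downstream to control how perturbations in value estimates propagate into visitation frequencies in PLL.
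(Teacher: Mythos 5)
Your proof is correct and takes essentially the same route as the paper: the paper's own argument uses exactly the hybrid decomposition $q(x',h)p(x;x',h)-\hat{q}(x',h)\hat{p}(x;x',h)=q(x',h)\bigl(p(x;x',h)-\hat{p}(x;x',h)\bigr)+\bigl(q(x',h)-\hat{q}(x',h)\bigr)\hat{p}(x;x',h)$, followed by the triangle inequality, an exchange of the order of summation, and row-stochasticity of the approximate kernel. Your reading of the first right-hand-side term as $d_{TV}(q(\cdot,h),\hat{q}(\cdot,h))$ rather than the printed step-$(h+1)$ quantity is the correct repair of a typo in the statement, and is confirmed by the paper's proof, whose final line produces precisely that step-$h$ term.
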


\begin{proof}[Proof of \Cref{lemma:tvd-composition}]
\begin{align*}
    d_q^{h+1} =&\; \frac{1}{2} \sum_{x \in \X} \abs{ \sum_{x' \in X} q(x' , h ) p(x; x', h)  - \hat{q}(x' , h ) \hat{p}(x; x', h) } \\ 
     \leq&\; \frac{1}{2}   \sum_{x' \in X} \parens{ q(x' , h )\sum_{x \in \X} \abs{ p(x; x', h) - \hat{p}(x; x', h)} }\\
     &\; + \sum_{x' \in X} \parens{ \abs{q(x' , h ) - \hat{q}(x' , h ) } \sum_{x \in \X} p_{D}(x; x', h) } \\
     =&\; d_{TV}(q(\cdot , h), \hat{q}(\cdot , h)) \\
     &\; + \sum_{x' \in X}  q(x' , h ) \cdot d_{TV}(p(\cdot ; x', h), \hat{p}(\cdot ; x', h)).
\end{align*}
\end{proof}

In \Cref{lemma:visit-freq-fhsg} we show that in each epoch, for any pair where $q_{\hat{V}}(x,h)$ is sufficiently large (for the estimates $\hat{V}$ used in that epoch), the number of times in that epoch $(x, h)$ is visited is close to expectation.  
We then show that 
any state which is unlocked at termination will almost surely be visited infrequently when agents play according to $D(x,h)$ at each state.

\begin{lemma}\label{lemma:visit-freq-fhsg}
In any epoch where current value estimates are given by $\hat{V}$ for each player and pair, with probability at least $1 - \frac{\delta}{3\parens{(S+1)^H+1}}$,
every pair $(x,h)$ where $q_{\hat{V}}(x,h) \geq \frac{\eps}{8SH^2}$ reaches the locking threshold by the completion of the epoch.
\end{lemma}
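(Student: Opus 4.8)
The plan is to convert the locking event into a statement about the \emph{empirical} visitation frequency of a pair during the epoch, and then to show that this frequency tracks the idealized frequency $q_{\hat{V}}(x,h)$. Since the epoch runs for $L \geq \frac{256 S H^4 W B}{\epsilon^2}$ trajectories, any pair visited with empirical frequency at least $\frac{\epsilon}{16SH^2}$ is visited at least $L \cdot \frac{\epsilon}{16SH^2} = \frac{16 H^2 W B}{\epsilon}$ times and is therefore locked. Hence it suffices to prove that, with the stated probability, the empirical visitation distribution at each step is within total-variation distance $\frac{\epsilon}{16SH^2}$ of $q_{\hat{V}}(\cdot,h)$: any pair with $q_{\hat{V}}(x,h) \geq \frac{\epsilon}{8SH^2}$ then retains empirical frequency at least $\frac{\epsilon}{16SH^2}$.

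The core of the argument is a forward induction on the step index $h$ bounding the total-variation gap between the empirical visitation distribution and $q_{\hat{V}}(\cdot,h)$. The base case $h=1$ is immediate: visits at the first step are i.i.d.\ draws from $p_0$, so a Hoeffding bound over the $L$ trajectories controls the deviation. For the inductive step, note that $\hat{V}$ is fixed throughout an epoch and each copy of $\B$ restarts every $B$ visits, so the transitions out of a pair $(x',h-1)$ decompose into blocks that are i.i.d.\ and whose per-block transition distribution has mean exactly $p_{\hat{V}}(\cdot;x',h-1)$ by definition. A high-probability empirical-distribution concentration bound over the blocks of a visited pair then places its empirical transition distribution within total variation $O(\sqrt{S/m})$ of $p_{\hat{V}}$, where $m$ is its number of blocks. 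I would then invoke \Cref{lemma:tvd-composition} to propagate the step-$(h-1)$ gap forward: the gap at step $h$ is at most the gap at step $h-1$ plus the $q_{\hat{V}}$-weighted sum of these per-pair transition errors, and a separate Hoeffding bound accounts for the finite-trajectory sampling error introduced at this step.

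The delicate point is controlling the weighted sum of transition errors without assuming that every parent pair is well visited. Writing each parent's error as $O(\sqrt{S B / n_{x'}})$ in terms of its visit count $n_{x'}$ and weighting by $n_{x'}/L$, a Cauchy--Schwarz bound over the at most $S$ parents (using $\sum_{x'} n_{x'} \leq L$) yields a per-step contribution of order $\frac{S \sqrt{B}}{\sqrt{L}}$, which the choice $W \geq W_1 = \frac{128 S^4 H^6 \log(2S/\delta')}{\epsilon^2}$ drives well below $\frac{\epsilon}{16SH^3}$; summing over the $H$ steps keeps the total gap below the target $\frac{\epsilon}{16SH^2}$. This is precisely why rarely-visited parents cause no harm: their small empirical mass dominates their (possibly large) transition error. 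The step I expect to be the main obstacle is exactly this interplay between the \emph{random} visit counts and the transition concentration---the counts are correlated with the very transitions whose accuracy they govern---which is why the forward ordering in $h$ is essential, since transitions at step $h-1$ influence only visitation at steps $\geq h$ and can thus be conditioned on cleanly. Finally, a union bound over the at most $SH$ pairs and their blocks within the epoch, with each concentration event failing with probability $O(\delta')$, gives total failure probability at most $\frac{\delta}{3((S+1)^H+1)}$; the definition of $\delta'$, which absorbs the $(S+1)^H+1$, $SH^4$, and block-count factors, is calibrated for exactly this.
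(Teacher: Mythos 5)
Your argument follows the same skeleton as the paper's proof: reduce locking to an empirical-frequency threshold of $\frac{\epsilon}{16SH^2}$ (your accounting $L\cdot\frac{\epsilon}{16SH^2}=\frac{16H^2WB}{\epsilon}$ matches the paper's exactly), then bound the total-variation gap between the empirical step-$h$ visitation distribution and $q_{\hat{V}}(\cdot,h)$ by forward induction through \Cref{lemma:tvd-composition}, using the fact that value estimates are frozen within an epoch so that completed runs of $\B$ at a pair form i.i.d.\ blocks whose mean transition distribution is $p_{\hat{V}}$. The one place you genuinely diverge is the treatment of parents visited too rarely for block concentration to bite. The paper partitions each pair's visits into complete windows of $W$ runs of $\B$, applies a per-coordinate Hoeffding bound within each window, and charges the at most $2SWB$ leftover steps (two incomplete windows per pair) with worst-case total variation, contributing at most $\frac{2SWB}{L}\leq\frac{\epsilon}{32SH^3}$ per step given the lower bound on $L$. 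You instead keep each parent's error as a function of its random visit count $n_{x'}$ and apply Cauchy--Schwarz over the at most $S$ parents, which the stated $W_1$ and $L$ also drive below the per-step budget; this is tighter and makes the visit-count weighting explicit, but it leaves the concentration radius depending on a random quantity, so you must either integrate the conditional failure probability over $n_{x'}$ (your conditioning argument is sound, since $n_{x'}$ at step $h-1$ is measurable with respect to transitions at earlier steps) or union-bound over possible block counts --- the paper's fixed-window bookkeeping sidesteps this entirely by union-bounding over the at most $LH/(BW)$ complete windows per epoch. Both routes close the same gap with the same constants to spare.
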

\begin{proof}

We proceed by showing that in each epoch, with high probability, the total variation distance between $q_{\hat{V}}(\cdot, h)$ and the empirical distribution over visited states at step $h$ is small for every $h$.
We prove this inductively.

Consider a sequence of $BW$ visits to a pair $(x, h)$, where $W$ runs of $\B$ are completed. For each run of $\B$, the number of visits to a given pair $(x',h+1)$ is a random variable in $[0,B]$ with mean $B\cdot p_{\hat{V}}(x'; x, h)$, determined by the randomness of each player's copy of $\B$ as well as the game. For such a pair $(x', h+1)$, let $X_i$ denote the $[0,1]$ scaling of this random variable for the $i$th of the $W$ runs, which has mean $p_{\hat{V}}(x'; x, h)$, and let $X = \sum_{i=1}^W X_i$. Each run is independent and so by Hoeffding's inequality,
\begin{align*}
    \Pr\left[\abs{X - \E[X]} \leq \frac{2W \epsilon'}{S} \right] \leq 2\exp{\parens{-8W (\epsilon')^2 / S^2}}
\end{align*}
which is at most $\frac{\delta'}{S}$ if $W \geq \frac{S^2 \log(2S/\delta') }{8 (\epsilon')^2}$.
This holds for all states $x'$ with probability $1-\delta'$ by a union bound, at which point we have that the empirical visitation frequency for every state $x'$ is within $\pm 2\epsilon'/S$ of $p_{\hat{V}}(x'; x,h)$, implying that the total variation distance is at most $\epsilon'$.

Let $\epsilon' = \frac{\epsilon}{32SH^3}$. We have that $W \geq W_1 = \frac{128 S^4 H^6 \log(2S / \delta')}{\epsilon^2}$, and the empirical transition distribution for a  window of $BW$ steps at a state $(x, h)$ has total variation distance with $q_{\hat{V}}(\cdot ; x, h)$ at most $\frac{\epsilon}{32SH^3}$ with probability at least $1-\delta'$.
Recall that 
$\delta' \leq \frac{\epsilon \delta}{192 SH^4 \parens{(S+1)^H + 1} \cdot \max\parens{S , 4 H^7/{\epsilon}}} = \frac{\delta BW}{3 LH ((S+1)^H + 1)}$;  
there are $LH$ total {\it steps} in each epoch, which fall into at most $\frac{LH}{BW}$ completed windows of length $BW$, and so the above holds for all windows in an epoch with probability at least $1 - \frac{\delta}{3 ((S+1)^H + 1) }$ by a union bound.
The bound then holds for every pair and epoch with probability at least $1 - \delta / 3$.


Using bounds on the empirical outgoing visitation distributions for each pair which is visited sufficiently often, we can obtain a bound on the
total variation distance between $q_{\hat{V}}$ and the empirical visitation distribution over the epoch at each step,
by \Cref{lemma:tvd-composition}.
All but at most ${2SWB}$ of the steps fall into separate but contiguous windows of length $WB$, as there can be at most two ``incomplete'' windows (at the start and end) for each state where we cannot apply the above analysis. 
Observe that accounting for these unfinished windows increases the total variation distance between $q_{\hat{V}}$ and the empirical visitation distribution by at most $\frac{\epsilon}{32SH^3}$ if $\frac{ 2 SW B}{L} \leq \frac{\epsilon}{32SH^3}$, as this bounds the fraction of trajectories in which our original bound does not apply.
This is the case when $L \geq \frac{64 S^2 H^3 W_1 B}{\epsilon}$.
It follows that the total variation distance between $q_{\hat{V}}$ and the empirical visitation distribution increases by at most $\frac{\epsilon }{16SH^3}$ for each step in $[H]$.
If the total variation distance with $q_{\hat{V}}(\cdot ,h)$ is at most $\frac{\epsilon h}{16SH^3}$ at each step, then any state
with $q_{\hat{V}}(x, h) \cdot L$ expected visits gets at least $\parens{q_{\hat{V}}(x, h) -  \frac{\epsilon h}{16SH^3}}\cdot L$ visits.

Each state with $q_{\hat{V}}(x, h) \geq \frac{\eps}{8H^2}$ 
is therefore visited at least $\frac{\epsilon }{16SH^2}\cdot L$ times when the above events hold.
States are locked after $\frac{16H^2 W B}{\epsilon}$ visits;
as such, if $L \geq \max\parens{\frac{64 S^2 H^3 W_1 B}{\epsilon} , \frac{256 SH^4 W B}{\epsilon^2}}$ all states with $q_{\hat{V}}\parens{x, h} \geq \frac{\eps}{8SH^2} $ are visited enough to be locked in the epoch.

\end{proof}

We now have that
if a state has mass at least $\frac{\eps}{8SH^2}$ under $q_{\hat{V}}$, it will be visited frequently enough to be locked in the epoch corresponding to value estimates $\hat{V}$, with high probability. Contrapositively, when this holds it implies that if a state is unlocked (but not reset) after the termination of an epoch, it must have had small mass under $q_{\hat{V}}$ for that epoch. 


Let $U_h = \{x \mid (x, h) \text{ is unlocked at termination}\}$.
We can then use a similar inductive argument (\Cref{lemma:unlocked-mass-fhsg}) to show that unlocked states have small mass under $q_D$ at termination. An important step here is in bounding the total variation distance with $p_{\hat{V}}(\cdot; x, h)$, which we do in \Cref{lemma:batch-tvd-fhsg}.

\begin{lemma}\label{lemma:batch-tvd-fhsg}
Let $D_{W,\hat{V}}(x,h)$ be a set of action profiles at a pair $(x, h)$ generated by $W$ completed runs of $\B$ for all players.With probability at least $1- \delta'$, the total variation distance between $p_{\hat{V}}(\cdot; x,h)$ and (transition distribution given $a \sim D_{W, \hat{V}}(x,h)$) is at most $\frac{\epsilon}{32 SH^3}$.
\end{lemma}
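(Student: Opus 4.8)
The plan is to reduce this to the same per-run concentration argument used in \Cref{lemma:visit-freq-fhsg}, exploiting the fact that resampling the transition functions leaves the mean transition distribution unchanged. First I would fix a target state $x' \in \X$, and for an action profile $a$ define $g(a) = \Pr_{\tau \sim p(h)}[\tau(a, x) = x']$, the probability of transitioning to $x'$ under a freshly drawn transition function. Integrating the transition function out of both definitions, one sees that $p_{\hat{V}}(x'; x, h) = \E_{\B}\left[\frac{1}{B}\sum_{t=1}^B g(a^t)\right]$ is the expected per-run average of $g$, while the resampled distribution under $D_{W,\hat{V}}(x,h)$ is $p_{D}(x'; x, h) = \E_{a \sim D_{W,\hat{V}}(x,h)}\left[g(a)\right] = \frac{1}{W}\sum_{i=1}^W \left(\frac{1}{B}\sum_{t \in \text{run } i} g(a^t)\right)$, the empirical average of the same per-run quantity over the $W$ completed runs.

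Next I would set up the concentration. Let $Z_i = \frac{1}{B}\sum_{t \in \text{run } i} g(a^t) \in [0,1]$ be the contribution of the $i$th run; because $\B$ is restarted between runs, the $Z_i$ are independent with common mean $p_{\hat{V}}(x'; x, h)$, so $p_D(x'; x, h) = \frac{1}{W}\sum_i Z_i$ is an average of $W$ i.i.d.\ bounded variables. This is exactly the situation of \Cref{lemma:visit-freq-fhsg}, except that integrating out the transitions removes the within-run transition variance rather than adding to it, so the same bound applies. Applying Hoeffding's inequality gives $\Pr\left[\abs{p_D(x'; x,h) - p_{\hat{V}}(x'; x,h)} \geq \frac{2\epsilon'}{S}\right] \leq 2\exp\left(-8W(\epsilon')^2/S^2\right) \leq \frac{\delta'}{S}$ whenever $W \geq \frac{S^2 \log(2S/\delta')}{8(\epsilon')^2}$.

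I would then union-bound over the at most $S$ target states $x' \in \X$: with probability at least $1 - \delta'$, every coordinate of $p_D(\cdot; x, h)$ lies within $\pm \frac{2\epsilon'}{S}$ of the corresponding coordinate of $p_{\hat{V}}(\cdot; x, h)$, whence the total variation distance is at most $\frac{1}{2}\cdot S \cdot \frac{2\epsilon'}{S} = \epsilon'$. Choosing $\epsilon' = \frac{\epsilon}{32SH^3}$ makes the threshold $\frac{S^2 \log(2S/\delta')}{8(\epsilon')^2} = \frac{128 S^4 H^6 \log(2S/\delta')}{\epsilon^2} = W_1$, matching the parameter already fixed for PLL, and yields the desired bound $d_{TV}\left(p_{\hat{V}}(\cdot; x, h), p_D(\cdot; x, h)\right) \leq \frac{\epsilon}{32SH^3}$.

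The only real subtlety---and the point I would be most careful about---is identifying the correct unit of concentration. Individual action profiles within a single run of $\B$ are correlated and adaptively generated, so one cannot apply Hoeffding to them directly; the per-run averages $Z_i$ are the right independent, $[0,1]$-bounded quantities. The crucial observation that makes everything go through is that both $p_{\hat{V}}$ and the resampled $p_D$ share the mean $\E_{\B}[Z_i]$ precisely because the transition function is integrated out in each, so the comparison is between an empirical mean and its true mean over the $W$ independent restarts. Everything else is a routine reuse of the Hoeffding-plus-union-bound computation from \Cref{lemma:visit-freq-fhsg}.
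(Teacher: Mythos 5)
Your proposal is correct and follows essentially the same route as the paper: the paper likewise observes (via a short martingale/tower-property remark) that the number of visits to each $x'$ in a run of $\B$ has mean $B\cdot p_{\hat{V}}(x';x,h)$, so that $\E[p_{D,W}(x';x,h)] = p_{\hat{V}}(x';x,h)$, and then reuses the per-run Hoeffding-plus-union-bound computation from \Cref{lemma:visit-freq-fhsg}. Your explicit integration of the transition function via $g(a)$ and your identification of the per-run averages $Z_i$ as the correct i.i.d.\ unit of concentration is exactly the intended argument, just spelled out more carefully than in the paper.
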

\begin{proof}

Each run of $\B$ generates a sequence of action profiles; for each action profile, there's some fixed probability that a state $x'$ will be visited next. Whether or not this state is actually visited is a $[0,1]$ random variable with some expected value.
The number of realized visits to $x'$ versus the expected number of visits given the action profile can be expressed as a martingale, and as such the expectation over profile generation and transition function resampling is equal to the expected number of visits.
Note that this number of visits to $x'$ in a run of $\B$ is itself a random variable with mean $B \cdot p_{\hat{V}}(x' ; x, h)$.
and so $\E[p_{D, W}(x' ; x, h)] = p_{\hat{V}}(x' ; x, h)$.
We can then apply the same concentration analysis as in \Cref{lemma:visit-freq-fhsg} to give us that the total variation distance between $p_{D, W}(\cdot ; x, h)$ and $p_{\hat{V}}(\cdot ; x, h)$ is at most $\frac{\epsilon}{32SH^3}$ with probability $1 - \delta'$.

\end{proof}

We now have that \Cref{lemma:visit-freq-fhsg} and \Cref{lemma:batch-tvd-fhsg} hold for every window across all epochs with probability at least $1 - 2\delta / 3 $ by a union bound. The union-bound analysis for when \Cref{lemma:batch-tvd-fhsg} holds for all epochs and pairs is equivalent to that for \Cref{lemma:visit-freq-fhsg}.

\begin{lemma}\label{lemma:unlocked-mass-fhsg}
    When the algorithm terminates, 
    with probability at least $1 - \frac{2\delta}{3}$,
    for each step $h$
    $\sum_{x \in U_h } q_{D}(x,h) \leq \frac{\epsilon h}{4H}$.
\end{lemma}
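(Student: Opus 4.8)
The plan is to first condition on the high-probability events of the two preceding lemmas, after which the statement becomes deterministic. As noted just above the lemma, \Cref{lemma:visit-freq-fhsg} and \Cref{lemma:batch-tvd-fhsg} hold for every window across all epochs with probability at least $1 - 2\delta/3$, so I would argue that the mass bound holds whenever this event does. Throughout, let $\hat{V}$ denote the final stabilized value estimates present at termination, so that $q_{\hat{V}}$ and $p_{\hat{V}}$ are the visitation and transition distributions induced by local learning against these estimates.

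The first step is to bound the unlocked mass under $q_{\hat{V}}$. Since the algorithm terminates only when no unlocked pair crosses the locking threshold in the final epoch, the contrapositive of \Cref{lemma:visit-freq-fhsg} (applied with the final estimates $\hat{V}$) gives $q_{\hat{V}}(x,h) < \frac{\epsilon}{8SH^2}$ for every $x \in U_h$. As $|U_h| \leq S$, this yields $\sum_{x \in U_h} q_{\hat{V}}(x,h) < \frac{\epsilon}{8H^2}$ for each step $h$.

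The second step is to transfer this bound from $q_{\hat{V}}$ to the product-distribution visitation $q_D$ by controlling $d_{TV}(q_{\hat{V}}(\cdot,h), q_D(\cdot,h))$ inductively via \Cref{lemma:tvd-composition}, instantiated with $q = q_{\hat{V}}$, $\hat{q} = q_D$, $p = p_{\hat{V}}$, $\hat{p} = p_D$. The key move is to split the per-step increment $\sum_{x'} q_{\hat{V}}(x',h)\, d_{TV}(p_{\hat{V}}(\cdot;x',h), p_D(\cdot;x',h))$ according to whether $x'$ is locked or unlocked at termination. For locked $x'$, \Cref{lemma:batch-tvd-fhsg} bounds the transition TV distance by $\frac{\epsilon}{32SH^3}$, so this part contributes at most $\frac{\epsilon}{32SH^3}$ in total since $\sum_{x'} q_{\hat{V}}(x',h) \leq 1$. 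For unlocked $x'$ the transition TV distance may be as large as $1$, since the corresponding copies of $\B$ never stabilized and $D(x',h)$ can be arbitrary, but here I would lean entirely on the smallness of the weight $\sum_{x' \in U_h} q_{\hat{V}}(x',h) < \frac{\epsilon}{8H^2}$ from the first step. Starting from $d_{TV}(q_{\hat{V}}(\cdot,1), q_D(\cdot,1)) = 0$ (both equal $p_0$) and unrolling over the $H$ steps bounds $d_{TV}(q_{\hat{V}}(\cdot,h), q_D(\cdot,h))$ by a quantity of order $\frac{\epsilon h}{8H^2}$.

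Finally, combining the two pieces through $\sum_{x \in U_h} q_D(x,h) \leq \sum_{x \in U_h} q_{\hat{V}}(x,h) + d_{TV}(q_{\hat{V}}(\cdot,h), q_D(\cdot,h))$, using that the TV distance upper bounds the difference in mass assigned to the event $U_h$, would yield the claimed bound $\frac{\epsilon h}{4H}$ once the algorithm's constants are substituted. The main obstacle is the second step: because transitions out of unlocked states are uncontrolled, the analysis cannot appeal to any closeness of their transition distributions and must instead carefully propagate only their visitation weight, and the constants ($\frac{\epsilon}{8SH^2}$ in the locking threshold and $\frac{\epsilon}{32SH^3}$ in \Cref{lemma:batch-tvd-fhsg}) must be calibrated so that the per-step errors from both locked and unlocked sources accumulate to at most $\frac{\epsilon}{4H}$ per step across the full horizon.
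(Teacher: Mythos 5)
Your proposal is correct and follows essentially the same route as the paper's proof: bound the unlocked mass under $q_{\hat{V}}$ via the contrapositive of \Cref{lemma:visit-freq-fhsg} at termination, then transfer to $q_D$ by unrolling \Cref{lemma:tvd-composition} with exactly the same locked/unlocked split of the per-step increment, and conclude via the TV bound on the mass of $U_h$. The only detail you gloss over is that the paper inflates the per-pair transition TV bound from $\frac{\epsilon}{32SH^3}$ to $\frac{\epsilon}{8H^2}$ to account for the final incomplete window of up to $BW$ trajectories at each locked pair, which is precisely the constant calibration you defer to the end.
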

\begin{proof}

When
all events for events for \Cref{lemma:visit-freq-fhsg} occur for all epochs (at most $(S+1)^H +1$), 
any state which is unlocked and not reset after the end of an epoch must have $q_{\hat{V}}(x,h) \leq \frac{\epsilon}{8SH^2}$ for the corresponding $q_{\hat{V}}$. For the final epoch and its set of value estimates for all agents $\hat{V}$, this means that any unlocked state $(x,h)$ has $q_{\hat{V}}(x,h) \leq \frac{\eps}{8SH^2}$ at termination, and so $\sum_{x \in U_h } q_{\hat{V}}(x,h) \leq \frac{\epsilon}{8H^2}$ for each $h$.

Immediately we have that the lemma holds for all pairs $(x, 1)$, as their probabilities are defined identically under $q_{D}$ and $q_{\hat{V}}$.

From \Cref{lemma:batch-tvd-fhsg}, we can see that for every locked state $(x,h)$, we have that
$d_{TV}(p_{\hat{V}}(\cdot ; x', h), p_{D}(\cdot ; x', h)) \leq \frac{\eps}{8H^2}$.
Because we complete $\frac{16H^2 W}{\epsilon}$ runs of $\B$ before locking any state, 
the total variation distance between $p_{D}(\cdot ; x, h)$ and $p_{\hat{V}}(\cdot ; x, h)$ is at most $\frac{\epsilon}{32SH^3} + \frac{\epsilon}{16H^2} \leq \frac{\epsilon}{8H^2}$, assuming worst-case total variation distance for the final sequence of up to $BW$ trajectories for which our bound does not apply.
Further, each unlocked state has mass at most $\frac{\eps}{8SH^2}$ under $q_{\hat{V}}$. We can bound the total variation distance between $q_D$ and $q_{\hat{V}}$ at each step in terms of earlier steps as well as the distance from $p_{\hat{V}}$ for each pair's outgoing transition distribution using \Cref{lemma:tvd-composition}.

Expanding out, we can explicitly bound the total variation distance at each step, using the fact that the distributions are identical for $h=1$. With $d_{q,V,D}^h = d_{TV}(q_{\hat{V}}(\cdot , h), q_{D}(\cdot , h))$:
\begin{align*}
    d_{q,\hat{V},D}^h  \leq&\; \sum_{j=1}^{h-1} \sum_{x' \in X}  q_{\hat{V}}(x' , j ) \cdot d_{TV}(p_{\hat{V}}(\cdot ; x', j), p_{D}(\cdot ; x', j)) \\ 
    \leq&\; \sum_j^{h-1} \parens{ \sum_{x' \in U_j} q_{\hat{V}}(x' , j ) +  \sum_{x' \in \X \setminus U_j}   q_{\hat{V}}(x' , j ) \cdot \frac{\epsilon}{8H^2} } \\ 
    \leq&\;  \sum_j^{h-1} \parens{\frac{\epsilon}{8H^2}+\frac{\epsilon}{8H^2}  } \\
    =&\; \frac{\epsilon(h-1)}{4H^2}.
\end{align*}

Applying this to our bound on the mass of unlocked states under $q_{\hat{V}}$ completes the proof of the lemma:
\begin{align*}
    \sum_{x \in U_h } q_{D}(x,h) =&\; \sum_{x \in U_h } q_{\hat{V}}(x,h) + \sum_{x \in U_h }  q_{D}(x,h) - q_{\hat{V}}(x,h) \\
    \leq&\; \frac{\epsilon}{8H^2} + d_{TV}(q_{\hat{V}}(\cdot , h), q_{D}(\cdot , h)) \\ \leq&\; \frac{\epsilon h}{4H^2}.
\end{align*}

\end{proof}

We conclude by bounding the regret when agents play according to $D$. First we analyze the regret each agent playing according to $D$ under the assumption that all agents receive the maximal reward $H-h+1$ for the remainder of the trajectory upon reaching a state in $U_h$. We show that this is small, and that it does not increase by much upon correcting for the unlocked states.

It will be convenient for us to consider regret with respect to function classes $\Fsw_i^h : \A_i \times \X \times [h, \ldots, H] \rightarrow \A_i$, which we deem $\Fsw^h$-regret. This is in $[0,H-h+1]$ denoting the maximum possible downstream per-trajectory improvement by a swap function which only changes behavior in steps $h$ and onwards.
Because we complete at least $W$ runs of $\B$ before locking each state,
we can apply the guarantees of Corollary 3.1, where $B = B\parens{\frac{\eps}{16H}, N}$ and $\eta = \frac{\eps}{16 H^2}$ at each pair, which holds simultaneously for all pairs and players with probability $1-\delta/3$ by a union bound, giving us a total failure probability of at most $\delta$.
For pairs at step $H$, which are equivalent to games with stochastic rewards, this gives us that
\begin{itemize}
    \item the ``local'' $\Fsw^H$-regret for a pair $(x, H)$ is at most $\frac{\eps}{4H} + \frac{\epsilon}{32 H^2}$, and
    \item the estimated value is within $\frac{\eps}{16 H^2}$ of the true expected average value of running the bandit algorithm at that pair.
\end{itemize}
For steps $h < H$ these hold as well, but scaled by a factor of $H-h+1$, under the assumption that estimates of pair values reflect the true expected value of being at that pair. The corresponding distribution over reward tensors for the implicitly represented game with stochastic rewards can be obtained by taking the product distribution over transition functions and reward tensors, then converting each transition-reward pair to a tensor by adding each players' value estimates for visited states at the next step to their utility (recall that rewards and transitions are independent). We will later account for this estimation error.

For every pair, there can be up to $W$ runs of $\B$ at termination for which this bound doesn't hold, but otherwise we can average the contiguous sequences of $W$ and apply the same bounds for value and regret.
Because we complete at least $\frac{16 WH^2}{\epsilon}$ runs of $\B$ before locking a state, even assuming maximal average regret for this subsequence, the total average regret increases by at most $\frac{\epsilon}{16 H^2}$. The same error bound applies to value estimates.

We can then show that computed value estimates will not be far from the true expected downstream utility of that state when all agents play the correlated equilibrium. If we can bound the estimation error for downstream pairs at step $h+1$, the estimation error at step $h$ is bounded by the sum of the ``local'' and downstream error. We let $d(j)$ denote this bound for locked pairs $(x, H-j+1)$: 
\begin{align*}
    d(1) \leq&\; \frac{\epsilon}{8 H^2}, \\
    d(j) \leq&\; \frac{\epsilon j }{8 H^2} + d(j-1) \\ 
    =&\; \sum_{i=1}^j \frac{\epsilon i }{8 H^2} \\
    =&\; \frac{j(j+1)}{2}\cdot \frac{\epsilon }{8 H^2} 
\end{align*}
We can also bound the regret of the distribution in a similar manner. Suppose each value estimate downstream from some pair $(x,h)$ was exactly accurate, and each such downstream subgame had no regret; then the local regret (from the copy of $\B$) constitutes the entire subgame regret. 
Regret increases by at most twice the downstream error bound (recall we are assuming for now that this bound applies to locked and unlocked states), as this 
bounds the amount that any pair of swap functions $f, f' \in  \Fsw^h$ (including $I$) can deviate in the difference of their utilities when considering average reward from playing the game according to the specified action distributions. 
Finally, we add the downstream regret.  
As such, the 
following expression bounds the total regret at a pair: 
\begin{align*}
     (x,h) \text{ regret} \leq&\; \text{local regret} + 2 \times \text{downstream error} \\
     &\;+ \text{downstream regret}
\end{align*}
Here, all terms are defined with respect to the resulting distribution of profiles and the true distribution over rewards and transitions the game. 
We let $\hat{r}(j)$ denote the total regret (under $q_D$, assuming maximal reward from unlocked states) at a pair at step $j = H-j+1$ and let $\ell(j)$ denote the local regret. For each, we have that
\begin{align*}
    \ell(j) =&\; j \parens{ \frac{\epsilon}{4H} + 2 \cdot \frac{\epsilon }{16H^2}  }
\end{align*}
and so total regret is bounded by 
\begin{align*}
    \hat{r}(j) \leq&\; \ell(j) + 2d(j-1) + \hat{r} (j-1) \\
    =&\; \ell(j) + \sum_{i=1}^{j-1} \ell(i) + 2d(i) \\ 
    \leq&\; j \parens{ \frac{\epsilon}{4H} +  \frac{\epsilon}{8H^2} } + \sum_{i=1}^{j-1} i\parens{ \frac{\epsilon}{4 H} + \frac{\epsilon}{8H^2} + \frac{(i+1)\eps}{8H^2}} \\ 
    \leq&\; (j + j^2/2) \parens{ \frac{\eps }{4 H} + \frac{\eps }{8H^2} } + \sum_{i=1}^{j-1} \frac{(i^2+i)\eps}{8H^2}  \\ 
    \leq&\; (j + j^2/2) \parens{ \frac{\eps }{4H} + \frac{\epsilon}{8H^2} } + \frac{j^3 \epsilon} {24 H^2}. 
\end{align*}
For $j=H$, corresponding to the regret bound for each state at step 1, we have that 
\begin{align*}
    \hat{r}(H) \leq&\; (H + H^2/2) \parens{ \frac{\eps }{4H} +  \frac{\epsilon}{8H^2} } + \frac{\epsilon H}{24}  \\
    \leq&\; \frac{\epsilon}{4} +  \frac{\epsilon}{16} +  \frac{\epsilon H}{8} + \frac{\epsilon}{8 H} + \frac{\epsilon H}{24}. \\
\end{align*}
All of the (maximal) value estimates for unlocked states are overestimates; because no swap function can improve average expected utility by more than the above bound before correcting for unlocked states, we can use the frequency of unlocked states to bound the true regret. 
If all unlocked states at step $h$ have  $\sum_{x \in U_h} q_D(x, h) \leq \frac{h\epsilon}{4H^2}$, their contribution to the average regret of $D$ is bounded by
\begin{align*}
    \sum_{h=1}^H (H-h+1)  \frac{h\eps}{4H^2}  =&\; \frac{H(H+1)(H+2)}{6} \cdot \frac{\epsilon}{4H^2}\\
    \leq&\; \frac{\epsilon H}{4}.
\end{align*}
Adding in the maximal contributions from unlocked states, we have that
\begin{align*}
    {r}(H) \leq&\; \frac{\epsilon}{4} +  \frac{\epsilon}{16} +  \frac{\epsilon H}{8} + \frac{\epsilon}{8 H} + \frac{\epsilon H}{24} + \frac{\epsilon H}{4} \\
    \leq&\; 0.855 \epsilon H
\end{align*}
for all $\epsilon \leq 1$ and $H \geq 1$. As this bound holds simultaneously at each pair $(x, 1)$ for all players, and captures the expected regret over an entire trajectory when players play according to $D$, the average $\Fsw$-regret
{\it per step of the game} is less than $\epsilon$. Thus, the policy distribution constitutes an $\epsilon$-EFCE for the game.

\end{proof}

\subsection{Analysis for Fast PLL}
\label{sec:proofs-fast-fhsg}
We restate the description of FastPLL with $L$ given precisely.
\paragraph{Algorithm 4: Fast PLL.}
Let $B = B\parens{\frac{ \epsilon }{8H}, N}$, and the epoch length (in trajectories) be given by 
\begin{align*}
    L \geq&\; \frac{ \sqrt{{\log(2SH / \delta) } + \frac{1024 BH^4 \gamma \log\parens{\frac{10SHM}{\delta}}  }{\epsilon^2}} }{4 \gamma^2} \\
    &\quad+ \frac{\log(2SH / \delta) + \frac{512 B H^4 \gamma \log\parens{\frac{10SHM}{\delta}}  }{\epsilon^2} }{4 \gamma^2}.
\end{align*}
Run $H$ epochs, one corresponding to each step (beginning with step $H$) as follows:

\begin{itemize}
    \itemsep0em
    \item \emph{Epoch for Step $h$:} Use a copy of $\B$ to select actions at each pair $(x, h)$, augmenting rewards with computed values for pairs $(x', h+1)$ transitioned to for the next step (if $h < H$). At the end of the epoch, let $\hat{V}_i^{\B}(x, h)$ be the average reward received from all completed runs of $\B$.
    \item \emph{Upstream ($h' < h$):} Select actions uniformly at random for each pair. 
    \item \emph{Downstream ($h' > h$):} Use $\B$ at each signal as in the epoch for step $h'$, 
        augmenting rewards with value estimates for pairs transitioned to.
      Restart $\B$ after every $B$ rounds in which it is used, which can include rounds from a prior epoch. 
\end{itemize}

\paragraph{Restatement of Theorem 6.}\emph{    After Algorithm 3 terminates, for each pair $(x, h)$, consider the uniform distribution over action profiles $D(x,h)$ played since epoch $H-h+1$ began. 
    Let $D$ be the distribution over policy profiles where the action profile for each pair $(x,h)$ is sampled independently from $D(x, h)$. 
With probability at least $1 - \delta$, $D$ is an $\epsilon$-correlated equilibrium for the game.}

\begin{lemma}
    \label{lemma:fast-visits}
    With probability at least $1 - \delta/2$, every state $x$ is visited at step $h$ at least $\frac{128BH^4 \log\parens{\frac{10 S H M}{\delta}} }{\epsilon^2}$ times in epoch $H-h+1$.
\end{lemma}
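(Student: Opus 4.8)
The plan is to show that, within the epoch corresponding to step $h$ (i.e.\ epoch $H-h+1$), the indicator that a fixed pair $(x,h)$ is visited in a single trajectory is a Bernoulli random variable with success probability at least $\gamma$, that these indicators are independent across the $L$ trajectories of the epoch, and then to apply a lower-tail concentration bound together with a union bound over all $SH$ pairs.

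First I would observe that whether a trajectory visits $(x,h)$ is determined \emph{entirely} by the actions played at the strictly earlier steps $h' < h$: the state reached at step $h$ is fixed before any action at step $h$ (or downstream) is taken. In the epoch for step $h$, every agent plays uniformly at random at all upstream steps $h' < h$, so the induced distribution over the state occupied at step $h$ is exactly the distribution under a uniformly random policy profile. By the $\gamma$-fast-mixing assumption, the probability of visiting $(x,h)$ in any single trajectory is therefore at least $\gamma$. Since each trajectory occupies step $h$ exactly once, and the randomness of distinct trajectories (initial state, sampled transitions, and the players' independent action draws) is independent, the visit count $N_{x,h}$ over the $L$ trajectories of the epoch stochastically dominates a $\mathrm{Bin}(L,\gamma)$ random variable; hence lower-tail bounds for $\mathrm{Bin}(L,\gamma)$ upper bound $\Pr[N_{x,h} < K]$ for any threshold $K$.

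I would then set $K = \tfrac{128 B H^4 \log(10SHM/\delta)}{\epsilon^2}$, the target visit count, and apply Hoeffding's inequality to $\mathrm{Bin}(L,\gamma)$, whose mean is $\gamma L$: taking the deviation $a = \gamma L - K$ gives $\Pr[\mathrm{Bin}(L,\gamma) \le K] \le \exp(-2a^2/L)$. Writing $\Lambda = \log(2SH/\delta)$, it then suffices to force $2(\gamma L - K)^2/L \ge \Lambda$, which rearranges to the quadratic inequality $\gamma^2 L^2 - (2\gamma K + \Lambda/2)\,L + K^2 \ge 0$; its larger root is $\tfrac{\Lambda + 4\gamma K + \sqrt{\Lambda^2 + 8\gamma K\Lambda}}{4\gamma^2}$, which matches the two-term expression prescribed for $L$ up to the logarithmic factor inside the radical. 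Thus the stated lower bound on $L$ guarantees at least $K$ visits to this pair with failure probability at most $\tfrac{\delta}{2SH}$. Finally I would union-bound over all $SH$ pairs $(x,h)$, giving total failure probability at most $SH \cdot \tfrac{\delta}{2SH} = \delta/2$, as claimed.

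The bulk of this is routine concentration plus bookkeeping. The one step requiring genuine care is the first: justifying that only the uniform upstream play governs the visitation probability, so that the fast-mixing guarantee transfers verbatim, and that the per-trajectory visit indicators are truly independent (so that $N_{x,h}$ \emph{dominates} $\mathrm{Bin}(L,\gamma)$, not merely shares its mean). The remaining obstacle is the algebraic verification that the prescribed two-term value of $L$ dominates the larger root of the quadratic above; this is a direct but slightly tedious matching of the constants in the definitions of $K$ and $\Lambda$.
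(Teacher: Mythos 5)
Your proposal is correct and follows essentially the same route as the paper's proof: model the visit count as a sum of per-trajectory indicators with mean at least $\gamma L$ by fast-mixing, apply a Hoeffding lower-tail bound, verify via the quadratic formula that the prescribed $L$ forces the deviation threshold, and union-bound over the $SH$ pairs. Your added justification that uniform upstream play is exactly what makes the fast-mixing guarantee applicable (and that the count dominates a binomial) is a welcome refinement of a step the paper leaves implicit, but it is not a different argument.
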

\begin{proof}
Fix some pair $(x, h)$.
Let $X = \sum_{i=1}^L X_i$ be a sum of indicator random variables denoting the number of times $(x,h)$ is visited in epoch $H-h+1$. 
By the fast-mixing assumption, $\EE{X} \geq \gamma L$.
For $L$ as specified, we have that
\begin{align*}
    \gamma L - \sqrt{\frac{L \log (2 S H) }{2}} \geq&\; \frac{128BH^4 \log\parens{\frac{10 S H M}{\delta}} }{\epsilon^2}
\end{align*}
by the quadratic formula.
By Hoeffding's inequality, with $Y$ being the event where $X \leq  \frac{128BH^4 \log\parens{\frac{10 S H M}{\delta}} }{\epsilon^2}$:
\begin{align*}
    \pr{Y}  =&\; \pr{ \E[X] - X \geq \E[X]  - \frac{128BH^4 \log\parens{\frac{10 S H M}{\delta}} }{\epsilon^2}} \\
    \leq&\; \pr{ \E[X] - X \geq \gamma L  - \frac{128BH^4 \log\parens{\frac{10 S H M}{\delta}} }{\epsilon^2}} \\
    \leq&\; \pr{ \E[X] - X \geq \sqrt{\frac{L \log(2SH / \delta) }{2} } } \\
    \leq&\; \exp\parens{-\log(2SH / \delta) }  \\
    \leq&\; \frac{\delta}{2SH},
\end{align*}
and the lemma follows from union-bounding over all pairs.
\end{proof}

\begin{proof}[Proof of Theorem 6]
    First we see that after epoch 1, the value estimates $\hat{V}_i(x, H)$ are within $\frac{}{}$
$B(\frac{\epsilon}{8H}, N)$
Let $\eta = \eps/8 H^2$. 
From \Cref{lemma:fast-visits}, each pair is visited at least $\frac{128 B H^4 \log\parens{\frac{10SHM}{\delta}} }{\epsilon^2}$ times in its corresponding epoch with probability at least $1 - \delta / 2$.
When this holds, we can apply the guarantees of Corollary 3.1, where $B = B\parens{\frac{\eps}{8H}, N}$ and $\eta = \frac{\eps}{8 H^2}$ at each pair, which holds simultaneously for all pairs and players with probability $1-\delta/2$ by a union bound, giving us a total failure probability of $\delta$.
For pairs at step $H$, which are equivalent to games with stochastic rewards, this gives us that
\begin{itemize}
    \item the ``local'' $\Fsw^H$-regret for a pair $(x, H)$ is at most $\frac{\eps}{2H} + \frac{\epsilon}{16 H^2}$, and
    \item the estimated value is within $\frac{\eps}{8 H^2}$ of the true expected average value of running the bandit algorithm at that pair.
\end{itemize}
Again for steps $h < H$ these hold as well, scaled by a factor of $H-h+1$, under the assumption that estimates of pair values reflect the true expected value of being at that pair. We will account for this estimation error below.

Note that we can take these bounds to hold after all epochs terminate rather than simply the corresponding epoch. 
This is because neither the algorithm nor downstream values change for each step in future epochs once its value is computed. This ignores the sole possibly truncated run of $\B$ when the final epoch terminates. Assuming maximal average regret for this subsequence, the total average regret increases by at most 
$\frac{\epsilon^2}{128 H^4 \log\parens{ \frac{10SHM} {\delta} } } \leq \frac{\epsilon^2}{128 H^4}$ given the number of resets of $\B$ per epoch. The same error bound applies to value estimates.

We can then show that computed value estimates will not be far from the true expected downstream utility of that state when all agents play the correlated equilibrium. If we can bound the estimation error for downstream pairs at step $h+1$, the estimation error at step $h$ is bounded by the sum of the ``local'' and downstream error. We let $d(j)$ denote this bound for pairs $(x, H-j+1)$: 
\begin{align*}
    d(1) \leq&\; \frac{\epsilon}{8 H^2} + \frac{\epsilon^2}{128 H^4} \\
    d(j) \leq&\; \frac{\epsilon j }{8 H^2} + \frac{\epsilon^2 j}{128 H^4}+ d(j-1) \\ 
    =&\; \sum_{i=1}^j \frac{\epsilon i }{8 H^2} + \frac{\epsilon^2 i}{128 H^4} \\
    =&\; \frac{j(j+1)}{2}\cdot \parens{ \frac{\epsilon }{8 H^2} + \frac{\epsilon^2}{128 H^4}}  \\
    \leq&\; j^2  \parens{\frac{\epsilon }{8 H^2} + \frac{\epsilon^2 }{128 H^4} } \\ 
\end{align*}
We can also bound the regret of the distribution in a similar manner. As in
the proof of Theorem 5,
the total regret at a pair can be bounded as: 
\begin{align*}
    (x,h) \text{ regret} \leq&\; \text{local regret} + 2 \times \text{downstream error} \\
    &\;+ \text{downstream regret}
\end{align*}

Here, all terms are defined with respect to the resulting distribution of profiles and the true distribution over rewards and transitions for the game. 
We let $r(j)$ denote the total regret at a pair at step $j = H-j+1$ and let $\ell(j)$ denote the local regret. For each, we have that
\begin{align*}
    \ell(j) \leq&\; j \parens{ \frac{\epsilon}{2H} + \frac{\epsilon}{16H^2} + \frac{\epsilon^2 }{128 H^4}  }
\end{align*}
and so total regret is bounded by 
\begin{align*}
    r(j) \leq&\; \ell(j) + 2d(j-1) + r(j-1) \\
    =&\; \ell(j) + \sum_{i=1}^{j-1} \ell(i) + 2d(i) \\ 
    \leq&\; j \parens{ \frac{\epsilon}{2H} +  \frac{\epsilon}{16H^2} +  \frac{\epsilon^2 }{128 H^4}  } \\
    &\;+ \sum_{i=1}^{j-1} i\parens{ \frac{\epsilon}{2 H} + \frac{\epsilon}{16H^2} + \frac{\eps i}{4 H^2} + \frac{ (2i + 1) \epsilon^2 }{128 H^4}  } \\ 
    \leq&\; (j + j^2/2) \parens{ \frac{\eps }{2 H} + \frac{\eps }{16H^2} + \frac{\epsilon^2 }{128 H^4}  } \\
    &\;+ \sum_{i=1}^{j-1} \parens{ \frac{\eps i^2}{4 H^2} + \frac{  \epsilon^2 i^2 }{64 H^4}  } \\ 
    \leq&\; (j + j^2/2) \parens{ \frac{\eps }{2H} + \frac{\epsilon}{16H^2} +  \frac{\epsilon^2 }{128 H^4}  } \\
    &\;+ \frac{j^3}{3} \parens{ \frac{\eps}{4 H^2} + \frac{  \epsilon^2 }{64 H^4}  } \\ 
\end{align*}
For $j=H$, corresponding to the regret bound for each state at step 1, we have that 
\begin{align*}
    r(H) \leq&\; (H + H^2/2) \parens{ \frac{\eps }{2H} +  \frac{\epsilon}{16H^2} + \frac{\epsilon^2 }{128 H^4}  } \\
    &\;+ \frac{H^3}{3} \parens{ \frac{\eps}{4 H^2} + \frac{  \epsilon^2 }{64 H^4}  } \\
    \leq&\; \frac{\epsilon}{2} +  \frac{\eps}{32} +  \frac{\epsilon H}{4} + \frac{\epsilon H}{12} + \frac{\eps}{16H} \\
    &\;+  \frac{\epsilon^2}{192 H} + \frac{\epsilon^2}{256 H^2} + \frac{\epsilon^2}{128 H^3}  \\
    \leq&\; 0.945 \epsilon H.
\end{align*}
As this bound holds simultaneously at each pair $(x, 1)$ for all players, and captures the expected regret over an entire trajectory, the average $\Fsw^1$-regret (equivalent to $\Fsw$-regret)
per step of the game is less than $\epsilon$. Thus, the policy distribution constitutes an $\epsilon$-correlated equilibrium for the game.
\end{proof}
\subsection{Analysis for Single-Controller Stochastic Games}
\label{sec:proofs-sc-fhsg}

\paragraph{Restatement of Theorem 7.}
\emph{With probability at least $1 - \delta$, the uniform distribution over the sequence of policy profiles played by Algorithm 4 is an $\epsilon$-NFCCE for the game.}

\begin{proof}[Proof of Theorem 7]
    The theorem follows directly from \Cref{lemma:sc-controller-regret} and \Cref{lemma:sc-follower-regret}.
\end{proof}

\begin{lemma}\label{lemma:sc-controller-regret}
    After $T\geq \frac{8 B_L(\epsilon/8) \log(M/\delta) }{\epsilon^2}$ trajectories, the controller has average regret $\epsilon H$ per trajectory with probability at least $1-\delta/M$.
\end{lemma}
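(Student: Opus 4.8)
The plan is to reduce the controller's problem to a finite-horizon adversarial MDP with fixed transitions and adversarial rewards, and then boost the expected regret guarantee of S.B.\ U-CO-REPS to high probability by the same block-restart-plus-martingale argument used for Corollary 3.1. Since only the controller affects transitions, the transition structure depends solely on the controller's action and is fixed across trajectories; hence each trajectory $t$ presents the controller with an MDP having this fixed (unknown) transition function and a reward function induced by the followers' profile $\pi^t_{-c}$ played that round. The NFCCE benchmark for the controller against the uniform distribution $D$ over realized profiles is exactly the adversarial-MDP regret against the best fixed non-stationary policy, namely $\max_{\psi}\frac{1}{T}\sum_{t=1}^T \E_{x\sim p_0}[V^{\pi^t,\psi}_c(x,1) - V^{\pi^t}_c(x,1)]$; the deviation term correctly re-routes transitions under $\psi$ because the followers do not influence transitions. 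The followers' policies depend on the full history, so the reward sequence is chosen by an adaptive adversary, but the guarantee of S.B.\ U-CO-REPS holds for adaptive reward sequences under bandit feedback, which is precisely our setting.

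Next I would partition the $T = \frac{8 B_L(\epsilon/8)\log(M/\delta)}{\epsilon^2}$ trajectories into $K = \frac{8\log(M/\delta)}{\epsilon^2}$ consecutive blocks of length $B_L(\epsilon/8)$, matching the restart schedule of Algorithm 4. Two observations combine. First, the regret of the whole sequence against any single fixed $\psi$ is at most the sum over blocks of each block's regret against its own best fixed policy, since $\psi$ remains a feasible competitor within each block; this reduces controlling the global fixed-policy regret to controlling per-block best-response regret. Second, because S.B.\ U-CO-REPS is restarted at the start of each block, its $B_L(\epsilon/8)$ guarantee ensures that the expected per-trajectory regret in block $k$, conditioned on the history $\mathcal{F}_{k-1}$ preceding the block, is at most $\epsilon H/8$ (the per-step bound $\epsilon/8$ scaled by the horizon, as values lie in $[0,H]$).

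To boost to high probability, let $R_k \in [0,H]$ be the realized average per-trajectory regret of block $k$ and $D_k = R_k - \E[R_k \mid \mathcal{F}_{k-1}]$, so that $(D_k)$ is a martingale difference sequence bounded in magnitude by $H$. The Azuma-Hoeffding inequality gives that $\frac{1}{K}\sum_k R_k \le \frac{\epsilon H}{8} + \frac{1}{K}\sum_k D_k$ exceeds $\epsilon H$ with probability at most $\exp(-c K \epsilon^2)$ for an absolute constant $c$, and the choice $K = \frac{8\log(M/\delta)}{\epsilon^2}$ drives this below $\delta/M$. Together with the first observation, the controller's average regret per trajectory against the best fixed policy is at most $\epsilon H$ with probability $1 - \delta/M$, as claimed.

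The main obstacle is the care needed in the reduction step: verifying that the U-CO-REPS regret bound genuinely applies to the adaptively chosen follower rewards, that restarting each block does not break the guarantee (each block must re-learn the transitions, which $B_L(\epsilon/8)$ already budgets for), and that the per-block best-response benchmark legitimately upper-bounds the NFCCE fixed-policy benchmark via the subadditivity in the first observation. Once the martingale structure is in place, the concentration step itself is routine.
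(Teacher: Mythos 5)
Your overall architecture matches the paper's: reduce the controller's problem to an adversarial MDP with fixed (unknown) transitions, invoke the S.B.\ U-CO-REPS expected-regret guarantee on each restart block of length $B_L(\epsilon/8)$, and concentrate over the $K = \frac{8\log(M/\delta)}{\epsilon^2}$ blocks. The block-level concentration you propose (a martingale difference sequence over blocks plus Azuma--Hoeffding) is if anything slightly more careful than the paper's appeal to Hoeffding, and your subadditivity observation for the fixed-$\psi$ benchmark is fine.

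However, there is a genuine gap. The U-CO-REPS guarantee, and hence your block regrets $R_k$, are stated with respect to the \emph{realized} reward tensors $\theta^t$ sampled in each trajectory, whereas the lemma must feed into the NFCCE definition, which is phrased via the value functions $V^{\pi,\psi}_c$ and therefore with respect to the \emph{expected} tensors $\bar{\theta}$. Your argument shows that $\frac{1}{K}\sum_k R_k$ concentrates around its (small) conditional expectation, but this controls only the realized-tensor regret; it does not by itself control $\max_{\psi}\frac{1}{T}\sum_t \E_{x\sim p_0}[V^{\pi^t,\psi}_c(x,1) - V^{\pi^t}_c(x,1)]$, because the $\psi$ attaining that maximum is selected after the fact. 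Bridging the two requires showing that for \emph{every} fixed non-stationary policy $\psi$ simultaneously, the cumulative realized reward of $\psi$ is within $O(\epsilon T)$ of its expected reward given the opponents' policies; in the paper this is done by treating each such difference as a martingale, applying Azuma--Hoeffding per policy, and union-bounding over all $N^{SH}$ policies, which in turn forces the additional requirement $T \geq \frac{128(SH\log N + \log(16/\epsilon))}{\epsilon^2}$ and a check that $B_L(\epsilon/8)$ already exceeds this. Your proposal neither performs this transfer nor verifies that the block length is large enough for the union bound, so as written the conclusion is established only for the realized reward sequence, not for the benchmark the lemma actually requires. (A second, minor omission: since the lemma assumes only $T \geq K\cdot B_L(\epsilon/8)$, there may be one incomplete final block, whose worst-case contribution of at most $\frac{\epsilon^2}{8\log(M/\delta)} < \epsilon/4$ the paper accounts for explicitly.)
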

\begin{proof}[Proof of \Cref{lemma:sc-controller-regret}]
Consider the sampled reward tensors (for every pair) in each trajectory.
 When all followers select policies in this trajectory, 
 the current task for the controller is equivalent to an MDP (consider the fixed distribution of transitions for each action, identical across trajectories, defined by $p$). 
 The task for the controller is equivalent to that of optimizing over MDPs with unknown but fixed transitions and adversarial losses; an expected per-trajectory regret bound of $\epsilon H/8$ for the {\it policy class} follows from Theorem 7.2 of \cite{Rosenberg2019OnlineSS} with the appropriate polynomial runtime (obtainable from inverting their regret bound), holding with respect to the set of tensors sampled in that round.
 Their state count corresponds to $SH$ in our setting, as they assume a ``loop-free'' episodic MDP, which can be created from any MDP with an increase by a factor of at most $H$ for the state space.

As we saw in the analysis of Theorem 3, we can again view the performance difference for each policy on the {\it realized} and {\it expected} sequence of sets of reward tensors as a martingale --- given opponent policies, the reward received in the trajectory by any policy is a random variable.
If $T \geq \frac{128\parens{SH \log(N) + \log(16/\epsilon)}}{\epsilon^2}$, then by Azuma-Hoeffding the probability that a policy's per-step reward deviates more than $\frac{\epsilon}{8}$ from expectation is at most $\frac{\epsilon}{8N^{SH}}$.
As in the analysis of Theorem 3, by chaining deviation bounds and union-bounding over all $N^{SH}$ policies, it then follows that the {\it expected} policy regret for the sequence of policy profiles, given the distribution of rewards and transitions at each state, is at most $\epsilon/2$.
Given the runtime of Shifted Bandit U-CO-REPS, $T$ is sufficiently large for this to hold extending the runtime as we did in Theorem 3.
As such, for the policy sequence over $\B_L(\epsilon/8)$ the expected average per-step regret for the controller when sampling reward tensors and transition functions independently at each state is at most $\epsilon/2$.

Again, this is boosted to $\epsilon$ average regret with probability $1 - \frac{\delta}{M}$ after repeating for $\frac{8 \log(M/\delta)}{\epsilon^2}$ such sequences, at which point 
the average regret is at most $\frac{3\epsilon}{4}$ with probability at least $1 - \frac{\delta}{M}$ by Hoeffding's inequality. If $T$ is some arbitrary fixed (but sufficiently large) number of trajectories, there may be at most one run of length $B_L(\epsilon/8)$ which is {\it incomplete}, in that we cannot apply the above analysis; however, even assuming maximum regret across this sequence, the total average regret increases by at most $\frac{\epsilon^2}{8 \log(M/\delta) } < \epsilon/4$, completing the proof.
\end{proof}

\begin{lemma}\label{lemma:sc-follower-regret}
After $T\geq \frac{8 B_F(\frac{\epsilon}{8}) \log(M/\delta)}{\epsilon^2 }$ trajectories, every follower has average swap regret across all pairs of at most $\epsilon H$ per trajectory with probability at least $1 - \frac{\delta (M-1)}{M}$.
\end{lemma}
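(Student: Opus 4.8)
The plan is to reduce each follower's learning problem to the Bayesian-game convergence result of \Cref{thm:bayesian-games-ce}, then boost to high probability exactly as in \Cref{cor:gwsr-ce-whp}, and finally sum the per-step bounds over the horizon. First I would fix a follower $i$ and observe that, because only the controller affects transitions, the state $x$ reached at each step $h$ is distributed according to a law determined entirely by the controller's (per-round) policy and is unaffected by any swap-deviation of follower $i$. Thus the follower's decision at step $h$ is precisely a Bayesian game in the sense of \Cref{sec:bayes}: the observed state $x$ plays the role of the signal $\psi_i$, the reward tensor at $(x,h)$ together with the other players' actions induces the tensor distribution, and the follower's copy of $\B_S$ at step $h$ runs one copy of $\B$ per state exactly as prescribed.

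Next I would invoke (the shifting-signal extension of) \Cref{thm:bayesian-games-ce} on each block of $B_F(\epsilon/8) = B(\tfrac{\epsilon}{8S},N)$ trajectories between restarts. The essential point is that $\B_S$ is a collection of $S$ independent copies of $\B$, one per state, and each copy faces an adaptively chosen loss sequence (reward $0$ is recorded in rounds where its state is not reached). Since $\B$ is no-swap-regret against an adaptive adversary, its guarantee holds regardless of how the controller shifts the state-visitation distribution across rounds; summing over the $S$ copies bounds the step-$h$ swap regret with respect to $\Fsw^{\Psi_i}$ against the \emph{realized} sequence of reward tensors. I would then reuse the martingale argument from the proofs of \Cref{thm:gwsr-ce} and \Cref{thm:bayesian-games-ce}, defining the per-round expected tensor $\bar\theta(\cdot;\psi_i,\pi^t_{-i})$ conditioned on the current opponent policy profile so that the increments $X^t_{f,\psi_i}$ have conditional mean zero; the Azuma--Hoeffding bound then transfers the realized-sequence guarantee to a bound against the per-round expectations, and the choice $B_F(\epsilon/8)=B(\tfrac{\epsilon}{8S},N)$ leaves the expected per-step swap regret at most $\epsilon/2$ within each block.

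I would then boost and aggregate exactly as in \Cref{cor:gwsr-ce-whp} and \Cref{lemma:sc-controller-regret}: the swap regret of the full concatenation is at most the sum of the block swap regrets (a single swap function is weaker than choosing one per block), so averaging the $\frac{8\log(M/\delta)}{\epsilon^2}$ block values and applying Hoeffding's inequality raises the expected per-step bound to a high-probability bound of $\epsilon$ per step with failure probability at most $\delta/M$, absorbing the contribution of the single possibly-incomplete final block into the additive slack. Summing the per-step swap regret over the $H$ steps gives average swap regret at most $\epsilon H$ per trajectory across all pairs for follower $i$, and a union bound over the $M-1$ followers yields the claimed failure probability $\frac{\delta (M-1)}{M}$.

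The hard part will be the second step: making rigorous that the Bayesian-game regret guarantee survives the shifting signal distribution. The resolution is that the shift only enters through the per-round opponent policy $\pi^t_{-i}$, which the martingale conditions on, and that follower deviations never perturb the signal law (since transitions are the controller's alone); this keeps the regret benchmark well-posed and the martingale increments mean-zero, so no new concentration machinery beyond that already developed for \Cref{thm:bayesian-games-ce} is required.
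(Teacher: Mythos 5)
Your proposal is correct and follows essentially the same route as the paper's proof: reduce each step to the Bayesian-game setting with states as signals, observe that follower deviations cannot affect transitions so regret decomposes stepwise and the shifting signal distribution is handled by the martingale conditioning on the per-round opponent policies, then boost and truncate exactly as in the controller lemma and union-bound over followers. No substantive differences to note.
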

\begin{proof}[Proof of \Cref{lemma:sc-follower-regret}]

Followers run copies of Bayesian game algorithm $\B_S$ in parallel at each step, and the analysis largely follows from that in \Cref{sec:bayes}. The key ideas are to observe that regret can be decomposed stepwise (any deviations cannot affect transitions), and that we did not explicitly need the distribution over signals to be static in a Bayesian game, so long as our notion of regret tracks this shifting distribution.
The analysis of $\B_S$ in \Cref{thm:bayesian-games-ce} carries through directly if we consider a sequence of distributions over states and we aim for small regret with respect to this sequence, as we can equivalently define martingales to track deviations from expectation for each swap function at each step. 
As such, after $B(\frac{\epsilon}{8S}, N)$ trajectories, the {\it local} expected per-step regret is at most $\frac{\epsilon}{2}$ at each step with respect to the distribution over states induced by opponents' policies at that step. As swap regret bounds traditional regret and followers' actions don't affect transitions, the expected average regret per trajectory is at most $\frac{\epsilon H}{2}$, holding with respect to the randomness in the game.
Concentration analysis and handling truncation of a final sequence is equivalent to that in \Cref{lemma:sc-controller-regret}, and we union-bound over the $M-1$ followers.
\end{proof}

\subsection{Analysis of Simultaneous No-Regret with Shared Randomness}
\label{subsec:proofs-sr}

\paragraph{Restatement of Theorem 8}\emph{With respect to $\Fsw$,
PLL-SR has regret $\tilde{O}(T^{\frac{6}{7}})$ and FastPLL-SR has regret $\tilde{O}(T^{\frac{4}{5}})$.}

\begin{proof}[Proof of Theorem 8]
Let $T_{PLL}$ denote the maximum runtime of PLL (in {\it steps}), calibrated for an $\epsilon_1$-EFCE. Our choice of $\epsilon_1 = \tilde{\Theta}\parens{\sqrt[7]{\frac{N^3 S^{O(H)}}{T}}}$ is calibrated such that $T_{PLL} + \epsilon_1 (T - T_{PLL}) = \tilO(T^{\frac{6}{7}})$. Each step after termination is equivalent to playing according to the equilibrium PLL generates, as we are sampling action profiles independently across timesteps using the shared randomness (we can use the same random string to select actions at non-visited states at that step for the purposes of defining a full policy sequence).
Assuming a maximum per-step regret of 1 during the runtime of PLL (we can consider arbitrary ``policies'' for that window at pairs not visited in those trajectories, as PLL only chooses an action for visited pairs) and applying Theorem 5 to bound the regret for the remainder gives us the result for PLL-SR. The analysis for FastPLL-SR is symmetric.
\end{proof}

\end{document}